\newif\ifabstract
\newif\iffull
\newcommand{\myparskip}{3pt}
\newcommand{\point}[1]{\node[circle,inner sep = 0pt,minimum size =0pt]}
\newcommand{\node}[1]{\node[circle,inner sep = 0pt,minimum size =3pt,fill = #1]}
\definecolor{dgreen}{rgb}{0,0.6,0}
\definecolor{dred}{rgb}{0.6,0,0}
\definecolor{dblue}{rgb}{0,0,0.6}
\definecolor{lgray}{rgb}{0.8,0.8,0.8}
\definecolor{lblue}{rgb}{0.6,0.6,1}
\definecolor{lgreen}{rgb}{0.6,1,0.6}
\definecolor{lred}{rgb}{1,0.6,0.6}
\newcommand{\vertex}[1]{\node[circle,inner sep = 0pt,minimum size =2pt,fill = #1]}
\newcommand{\Vertex}[1]{\node[circle,inner sep = 0pt,minimum size =3pt,fill = #1]}
\newcommand{\nellipse}[1]{\node[ellipse,minimum width=1cm,minimum height=0.55cm,draw=black,fill=#1]}
\newlength{\legendelength}
\newlength{\legendemaxlength}
\newcommand{\legende}[1]{
\settowidth{\legendelength}{\small\emph{#1}}
\ifthenelse{\lengthtest{\legendelength<\legendemaxlength}}
{\caption{\small \emph{#1}}}
{\caption{\small \protect\parbox[t]{0.8\textwidth}{\emph{#1}}}}
}
\theoremstyle{plain}
\newtheorem{theorem}{Theorem}[section]
\newtheorem{lemma}[theorem]{Lemma}
\theoremstyle{remark}
\newtheorem{remark}{Remark}
\newtheorem{prop}[theorem]{Proposition}
\newtheorem{conj}{Conjecture}
\newcommand{\MC}{\gamma}
\newcommand{\C}{M}
\newcommand{\eqdef}{:=}
\newcommand{\tree}{\mathcal{T}}
\newcommand{\opt}{\textrm{OPT}}
\newcommand{\edp}{{\sc EDP}\xspace}
\newcommand{\ndp}{{\sc NDP}\xspace}
\newcommand{\medp}{{\sc Max EDP}\xspace}
\newcommand{\val}[1]{\left|#1\right|}
\newcommand{\polylog}{\text{polylog}}
\newcommand{\etal}{et al.\ }
\newcommand{\cA}{\mathcal{A}}
\newcommand{\cG}{\mathcal{G}}
\newcommand{\cT}{\mathcal{T}}
\newcommand{\bU}{{\bar{U}}}
\newcommand{\bx}{\bar{x}}
\begin{document}
\title{Maximum Edge-Disjoint Paths in $k$-Sums of Graphs}

\iffull
\author{
Chandra Chekuri\thanks{Dept.\ of Computer Science, University
of Illinois, Urbana, IL 61801. {\tt chekuri@illinois.edu}.
Work on this paper is partly supported
by NSF grant CCF-1016684.}
\and
Guyslain Naves\thanks{Laboratoire d'Informatique Fondamentale,
Faculté des sciences de Luminy, Marseille, France.
{\tt guyslain.naves@lif.univ-mrs.fr}.
}
\and
F. Bruce Shepherd\thanks{Dept. Mathematics and Statistics, McGill University, Montreal. {\tt bruce.shepherd@mcgill.ca}. Work is partly supported by NSERC Discovery/Accelerator Grants.}
}
\date{\today}
\fi

\ifabstract
\author{Chandra Chekuri \and Guyslain Naves \and Bruce Shepherd}

\institute{
Dept.\ of Computer Science, University of Illinois, Urbana, IL 61801,
USA. \\ \email{chekuri@illinois.edu}. 
\and
Laboratoire d'informatique fondamentale, Faculté des sciences de Luminy,
Marseille, France\\
\email{guyslain.naves@lif.univ-mrs.fr}
\and
Dept. Mathematics and Statistics, McGill University, Montreal, Canada \\
\email{bshepherd@math.mcgill.ca}
}
\fi

\maketitle

\begin{abstract}
  We consider the approximability of the maximum edge-disjoint paths
  problem (MEDP) in undirected graphs, and in particular, the
  integrality gap of the natural multicommodity flow based relaxation
  for it. The integrality gap is known to be $\Omega(\sqrt{n})$ even
  for planar graphs \cite{GargVY97} due to a simple topological
  obstruction and a major focus, following earlier work
  \cite{KleinbergT98}, has been understanding the gap if some constant
  congestion is allowed. In planar graphs the integrality gap is
  $O(1)$ with congestion $2$ \cite{SeguinS11,CKS-planar-constant}.  In
  general graphs, recent work has shown the gap to be
  $\polylog(n)$ \cite{Chuzhoy12,ChuzhoyL12} with congestion $2$.
  Moreover, the gap is $\log^{\Omega(c)} n$ in general graphs with
  congestion $c$ for any constant $c \ge 1$ \cite{andrews2010inapproximability}.

  It is natural to ask for which classes of graphs does a
  constant-factor constant-congestion property hold. It is easy to
  deduce that for given constant bounds on the approximation and
  congestion, the class of ``nice'' graphs is minor-closed. Is the
  converse true? Does every proper minor-closed family of graphs
  exhibit a constant factor, constant congestion bound relative to the
  LP relaxation? We conjecture that the answer is yes.
  One stumbling block has been  that such
  bounds were not known for bounded treewidth graphs (or even treewidth $3$). In this paper we
  give a polytime algorithm which takes a fractional routing solution
  in a graph of bounded treewidth and is able to integrally route a
  constant fraction of the LP solution's value. Note that we do not incur any edge congestion. Previously this was
  not known even for series parallel graphs which have treewidth $2$.
  The algorithm is based on a more general argument that applies to
  $k$-sums of graphs in some graph family, as long as the graph family
  has a constant factor, constant congestion bound. We then use this to show
  that such bounds hold for the class of $k$-sums of bounded genus
  graphs.
\end{abstract}

\section{Introduction}
The disjoint paths problem is the following: given an undirected graph
$G=(V,E)$ and node pairs $H = \{s_1t_1,\ldots,s_pt_p\}$, are there
disjoint paths connecting the given pairs? We use \ndp and \edp to
refer to the version in which the paths are required to be node-disjoint or edge-disjoint. Disjoint path problems are cornerstone
problems in combinatorial optimization. The seminal work on graph minors
of Robertson and Seymour \cite{robertson1985graph} gives a polynomial
time algorithm for \ndp (and hence also for \edp) when $p$ is fixed;
the algorithmic and structural tools developed for this have led to
many other fundamental results.  In contrast to the undirected case,
the problem in directed graphs is NP-Complete for $p=2$
\cite{FortuneHW1980}.  Further, \ndp and \edp are NP-Complete
in undirected graphs when $p$ is part of the input.
The maximization versions of \edp and
\ndp have also attracted intense interest, especially in connection to its
approximability.  In the {\em maximum edge-disjoint path} (\medp)
problem we are given an undirected (in this paper) graph $G=(V,E)$,
and node pairs $H=\{s_1t_1,\ldots ,s_pt_p\}$ , called
\emph{commodities} or {\em demands}. \medp \ asks for a maximum size
subset $I \subseteq \{1,2, \ldots ,p\}$ of commodities which is {\em routable}. A
set $I$ is routable if there is a family of edge-disjoint paths
$(P_i)_{i \in I}$ where $P_i$ has extremities $s_i$ and $t_i$ for each
$i \in I$. In a more general setting, the edges have integer capacities
$c: E(G) \to \mathbb{N}$, and instead of edge-disjoint paths, we ask
that for each edge $e \in E(G)$, at most $c(e)$ paths of $(P_i)_{i \in
  I}$ contain $e$.
For any demand $h = st
\in H$, denote by $\mathcal{P}_h$ the set of $st$-paths in $G$, and
$\mathcal{P} = \bigcup_{h \in H} \mathcal{P}_h$.
A natural linear
programming relaxation of \medp{} is then:

\begin{equation}
\begin{array}{lll}
\max & \sum_{h \in H} z_h  &\;\textrm{subject to}         \\
     & \sum_{P \in \mathcal{P}_h} x_P = z_h \leq 1        &\qquad\textrm{(for all $h \in H$)} \\
     & \sum_{P \in \mathcal{P}, e \in P} x_P \leq c_e &\qquad\textrm{(for all $e \in E$)} \\
     & x \geq 0
\end{array}
\label{eqn:LP}
\end{equation}

NP-Completeness of \edp implies that \medp is NP-Hard. In fact, \medp is
NP-Hard in capacitated trees for which \edp is trivially solvable. This indicates that \medp inherits
hardness also from the selection
of the subset of demands to route.  As pointed out in \cite{GargVY97}, a grid
example shows that the integrality gap of the multicommodity flow
relaxation may be as large as $\Omega(\sqrt{n})$ even in planar
graphs.  However, the grid example is not robust in the sense that if
we allow edge-congestion $2$ (or equivalently, if we assume all
capacities are initially at least $2$), then the example only has a
constant factor gap. This observation led Kleinberg-Tardos
\cite{KleinbergT98} to seek better approximations
(polylog or constant factor) for planar graphs in the regime where
some low congestion is allowed. With some work, this agenda proved
fruitful: a constant-approximation with edge congestion $4$ was proved
possible in planar graphs \cite{CKS-planar-constant}; this was improved to (an
optimal) edge congestion $2$ in \cite{SeguinS11}.

In general graphs,   Chuzhoy \cite{Chuzhoy12} recently obtained the first
poly-logarithmic approximation with constant congestion ($14$).
This was subsequently improved to the optimal congestion of $2$ by
Chuzhoy and Li \cite{ChuzhoyL12}.  It is also known that, in general
graphs, the integrality gap of the flow LP is
$\Omega(\log^{\Omega(1/c)} n)$ even if congestion $c$ is allowed; the
known hardness of approximation results for \medp with congestion have
similar bounds as the integrality gap bounds, see
\cite{andrews2010inapproximability}.

For any constants $\alpha,\beta \geq 1$, one may ask for which graphs does
the LP for \medp admit an integrality gap of $\alpha$ if edge
congestion $\beta$ is allowed.  It is natural to require this for any
possible collection of demands and any possible assignment of edge
capacities. For fixed constants, it is easy to see that the class of
such graphs is closed under minors.
Is the converse true? That is, do all minor-closed graphs exhibit a {\em
  constant factor constant-congestion} (CFCC) integrality gap for
\medp? In fact we consider the following stronger conjecture with congestion
$2$.

\begin{conj}
  \label{conj:minor-free}
  Let ${\cal G}$ be any proper minor-closed family of graphs. Then the
  integrality gap of the flow LP for \medp is at most a constant $c_{\cal G}$
  when congestion $2$ is allowed.
\end{conj}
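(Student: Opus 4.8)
The plan is to attack Conjecture~\ref{conj:minor-free} through the Robertson--Seymour structure theorem, reducing it to a handful of ``atomic'' cases and then reassembling the pieces with a composition lemma for the clique-sum (equivalently, $k$-sum) operation. Recall that the structure theorem says every graph $G$ in a proper minor-closed family $\cG$ can be built, via clique-sums of bounded order $k = k(\cG)$, from graphs each obtained from a graph embedded in a bounded-genus surface by (i) attaching a bounded number of \emph{vortices} of bounded width along a bounded number of faces, and (ii) adding a bounded number of \emph{apex} vertices. The conjecture thus splits into four sub-problems: (a) a CFCC bound for bounded-genus graphs; (b) closure of the CFCC property under $k$-sums; (c) closure under attaching bounded-width vortices; and (d) closure under adding $O(1)$ apex vertices.

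Steps (a) and (b) are exactly what this paper supplies: the $k$-sum composition lemma is the main engine, and it is shown to apply to bounded-genus graphs (with bounded-treewidth graphs, i.e.\ $k$-sums of edges, as a warm-up). The composition lemma would take an optimal fractional solution $x$ on $G = G_1 \oplus_k G_2$, cut every flow path at the separator of size $\le k$, create ``virtual'' demands inside each $G_i$ to re-stitch the severed pieces, route a constant fraction inside each $G_i$ using that family's assumed CFCC guarantee, and glue the routed sub-paths back across the bounded separator. Since many virtual demands may pile up on a single separator vertex, one loses a factor depending on $k$ in the value and possibly a small additive amount in congestion; to keep these losses from compounding along the (tree-shaped) decomposition one routes top-down and charges losses against the LP value additively rather than multiplicatively down the tree.

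For (c), a vortex is a path-like attachment of bounded pathwidth glued along a face boundary, so it behaves locally like a bounded-treewidth graph wrapped around a disk; one should be able to handle it with the same cut-and-reroute idea used for $k$-sums, treating the face boundary as a long but locally bounded separator and invoking the bounded-treewidth result of this paper on the vortex interior. The genuinely hard case is (d). A single apex can be adjacent to the whole rest of the graph, so it is not a bounded separator and the cut-and-reroute scheme collapses; moreover routing many demands through a shared high-degree vertex is precisely the mechanism behind the $\Omega(\sqrt{n})$ topological gap, so congestion at the apex must stay at $2$. The plausible route is to split the fractional solution into the flow that avoids the $O(1)$ apices --- which lives in a bounded-genus-plus-vortices graph and is handled by (a)--(c) --- and the flow that uses apices; for the latter, deleting the apices destroys only $O(1)$ paths per commodity after an uncrossing/Menger argument, and one then re-inserts a constant fraction of the apex-using demands through the apices at congestion $2$ by a direct combinatorial argument exploiting that there are only $O(1)$ of them.

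Accordingly the main obstacle I expect is step (d): routing through apex vertices at congestion~$2$, and making it compose cleanly with the $k$-sum lemma when apices and vortices sit inside the same summand. A secondary issue is that the generic $k$-sum composition naturally yields some constant congestion larger than $2$; reaching the sharp bound in the conjecture may require a more careful gluing at separators, or first proving the constant-congestion version and then sharpening the congestion separately, perhaps by leaning on the congestion-$2$ planar results of \cite{SeguinS11,CKS-planar-constant} at the embedded core of each summand.
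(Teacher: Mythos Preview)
This statement is a \emph{conjecture}; the paper does not prove it and explicitly lists it as the main open problem. Your proposal is therefore a plan of attack rather than a proof, and the relevant question is whether the plan correctly identifies the obstacles. It does not: you have the difficulty of steps (c) and (d) reversed.

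Apex vertices are \emph{easy}, and the paper says so explicitly (Section~\ref{sec:ksums-defn}). With $O(1)$ apices, split the fractional flow into paths that touch an apex and paths that avoid all apices. The apex-avoiding flow lives in the apex-free graph and is handled by the other ingredients. The apex-using flow all passes through a fixed set of $O(1)$ vertices, so Proposition~\ref{prop:oldsinglenode} (or its corollary Proposition~\ref{prop:withreroute}) routes a constant fraction of it integrally with congestion~$1$. There is no ``$\Omega(\sqrt{n})$ mechanism'' here: that lower bound comes from planar topology, not from a single high-degree vertex, and in fact routing through a single vertex is the one setting where the LP gap is $O(1)$ unconditionally.

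Vortices, by contrast, are the genuine bottleneck, and your sketch for (c) does not work. A vortex of bounded width is glued to the embedded graph along an entire face boundary, which may have \emph{unbounded} length. The interface between the vortex and the surface graph is therefore not a bounded-size separator, so the $k$-sum lemma (which crucially charges losses against a separator of size $\le k$) does not apply. Treating the boundary as ``a long but locally bounded separator'' is precisely the difficulty: the bounded-treewidth argument in this paper exploits a global bound on separator size, not a local one, and there is no known way to cut and recurse along an unbounded interface while keeping the total loss constant. The paper singles out ``planar graph with a single vortex'' as the key unresolved case; your plan would need a new idea here, not a reuse of the $k$-sum machinery.
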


The preceding conjecture is inherently a geometric question, but one
would also anticipate a polytime algorithm for producing the routable
sets which establish the gap. In attempting to prove
Conjecture~\ref{conj:minor-free}, one must delve into the
structure of minor-closed families of graphs, and in particular the
characterization given by Robertson and Seymour \cite{robertson1985graph}. Two
minor-closed families that form the building blocks for this
characterization are (i) graphs embedded on surfaces of bounded genus (in
particular planar graphs), and (ii) graphs with bounded treewidth.
For \medp, we have a constant factor integrality
gap with congestion $2$ for planar graphs. In \cite{CKS-treewidth} it
is shown that the integrality gap of the LP for \medp in graphs of
treewidth at most $k$ is $O(k \log k \log n)$; note that this is with
congestion $1$. Existing integrality gap results, when interpreted in
terms of treewidth $k$, show that the integrality gap is $\Omega(k)$ for
congestion $1$ and $\Omega(\log^{O(1/c)} k)$ for congestion $c >
1$. It was asked in \cite{CKS-treewidth} whether the gap is
$O(k)$ with congestion $1$. In particular, the question of
whether the gap is $O(1)$ for $k=2$ (this is precisely the
class of series parallel graphs) was open.
In this paper we show the following result.

\begin{theorem}
  \label{thm:tw}
  The integrality gap of the flow LP for \medp is $2^{O(k)}$ in graphs
  of treewidth at most $k$. Moreover, there is a polynomial-time
  algorithm that given a graph $G$, a tree decomposition
  for $G$ of width $k$, and fractional solution to the LP of value
  ${\sf OPT}$, outputs an integral solution of value $\Omega({\sf OPT}/2^{O(k)})$.
\end{theorem}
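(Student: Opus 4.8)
The plan is to deduce Theorem~\ref{thm:tw} from a general \emph{composition lemma}: for any graph family $\mathcal{G}$ that has a constant-factor constant-congestion bound $(\alpha,\beta)$ for \medp together with a polynomial-time rounding algorithm realising it, the family of graphs obtained by $k$-sums of members of $\mathcal{G}$ has a CFCC bound $(\alpha\cdot 2^{O(k)},\beta)$, also algorithmically. This suffices because a graph $G$ of treewidth $k$ is a spanning subgraph of a chordal graph $\widehat G$ all of whose maximal cliques have size $\le k+1$, and such a $\widehat G$ is a $k$-sum of copies of $K_{k+1}$; extending a \medp instance on $G$ to $\widehat G$ by assigning capacity $0$ to the new edges puts the fractional and the integral solutions of the two instances in bijection, so it is enough to run the composition lemma over the base family $\mathcal{G}_{k+1}$ of all graphs on at most $k+1$ vertices. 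For $\mathcal{G}_{k+1}$ the bound $(O(k^2),1)$ is immediate: the unordered terminal pair $\{s,t\}$ carrying the largest total LP weight $Z_{st}$ satisfies $Z_{st}\ge \opt/\binom{k+1}{2}$, integrality of single-commodity flows yields $\lfloor Z_{st}\rfloor$ capacity-respecting $s$--$t$ paths, and there are at least $Z_{st}\ge\lfloor Z_{st}\rfloor$ demands with that pair, so we route $\lfloor Z_{st}\rfloor\ge Z_{st}/2$ of them with congestion $1$. Composing then gives treewidth-$k$ graphs the bound $(O(k^2)\cdot 2^{O(k)},1)=(2^{O(k)},1)$ claimed.

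For the composition lemma I would root the $k$-sum tree $T$ (nodes $=$ pieces $G_i\in\mathcal G$, edges $=$ adhesion cliques of size $\le k$), so that each vertex of $G$ has a well-defined home piece and each demand $h$ has a \emph{top piece} $w_h=\mathrm{lca}_T(\mathrm{home}(s_h),\mathrm{home}(t_h))$. Decomposing the LP flow of $h$ into paths, each such path follows the tree-path of $T$ from $\mathrm{home}(s_h)$ through $w_h$ to $\mathrm{home}(t_h)$, hence crosses every adhesion on that tree-path, entering and leaving through one of its $\le k$ vertices. The scheme is then: (i) commit $h$ to a single \emph{signature} — a choice at each crossed adhesion of which of its $\le k$ vertices the path uses — keeping only flow of $h$ consistent with the signature; (ii) once signatures are fixed, $h$ decomposes into one local demand per piece on its tree-path (from $s_h$ or an adhesion vertex to an adhesion vertex or $t_h$), and the restriction of the signature-consistent flow to a piece $G_i$ is a feasible fractional \medp solution on $G_i$ with the original capacities; (iii) run the $\mathcal G$-algorithm on each local instance to route, inside each piece, a $\ge 1/\alpha$ fraction of its local fractional value with congestion $\beta$; (iv) declare $h$ routed only if all of its local demands were routed, and concatenate — the local paths meet only at adhesion vertices, which is permitted under edge-disjointness, so the union is a global routing of congestion $\beta$.

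The hard part is making steps (i) and (iv) lose only $2^{O(k)}$ \emph{in total}, independent of the size and depth of $T$: charging $h$ to its most popular signature the naive way costs a $k^{O(1)}$ factor for each of the possibly $\Theta(|T|)$ adhesions it crosses, and a careless union bound in (iv) would compound the per-piece $1/\alpha$ loss along the whole tree-path. To circumvent this I would try to process $T$ top-down and amortise the loss against demand weight: group the demands by top piece, and when handling a piece $w$ route all demands with $w_h=w$ by a \emph{single} call to the $\mathcal G$-algorithm on an auxiliary graph built from $G_w$ by contracting each child subtree of $w$ down to its adhesion clique and then identifying, within each contracted clique, its $\le k$ adhesion vertices by an averaging step (legal because demands meeting at a vertex are fine for \edp; the auxiliary graph stays in $\mathcal G$ when $\mathcal G$ is minor-closed, and in general is handled by the same contraction argument). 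The key point to establish is that these per-top-piece routings, each constructed entirely inside the subtree rooted at its top piece, can be made \emph{simultaneously} capacity-respecting — e.g.\ by reserving for each level a $2^{-O(k)}$ share of every adhesion's interface, or by a more careful sequential reservation — so that their union has value $\sum_w \Omega(V_w\cdot 2^{-O(k)}/\alpha)=\Omega(\opt\cdot 2^{-O(k)}/\alpha)$, where $V_w$ is the LP weight of the demands topped at $w$. Proving this simultaneity, i.e.\ that fixing the routings in the ancestor subtrees still leaves every adhesion enough residual interface for the descendant subtrees at a total toll of only $2^{O(k)}$, is the technical heart of the argument; the remainder is bookkeeping over the tree decomposition and direct invocation of the CFCC hypothesis for $\mathcal G$.
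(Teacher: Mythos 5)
Your reduction of treewidth-$k$ graphs to $k$-sums over the family of graphs on at most $k+1$ vertices is fine (though the chordal-completion step is unnecessary: the bags of a width-$k$ tree decomposition directly express $G$ as a $k$-sum, since the $k$-sum operation permits deleting edges of the identification clique), and the $(O(k^2),1)$ base-case bound by picking the heaviest demand pair is correct. The problem is that the composition lemma, which carries the entire weight of the argument, is not proved. You openly flag the obstruction yourself: step (i) loses a factor that a priori compounds along the tree-path of a demand, and step (iv) routes a demand only if \emph{all} of its local sub-demands happen to be selected by the independent per-piece calls, which again compounds as $(1/\alpha)^{d}$ over $d$ pieces. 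The proposed fix (top-down grouping by ``top piece'' with contracted child subtrees, plus a reservation of a $2^{-O(k)}$ share of adhesion interface per level) is exactly where the argument would have to do real work, and you concede in your last paragraph that the simultaneity claim --- that routings in ancestor subtrees leave enough residual capacity at every adhesion for descendant subtrees at total toll $2^{O(k)}$ --- is ``the technical heart'' and remains unestablished. As stated, there is no reason this reservation-per-level idea gives a depth-independent loss: a demand topped high in the tree has to thread through every level below it to reach its endpoints, consuming interface capacity at each, and a naive per-level reservation costs a factor exponential in the depth of $T$, not in $k$.

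The paper resolves exactly this difficulty, but by an entirely different mechanism. It introduces a refined induction hypothesis on $(k,p)$-degenerate tree decompositions together with a \emph{flushness} condition on the fractional flow, and then argues via a max-flow/min-cut dichotomy: either a constant fraction of every marginal can be routed to a size-$k$ separator $V_e$ (in which case one finishes via the single-node routing lemma with a $O(k)$ loss), or there is a sparse central cut $U$ with $c(\delta(U))<x(U)/6$. Cutting along $U$ destroys only $c(U)$ units of flow, which is charged to the $U$-side; the $U$-side is shown (Lemma~\ref{lem:ksum-recursion}) to admit a $(k-1,p)$-decomposition after converting high-width subtrees into degenerate leaves, so the width drops there, while the other side recurses only on fewer nodes at the same width. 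This yields the $3^k$ loss by a clean telescoping and completely sidesteps the ``simultaneous routing through all levels'' problem that your top-down scheme runs into: the two sides of the cut are handled on edge-disjoint subgraphs, so no reservation or interface-sharing argument is needed. Also note that for the base case ($k=0$) the paper moves terminals to separators (Lemma~\ref{lem:rr}) and replaces degenerate leaves by integer sparsifiers (Theorem~\ref{thm:sparsifier}), which is what actually pays the $O(p^2)$ factor --- none of this machinery appears in your sketch.
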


The preceding theorem is a special case of a more general theorem that
we prove below. Let $\cG$ be a family of graphs. For any integer $k \ge 1$,
let $\cG_k$ denote the class of graphs obtained from $\cG$ by the
$k$-sum operation. The $k$-sum operation is formally defined in
Section~\ref{sec:ksums-defn}; the structure theorem of Robertson and Seymour
is based on the $k$-sum operation over certain classes of graphs.

\begin{theorem}
\label{thm:ksums}
  Let $\cG$ be a minor-closed class of graphs such that
  the integrality gap of the flow LP is $\alpha$ with congestion $\beta$.
  Then the integrality gap of the flow LP for the class $\cG_k$ is
  $2^{O(k)} \alpha$ with congestion $\beta+3$.
\end{theorem}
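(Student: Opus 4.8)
The plan is to induct on the number of pieces in the $k$-sum decomposition of a graph $G \in \cG_k$, peeling off one ``leaf'' piece at a time. Recall that a $k$-sum decomposition gives a tree $\cT$ whose nodes are graphs in $\cG$, where adjacent pieces share a clique of size at most $k$ (the \emph{gluing clique}), and $G$ is obtained by identifying these cliques; edges of a gluing clique may be present in one piece, the other, or neither. Fix an optimal fractional solution $x$ to the LP on $G$ with value $\opt$. The first step is to \emph{localize} the fractional flow. For a leaf piece $G_0$ glued to the rest along a clique $K$ with $|K| \le k$, every flow path $P$ carrying positive $x_P$ either stays entirely inside $G_0$, stays entirely outside $G_0$, or crosses $K$. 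A path that crosses $K$ can be cut at its first and last visits to $K$, decomposing it into a bounded number of subpaths, each living on one side. This lets me split the total LP value into the value ``due to'' $G_0$ and the value ``due to'' the rest, up to a constant loss, and — crucially — replace the crossing demands by new demands whose endpoints lie on $K$. The number of distinct endpoint pairs on $K$ is at most $\binom{k}{2}$, and by standard aggregation/splitting arguments one can round the fractional flow so that each such ``interface demand'' is routed integrally or dropped, at the cost of a $2^{O(k)}$ factor — this is exactly where the $2^{O(k)}$ in the statement is born, and it is the same kind of bound that drives Theorem~\ref{thm:tw}.

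The second step is the recursion itself. After the localization, I have (a) a residual instance on $G \setminus G_0$ together with the rest of the decomposition, which is a smaller $k$-sum and to which the inductive hypothesis applies, and (b) an instance on $G_0$ alone, with original demands plus at most $\binom{k}{2}$ interface demands, all with a fractional routing of value a constant fraction of the $G_0$-share of $\opt$. Since $G_0 \in \cG$ and $\cG$ has integrality gap $\alpha$ with congestion $\beta$, I can route integrally in $G_0$ a $1/\alpha$ fraction of that value with congestion $\beta$. The delicate point is \emph{consistency at the interface}: the integral solution chosen inside $G_0$ decides which interface demands it actually realizes, and the recursive call on the rest must route the complementary pieces through exactly the matching vertices of $K$. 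I handle this by first fixing, via the rounding in step one, a single integral ``pattern'' on $K$ — a partial matching together with multiplicities on $K$-vertices — and then conditioning both subproblems on that pattern; the $2^{O(k)}$ loss already accounts for the number of patterns. The extra congestion $+3$ comes from concatenating: a routed demand in $G$ may traverse $G_0$, pass through $K$, and continue in the residual graph, and each of the at most two ``stitch points'' on $K$ plus the edges of $K$ itself can accumulate a bounded amount of extra congestion over the $\beta$ already used; a careful accounting shows $3$ suffices.

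The third step is to make everything algorithmic and to verify the recursion depth does not blow up the constants. The tree $\cT$ has depth at most $|V(G)|$, but the multiplicative loss per level would then be catastrophic; the fix is the standard one of processing the decomposition \emph{by the tree structure, not level by level} — each edge of $G$ lies in exactly one piece, so the congestion and the value loss are charged once per piece, and a clean way to see this is to set up a single global LP/combinatorial argument where the $2^{O(k)}$ rounding is applied once to the interface structure across the whole decomposition rather than repeatedly. Concretely I would: (i) root $\cT$; (ii) guess/enumerate interface patterns bottom-up via dynamic programming over $\cT$ (polynomial because each pattern has $2^{O(k)}$ choices and we only keep the best); (iii) invoke the assumed $(\alpha,\beta)$-routing algorithm for $\cG$ as a black box on each piece with its conditioned demands; (iv) concatenate. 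The main obstacle I anticipate is precisely step~(ii)–(iii)'s interface bookkeeping: ensuring that the fractional value is not double-counted across the shared clique, that the interface demands extracted on the two sides of a gluing clique are \emph{dual} to each other, and that the $2^{O(k)}$ rounding can be done once rather than compounding with depth. Getting a clean invariant for the DP state on $\cT$ — essentially ``for each subtree and each boundary pattern, the best integral value achievable, consistent with a fixed fractional skeleton'' — is the technical heart of the argument; once that is in place, Theorem~\ref{thm:tw} follows by taking $\cG$ to be edgeless graphs (for which $\alpha=\beta=1$ trivially, since a width-$k$ tree decomposition exhibits $G$ as a $(k{+}1)$-sum of single edges) or, more cleanly, cliques of size $k+1$.
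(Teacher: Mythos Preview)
Your proposal has a genuine gap at precisely the point you flag as the ``technical heart.'' The black-box $(\alpha,\beta)$-oracle for $\cG$ only promises to route \emph{some} $1/\alpha$ fraction of the LP value; it gives you no control over \emph{which} demands are routed. So you cannot ``condition both subproblems on a fixed interface pattern'' and then invoke the oracle: the oracle on $G_0$ may simply ignore all the interface demands and route only internal ones, leaving nothing for the other side to stitch to. Enumerating the $2^{O(k)}$ patterns does not help, because for each pattern you still need the oracle to realize it, and you have no such guarantee. Your DP over $\cT$ inherits the same problem at every internal edge, and this is why the argument as stated does not go through. (The $+3$ accounting is also off: in the paper it arises not from stitching across $K$ but from two separate tools applied in a single base case.)

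The paper's route is quite different and avoids interface stitching entirely. It introduces a relaxed notion of $(k,p)$-tree decomposition with \emph{degenerate leaves} --- subtrees whose separator may be as large as $p$ but through which all relevant flow is guaranteed to pass (``flushness''). The induction is on $k$, not on the number of pieces: one picks a size-$k$ separator $V_e$ and tests whether every terminal can send a constant fraction of its marginal to $V_e$. If yes, a single-node routing proposition finishes. If no, the witnessing sparse cut $U$ lets one recurse on both sides, and on the $U$ side every maximal subtree hanging off a size-$k$ separator is collapsed into a new degenerate leaf, dropping the effective width to $k-1$; this is where the factor $3^k$ appears. Only in the base case $k=0$ is the oracle ever invoked, and there the decomposition is a star: terminals are first moved onto the separators (Lemma~\ref{lem:rr}, $+2$ congestion), each degenerate leaf is replaced by a $(p^2,2)$-sparsifier on its separator ($+1$ congestion), and the resulting graph lies in $\cG$, so a \emph{single} call to the oracle suffices --- no interface-consistency issue arises.
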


The preceding theorem is effective in the following sense: there is a
polynomial-time algorithm that gives a constant factor, constant
congestion result for $\cG_k$ assuming that (i) such an algorithm
exists for $\cG$ and (ii) there is a polynomial-time algorithm to find
a tree decomposition over $\cG$ for a given graph $G \in \cG_k$.

We give the following as a second piece of evidence towards
Conjecture~\ref{conj:minor-free}.

\begin{theorem}
  \label{thm:genus}
  The integrality gap of the flow LP on graphs of genus $g > 0$
  is $O(g \log^2 (g+1))$ with congestion $3$.
\footnote{
  We believe that the congestion bound in the preceding theorem
  can be improved to $2$ with some additional technical work.
  We do not give a polynomial-time algorithm although we believe
  that it too is achievable with some (potentially messy) technical work.}
  \end{theorem}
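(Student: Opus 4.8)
The plan is to reduce a bounded-genus instance to the planar case (where we already have a constant factor with congestion~$2$) by cutting the surface open along a small number of curves. Concretely, given a graph $G$ embedded on a surface of genus $g$ together with a fractional LP solution $x$ of value $\opt$, I would first invoke the standard fact that $G$ has a \emph{book/planarizing set} of $O(g)$ noncontractible cycles $C_1,\dots,C_t$ (obtained from a system of loops through a single vertex, or from cutting along dual-shortest noncontractible curves) whose removal leaves a planar graph $G'$ embedded in the plane (equivalently, there is a set $F$ of $O(g)$ faces/handles and cutting along their boundaries planarizes $G$). The key quantitative point is that the total capacity crossing these cycles is controlled, and more importantly each demand path $P$ in the support of $x$ either stays within the planar part or uses the "wrap-around'' structure a bounded number of times.

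The heart of the argument is a charging/random-shifting step to throw away the flow that interacts badly with the planarizing curves. For each planarizing cycle $C_i$, pick a random point (edge) $e_i$ on $C_i$ and delete $e_i$; this breaks the cycle and lets us cut the surface open there. A demand path $P$ is \emph{killed} if it passes through some deleted edge $e_i$. Since each $C_i$ has length $\ell_i$ and each path $P$ crosses $C_i$ at most, say, $O(1)$ times for a genus-reduced system (or crosses the full collection a bounded number of times in total via a homology/crossing-number argument), the probability that $P$ is killed is $O(1/\ell_i)$ summed over the cycles it meets. The surviving flow then lives in the planar graph $G''$ obtained by cutting $G$ open along the $C_i$'s; by the planar CFCC result \cite{SeguinS11,CKS-planar-constant} we can integrally route $\Omega(1)$ of the surviving LP value in $G''$ with congestion~$2$, and mapping these paths back to $G$ costs one extra unit of congestion (at most one re-use along each cut curve), giving congestion~$3$. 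Choosing capacities/length scales so that the expected killed value is at most $\opt/2$ — which is exactly where the $\log$ factors enter, since to make $\sum 1/\ell_i$ small on cycles whose lengths we do not control we instead bucket cycles by length scale and lose an $O(\log(g+1))$ factor per scale, and an extra $O(\log(g+1))$ in union-bounding over the $O(g)$ scales — yields the claimed bound $O(g\log^2(g+1))$.

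I would organize the write-up as: (1) recall the planarizing-cycle theorem for genus-$g$ graphs and the bound $t = O(g)$; (2) a lemma bounding, for any path, the number of essential crossings with the planarizing system, so that a random edge-deletion on each cycle kills only an $O(\polylog/\ell)$ fraction; (3) a length-bucketing argument: restrict attention to cycles whose length lies in $[2^j, 2^{j+1})$, scale down those edge-capacities by removing a $1/\log$ fraction of flow, so deleting a random edge on each such cycle is affordable — the two $\log(g+1)$ factors come from the number of buckets and from the within-bucket loss; (4) cut open along the chosen edges to get a planar instance with an LP solution of value $\Omega(\opt / (g\log^2(g+1)))$, apply the planar result with congestion~$2$; (5) lift back to $G$, incurring one extra unit of congestion along the cut curves, for total congestion~$3$.

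The main obstacle is step~(2)/(3): controlling how badly the fractional flow can wind around the handles. A path in the support of $x$ can in principle traverse a short noncontractible cycle many times, so a naive random-deletion union bound fails. The fix is to work with a planarizing system whose cycles are \emph{short} relative to the flow (e.g.\ greedily cutting along shortest noncontractible cycles in the appropriate covering/residual graph, or using the $2g$ "homology generators'' so that by a parity/homology argument each flow path crosses the union an even and bounded number of essential times), and then do the length-bucketing so that within each scale the deletion probability is $O(1/\log(g+1))$ per cycle and $O(g/\log(g+1))$ in aggregate — small after we pay one $\log(g+1)$ up front. Getting this bookkeeping tight, and verifying that cutting open genuinely produces a \emph{planar} graph on which the LP value is preserved up to the stated loss, is the delicate part; the congestion accounting (why $+1$ and not more when lifting planar paths back across the cuts) is routine by comparison.
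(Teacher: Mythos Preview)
Your approach is genuinely different from the paper's, and as written it has a real gap.

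\textbf{The gap.} The random-edge-deletion step does not do what you need. First, deleting one edge from each of $O(g)$ noncontractible cycles does \emph{not} planarize $G$; to obtain a planar graph you must cut the surface open along the full cycles (duplicating the boundary), and then a flow path is lost not when it hits the single random edge but whenever it crosses the cycle at all. Second, even if you phrase it as ``randomly shift the cut point and kill paths that cross it,'' there is no a priori bound on how many times a single flow path traverses a given noncontractible cycle: a path is not a cycle, so there is no homology or parity constraint on its intersection number, and a path can wind around a handle arbitrarily often while still being a simple path in $G$. Your length-bucketing proposal does not address this: bucketing the $O(g)$ cycles by length controls the number of cycles per bucket, but the loss from a single cycle of length $\ell$ is (number of edges of the path on that cycle)$/\ell$, and the numerator is unbounded. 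So the claimed bound on the killed flow, and hence the $O(g\log^2(g+1))$ factor, is not justified.

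\textbf{What the paper does instead.} The paper avoids surface-cutting entirely. It runs a well-linked decomposition: recursively find a sparse cut (using the $O(\log(g+1))$ flow--cut gap for genus-$g$ graphs) and split, stopping a branch as soon as the piece is either planar or has its terminals flow-well-linked. The key structural fact is that for a central cut the genera of the two sides sum to at most $g$, so the recursion depth in the ``genus'' parameter is $O(\log(g+1))$; this is where the two $\log(g+1)$ factors come from (one from the flow--cut gap, one from the recursion). Planar pieces are handled by the known constant-factor, congestion-$2$ result. For well-linked, still non-planar pieces, the paper uses that a genus-$g$ graph of treewidth $h$ contains an $\Omega(h/g)\times\Omega(h/g)$ grid minor, together with a deletable-edge lemma guaranteeing such a grid is \emph{reachable} from the terminals, and routes $\Omega(|X|/g)$ pairs through the grid with congestion $3$. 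The $O(g)$ factor thus comes from the treewidth-to-grid-minor ratio, not from a planarizing-cycle count, and the congestion $3$ comes from combining congestion-$2$ grid routing with the paths linking terminals to the grid interface.
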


Theorems~\ref{thm:ksums} and \ref{thm:genus} imply that the class
of graphs obtained as $k$-sums of graphs with genus $g$
is CFCC when $k$ and $g$ are fixed constants. The bottleneck
in extending our results to prove Conjecture~\ref{conj:minor-free}
are planar graphs (or more generally bounded genus graphs)
that have ``vortices'' which play a non-trivial role in the
Robertson-Seymour structure theorem.




\paragraph{A brief discussion of technical ideas and related work:}
The approximability of \medp in undirected and directed graphs has
received much attention in the recent years. We refer the reader to
some recent papers
\cite{ChuzhoyL12,Chuzhoy12,SeguinS11,andrews2010inapproximability}. A
framework based on well-linked decompositions \cite{CKS-well-linked}
has played an important role in understanding the integrality gap of
the flow relaxation in undirected graphs.  It is based on recursively
cutting the input graph along sparse cuts until the given instance is
well-linked. However, this framework loses at least a logarithmic
factor in the approximation. The work in \cite{CKS-planar-constant}
obtained a constant factor approximation for planar graphs by using a
more refined decomposition that took advantage of the structure of
planar graphs. For graphs of treewidth $k$, \cite{CKS-treewidth} used the
well-linked decomposition framework to obtain an $O(k \log k \log
n)$-approximation and integrality gap. Our work here shows that one
can bypass the well-linked decomposition framework for bounded
treewidth graphs, and more generally for $k$-sums over families of
graphs.  The key high-level idea is to effectively reduce the
(tree)width of one side of a sparse cut if the terminals cannot route
to a small set of nodes. Making this work requires a somewhat nuanced induction
hypothesis.  For bounded-genus graphs, we adapt the well-linked
decomposition to effectively reduce the problem to the planar graph
case.

There are two streams of questions comparing minimum cuts to maximum
flows in graphs.  First, the {\em flow-cut gap} measures the gap
between a sparsest cut and a maximum concurrent flow of an instance.
The second measures the {\em throughput-gap} by comparing the maximum
throughput flow and the minimum multicut.  These gap results have been
of fundamental importance in algorithms starting with the seminal work
of Leighton and Rao \cite{LeightonR99}. It is known that the gaps in
general undirected graphs are $\Theta(\log n)$; see
\cite{Shmoys-survey} for a survey.  It is also conjectured
\cite{gupta2004cuts} (the GNRS Conjecture) that the flow-cut gap is
$O(1)$ for minor-closed families.  This conjecture is very much open
and is not known even for planar graphs or treewidth $3$ graphs; see
\cite{lee2010genus} for relevant discussion and known results.  In
contrast, the work of Klein, Plotkin and Rao \cite{KPR} showed that
the throughput-gap is $O(1)$ in any proper minor-closed family of
graphs (formally shown in \cite{TardosV93}).  The focus of these works
is on fractional flows, in contrast to our focus on integral routings.
Conjecture~\ref{conj:minor-free} is essentially asking about the
integrality gap of throughput flows.  Given the $O(1)$ throughput-gap
\cite{KPR}, it can also be viewed as asking whether the gap between
the maximum {\em integer} throughput flow with congestion $2$ is
within an $O(1)$ factor of the minimum multicut. Analogously for
flow-cut gaps, \cite{ChekuriSW} conjectured that the gap between the
maximum integer concurrent flow and the sparsest cut is $O(1)$ in
minor-free graphs.

\ifabstract
\paragraph{Organization:} 
Due to space constraints several technical ingredients needed 
to prove Theorem~\ref{thm:tw}
and Theorem~\ref{thm:ksums} are in the
appendix; a summary of these ingredients at a high-level is provided
in Section~\ref{sec:tools-summary} after some preliminaries.
The proof of Theorem~\ref{thm:genus} and discussion of open problems
are also moved to the appendix. 

\fi
\section{Preliminaries}

Recall that an instance of \medp consists of a graph $G$ and demand
pairs $H$. In general $H$ can be a multiset, however it is convenient
to assume that $H$ is a matching on the nodes of $G$. Indeed we just
have to attach the terminals to leaves created from new nodes.  With
this assumption we use $X$ to denote the set of terminals (the endpoints
of the demand pairs) and $M$ the matching on $X$ that
corresponds to the demands. We call the triple $(G,X,M)$ a matching
instance of \medp.  Let $\bx$ be a feasible solution to the LP
relaxation (\ref{eqn:LP}).  For each node $v \in X$, we also use
$x(v)$ to denote the value $\sum_{P \in \mathcal{P}_h} x_P$ where $v$
is an endpoint of the demand $h$; this is called the {\em
  marginal value} of $v$.  We  assume that all capacities $c_e$
are $1$; this does not affect the integrality gap analysis. Moreover,
as argued previously (cf. \cite{Chekuri04b}), at a loss of a factor of
$2$ in the approximation ratio, the assumption can be made for
polynomial-time algorithms that are based on rounding a solution to
the flow relaxation.

\subsection{$k$-Sums and the structure theorem of Robertson and Seymour}\label{sec:ksums-defn}

Let $G_1$ and $G_2$ be two graphs, and $C_i$ a clique of size $k$ in
$G_i$. The graph $G$ obtained by identifying the nodes of $C_1$
one-to-one with those of $C_2$, and then removing some of the edges
between  nodes of $C_1 = C_2$, is called a \emph{$k$-sum} of $G_1$
and $G_2$.  For a class of graphs $\cG$, we define the class $\cG_k$ of
the graphs {\em obtained from $\cG$ by $k$-sums}, to be the smallest class
of graphs such that: (i) $\cG$ is included in $\cG_k$, and (ii) if $G$
is a $k$-sum of $G_1 \in \cG$ and $G_2 \in \cG_k$, then $G \in \cG_k$.

Fix a class of graphs $\cG$. A tree $\cT$ is a {\em tree decomposition} over
$\cG$ for a graph $G = (V,E)$, if each node $A$ in $\cT$ is associated
to a subset of nodes $X_A \subseteq V$, called a \emph{bag}, and
the following properties hold:
\begin{itemize}\setlength{\itemsep}{0pt}
\item[$(i)$] for each $v \in V(G)$, the set of nodes of $\cT$ whose
  bags contain $v$, form a non-empty sub-tree of $\cT$,
\item[$(ii)$] for each edge $uv \in E(G)$, there is a bag with both $u$ and $v$
  in it,
\item[$(iii)$] for any bag $X$, the graph obtained from $G[X]$ by adding cliques over $X \cap Y$, for every adjacent bag $Y$, is in $\cG$. We  denote this graph by $G[[X]]$.
\end{itemize}
When $\cG$ is closed under taking minors, condition $(iii)$ implies
that $G[X]$ itself is in $\cG$, as well as any graph obtained from
$G[X]$ by adding edges in $X \cap Y$, for any adjacent bag $Y$.
Throughout we assume that $\cG$ is minor-closed. We
sometimes identify the nodes of $\cT$ with their respective bags.  We
also denote by $V(\cT)$, the union of all bags, and so $V(\cT)
\subseteq V(G)$.

A set of nodes $X \cap Y$, for $X$ and $Y$ adjacent bags, is called a
\emph{separator}. When the tree decomposition $\cT$ is minimal (with
respect to the number of bags), the separators are disconnecting node
sets of $G$. Thus each edge $e$ of $\cT$ identifies a separator,
denoted by $V_e$.
For convenience, we usually work with rooted tree decompositions,
where an arbitrary node is chosen to be the root. Then, for bag $X$
and its parent $Y$, we denote by $S_X$ the separator $X \cap Y$.


The \emph{width} of a tree decomposition $\cT$ is the maximum
cardinality of a separator of $\cT$. The \emph{width} of a graph
(relative to a graph class $\cG$) is the smallest width of a tree
decomposition for that graph. A graph of width $k$ can thus be
obtained by $k$-sums of graphs from $\cG$.  As a special case, the
\emph{treewidth} of a graph $G$ is the smallest $k$ such that $G$
admits a decomposition of width $k$ relative to the class of all
graphs with at most $k+1$ nodes.


Let $\tree$ be a tree decomposition
of a graph $G$, rooted at a node $R$. For any edge $e$ of the tree decomposition,
 let $\overline{\tree_e}$ and $\tree_e$ be the
subtrees obtained from $\tree$ by removing $e$, with $R \in
\overline{\tree_e}$. We denote by $G_e$ the graph obtained from the
induced subgraph of $G$ on node set $V(\tree_e)$, and then removing all edges in
$V(\overline{\tree_e}) \times V(\overline{\tree_e})$. Note that
$\tree_e$ is a tree decomposition of $G_e$.

We recall informally the  graph structure theorem proved by Robertson and
Seymour. For $k \in \mathbb{N}$, let $\mathcal{L}^k$ be the graphs
obtained in the following way.
\begin{itemize}\setlength{\itemsep}{0pt}
\item we start from a graph $G$ embeddable on a surface of genus $k$,
\item then we add \emph{vortices} of width $k$ to at most $k$ faces of $G$,
\item then we add at most $k$ \emph{apex} nodes. That is, each of
  these nodes can be adjacent to an arbitrary subset of nodes.
\end{itemize}
Then, we denote $\mathcal{L}^k = \mathcal{L}^k_k$. For a graph $H$, we
denote by $\mathcal{K}_H$ the graphs that do not contain an $H$-minor.

\begin{theorem}[Robertson and Seymour~\cite{RobertsonS2003}]
  For any graph $H$, there is an integer $k > 0$ such that
  $\mathcal{K}_H \subseteq \mathcal{L}^k$.
\end{theorem}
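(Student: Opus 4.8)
Since this is the Graph Minor Structure Theorem quoted from Robertson and Seymour, it is not reproved here; but at a high level the plan one would follow is as follows. The argument would run by the contrapositive: fix $H$ and prove that every graph $G$ with no $H$-minor lies in $\mathcal{L}^{k}$ for a suitable $k = k(H)$. First I would split on treewidth. If the treewidth of $G$ is bounded by a function of $H$, then $G$ admits a tree decomposition in which every bag --- hence every $G[[X]]$ --- has bounded size, and a graph on boundedly many vertices trivially lies in $\mathcal{L}^{k}$ once $k$ is large, so $G \in \mathcal{L}^{k}_{k} = \mathcal{L}^{k}$. The substantive case is therefore large treewidth.

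In that case the first tool is the Excluded Grid Theorem: sufficiently large treewidth forces a large grid minor, and hence a large \emph{wall}. Next I would argue that in an $H$-minor-free graph such a wall can be chosen \emph{flat}, that is, drawn in a closed disk with all of its attachments to the rest of $G$ lying on the boundary of the disk; otherwise one could route enough pairwise disjoint paths across the wall to exhibit $H$ as a minor, a contradiction. The heart of the proof then analyzes the structure around a large flat wall: one shows that $G$ can be cut along separators of bounded order --- i.e.\ expressed as a bounded-order clique-sum --- into pieces, exactly one of which carries the flat wall, and that this distinguished piece is \emph{almost embeddable}: it embeds on a surface of bounded genus after deleting a bounded number of apex vertices and after allowing a bounded number of bounded-width vortices on its faces, which is exactly the condition for membership in $\mathcal{L}^{k}$. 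Recursing on the remaining pieces and assembling the clique-sum structure yields a tree decomposition of $G$ over $\mathcal{L}^{k}$, hence $G \in \mathcal{L}^{k}_{k}$.

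The hard part --- and the part that in fact occupies the bulk of the Graph Minors series --- is this middle step: controlling precisely how the rest of $G$ can attach to a large flat wall without creating the forbidden minor. This needs the full machinery of societies and vortices and a careful accounting of crossings on the host surface, and it is responsible for the resulting bound $k(H)$ being enormous and non-explicit. By contrast, the treewidth dichotomy and the production of a flat wall are comparatively routine once the Excluded Grid Theorem is in hand; essentially everything delicate is packed into showing the flat-wall piece is almost embeddable.
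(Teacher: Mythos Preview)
Your proposal is correct: the paper does not prove this theorem at all---it is simply quoted with a citation to Robertson and Seymour~\cite{RobertsonS2003} and used as a black box to motivate Conjecture~\ref{conj:minor-free} and the overall strategy. Your high-level sketch of the Robertson--Seymour argument (treewidth dichotomy, excluded grid/flat wall, almost-embeddability of the piece containing the wall, and recursion via bounded clique-sums) is an accurate summary of the actual proof in the Graph Minors series, so there is nothing to compare against here.
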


In order to prove Conjecture~\ref{conj:minor-free},
one should be able to use the preceding decomposition theorem,
proving that the CFCC property holds for bounded genus graph and is
preserved by adding a constant number of vortices and apex nodes, and
by taking $k$-sums. Apex nodes are easy to deal with. This paper
provides a proof for bounded genus graphs and for $k$-sums. This
leaves only the cases of vortices as the bottleneck
in proving the conjecture.

\ifabstract
\section{Technical Ingredients}
\label{sec:tools-summary}
We rely on several technical tools and ingredients that are either explicitly
or implicitly used in recent work on \medp. Due to space constraints
they are in Section~\ref{sec:tools} of the appendix.
Here we give a high-level description.

\paragraph{Moving Terminals:} The idea here (also leveraged in
previous work, cf. \cite{CKS-planar-constant,CKS-treewidth}) is to
reduce a \medp instance to a simpler/easier instance by moving the terminals
to a specific set of new locations (nodes). The two instances are
equivalent for \medp, up to an additional constant congestion and
constant factor approximation. 

\paragraph{Sparsifiers:} Given a graph $G=(V,E)$ and $S \subset V$ a
sparsifier is a graph $H$ only on the node set $S$ that acts as a
proxy for routing between nodes in $S$ in the original graph $G$.
We are interested in integer sparsifiers of certain type when $|S|$
is small, a constant in our setting.

\paragraph{Routing through a small set of nodes:} The idea here is that
if all the flow for a given instance intersects a small set of nodes,
say $p$, then one can in fact obtain an $\Omega(1/p)$-approximation
for \medp. This follows the ideas from \cite{CKS-sqrtn} where the special
case of $p=1$ was exploited.

\fi

\iffull
\section{Technical Ingredients}
\label{sec:tools}
We rely on several technical tools and ingredients that are either explicitly
or implicitly used in recent work on \medp.

\subsection{Moving Terminals}
We first describe a general tool (ideas of which are leveraged also in
previous work, cf. \cite{CKS-planar-constant,CKS-treewidth}) that allows
us to reduce a \medp instance to a simpler one by
moving the terminals to a specific set of new locations (nodes). The
two instances are equivalent for \medp, up to an additional constant
congestion and constant factor approximation.


\begin{lemma}
\label{lem:rr}
Suppose we have a (matching) instance $(G,H)$ of \medp{} with some
solution $\bx$ to its LP relaxation. Let $\val{\bx} = \sum_P x_P$. Suppose
that for some $S,R \subseteq V(G)$, there is a flow which routes
$x(s)$ from each $s \in S$, and all flow terminates in $R$.  Then
there is another (matching) instance $(G',H')$ with the following
properties.
\begin{enumerate}\setlength{\itemsep}{0pt}
\item The new instance has a (fractional) solution of value at least
  $\val{\bx}/5$,
\item If there is an integral solution for $(G',H')$ of congestion $c$, 
  then there is an integral solution for $(G,H)$
  of the same value and congestion $c+2$,
\item $G',H'$ is obtained from $G,H$ by hanging off pendant stars from
  some of the nodes.
\end{enumerate}
\end{lemma}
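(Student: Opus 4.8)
The plan is to construct $(G',H')$ by relocating each terminal $s \in S$ to a new pendant node, using the hypothesized flow from $S$ to $R$ as the ``transportation mechanism.'' Concretely, I would do the following. Since there is a flow sending $x(s)$ units out of each $s \in S$ with all flow absorbed in $R$, I will think of this flow as a fractional assignment that moves the terminal mass of each $s$ along flow-paths into $R$. The first step is a cleanup step: decompose this $S$-to-$R$ flow into paths, and observe that we may assume all terminals of the instance that we intend to move lie in $S$ (the other endpoint of each demand is handled symmetrically, or in a second application of the same construction). The second step is the actual surgery: for each $s\in S$ we attach a pendant star at $s$ (or more precisely at a node of $R$ reached by $s$'s flow), making the leaf of that star the new terminal; property~(3) is then immediate by construction. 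The delicate points are (a) ensuring the new instance still supports a fractional solution of value $\Omega(\val{\bx})$ — here the factor $1/5$ will come from combining the cost of redirecting the demand through the $S$-to-$R$ flow with a possible overlap/congestion cleanup, and (b) showing that an integral routing in $G'$ pulls back to an integral routing in $G$ with only $+2$ extra congestion — the $+2$ presumably accounting for one unit on each of the two ``transportation'' path-systems (one per endpoint of a demand) used to carry a routed path from its new pendant location back to its original terminal.

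For the fractional solution in $G'$ (property~1), I would take $\bx$ restricted to demands whose endpoints are being moved, and reroute: a path $P$ from $s$ to $t$ in $G$ is extended in $G'$ by prepending the pendant edge at $s$'s new leaf and a segment of the $S$-to-$R$ flow. The issue is that the $S$-to-$R$ flow has capacity $1$ on edges, and using it to carry the $\bx$-flow may overload edges; the standard fix is to scale down by a constant and/or to use the fact that $\bx$ itself respects capacities, so that the concatenation of $\bx$ with the $S$-to-$R$ flow can be made feasible after losing a bounded constant factor. I expect the constant $5$ to be exactly the bookkeeping of: lose a factor $2$ for moving one endpoint, another factor $2$ for the other endpoint, plus slack, or alternatively a single factor from a flow-decomposition/averaging argument identifying a good ``sub-star'' to hang. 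I would not belabor the exact constant.

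For the pull-back direction (property~2), given edge-disjoint (or congestion-$c$) paths in $G'$, each routed path uses exactly one pendant edge at each end (since those are bridges in $G'$), so it enters $G$ at a node of $R$ near $s$ and exits near $t$; I then splice in the corresponding segments of the $S$-to-$R$ flow-paths to reconnect to the original $s$ and $t$. Because those flow-path systems each have congestion $1$ in $G$, adding them on top of the congestion-$c$ routing yields congestion at most $c+2$. The main obstacle I anticipate is book-keeping the interaction between the two endpoints of each demand simultaneously: the $S$-to-$R$ flow is stated for one set $S$, so to move both endpoints one must apply the construction twice (or set up $S$ to contain endpoints of both sides), and one must check that the two rounds of surgery do not interfere — in particular that the pendant stars added in the first round don't disturb the flow used in the second. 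Handling this cleanly, rather than the constant-chasing, is where the real care is needed.
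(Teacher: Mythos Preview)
Your outline has the right high-level shape but is missing the key structural idea, and without it property~(2) does not go through.

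The gap is this: you propose to move each terminal $s\in S$ individually along ``its'' flow-path to some node of $R$, and later to pull an integral routing back by splicing in those same flow-paths. But the $S$-to-$R$ flow ships only $x(s)\le 1$ units out of each $s$, and this mass may be spread fractionally over many paths ending at many different nodes of $R$. There is no single integral $s$-to-$R$ path available for the splice, and the collection of fractional paths does not give you an integral routing in $G$. So the pullback step, as you describe it, would turn an integral solution in $G'$ into a \emph{fractional} one in $G$, not an integral one with congestion $c+2$.

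The paper fixes this by first \emph{clustering} the terminals of $S$ along a spanning forest $T$ reaching $R$: it partitions $S$ into groups $S_1,\dots,S_\ell$ with $1\le x(S_i)\le 2$, each living in its own edge-disjoint subtree $T_i$. Because each cluster now carries at least one unit of flow to $R$, a standard max-flow argument yields \emph{edge-disjoint integral} paths $P_1,\dots,P_\ell$, one per cluster, from some $s_i\in S_i$ to some $r_i\in R$. The pendant leaf $u_i$ is attached at $r_i$, and every terminal in $S_i$ is remapped to $u_i$. Now the pullback is clean: a routed path for a demand whose original endpoint was $s\in S_i$ is extended by $P_i$ followed by the unique $s_i$-$s$ path inside $T_i$; since each edge lies in at most one $P_i$ and at most one $T_i$, this costs exactly $+2$ congestion. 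The factor~$5$ for the forward direction likewise comes from this structure: original flow contributes~$1$, the $P_i$'s contribute at most~$2$ (since $x(S_i)\le 2$), and the $T_i$-paths contribute at most~$2$. Your accounting (``factor~$2$ per endpoint'') and your worry about two rounds of surgery both dissolve once the clustering is in place; both endpoints of a demand can lie in $S$ and are handled in a single pass.
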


\begin{proof}
Let $T$ be a forest of $G$ spanning all the nodes of $S$ such that
each component of $T$ contains at least one node of $R$. We consider
each component of $T$ separately, so we assume here that $T$ is a
tree. Let $r \in R \cap V(T)$ and take this as a root of $T$.

We partition $S$ into subsets $S_1,\ldots,S_\ell$ with the following properties.
\begin{itemize}\setlength{\itemsep}{0pt}
\item[$(i)$] for each $i \in [1,\ell]$, $1 \leq x(S_i) \leq 2$ (except
  possibly $S_1$ may have $x(S_1) < 1$),
\item[$(ii)$] to each $S_i$ is associated a subtree $T_i$ of $T$ spanning $S_i$,
\item[$(iii)$] $T_1,\ldots, T_l$ are edge-disjoint.
\end{itemize}
We achieve this by  the following iterative scheme.
\begin{itemize}\setlength{\itemsep}{0pt}
\item If $x(T) \leq 2$, choose $S_1 = S$, $T_1 = T$, $\ell = 1$. Else:
\item Find a deepest node $v$ in $T$, such that the subtree $T'$
  rooted at $v$ has $x(T') \geq 1$.
\item let $v_1,\ldots,v_i$ be the children of $v$, and $T'_1,\ldots,T'_i$
  be the subtrees rooted at $v_1,\ldots,v_i$ respectively. If
  $\sum_{j=1}^i x(T'_j) < 1$, then define $A \eqdef E(T')$ and $U
  \eqdef V(T') \cap S$.  Otherwise, find the smallest $i'$ such that
  $x(B) \geq 1$, where $B \eqdef \bigcup_{j=1}^{i'} V(T'_j) \cap S$.
  Then set $A \eqdef \bigcup_{j=1}^{i'} (E(T'_j) \cup \{vv_j\})$. In
  both cases, we get $1 \leq x(A) \leq 2$, and $A$ induces a tree.
  (Note that in the latter case $v$ was not placed in $B$ but does lie
  in $V(A)$.)
\item proceed inductively on $T - A$ and $S - B$, to find $S_1,\ldots
  S_{\ell'}$ and $T_1,\ldots T_{\ell'}$. Then $\ell \eqdef \ell' + 1$, $S_\ell \eqdef B$
  and $T_\ell \eqdef A$.
\end{itemize}
Note that with this scheme, $x(S_1)$ might be less than $1$, but in
that case $T_1$ contains the root node from $R$. As the hypotheses
$x(S_i) \geq 1$ is only used to prove the existence of edge-disjoint
paths from the $S_i$'s to $R$, this does not pose a problem (we can
simply take the trivial path at $r$ for $S_1$).

Each $S_i$ is called a \emph{cluster}. As each cluster sends at least
one unit of flow to $R$ (with the exception of $S_1$ already
mentioned), there is a family of edge-disjoint paths $P_1,\ldots,P_l$,
where each $P_i$ goes from a node $s_i \in S_i$ to some node $r_i \in
R$.  This can be seen by adding dummy source nodes (one for each
$S_i$) adjacent to nodes in each $S_i$, and a single dummy sink node
adjacent from each $r \in R$ (a detailed proof is found in \cite{CKS-planar-constant}).

 We now define a new instance of \medp{} $G',H'$. $G'$ is obtained
 from $G$ by adding $\ell$ new nodes $u_1,\ldots,u_\ell$ with degree one,
 where $u_i$ is adjacent to $r_i$. The capacity of a new edge $r_iu_i$
 is $1$, and we re-define $P_i$ as extending to $u_i$.  We identify
 each terminal in $S$ with the $u_i$ associated with its cluster as
 follows. Let $\phi(s) \eqdef s$ if $s \notin S$ and $\phi(s) = u_i$
 if $s \in S_i$. Then let $\phi(H) \eqdef\{\phi(s)\phi(t)~:~st \in
 H\}$. These demands do not yet form a matching, so $H'$ is obtained
 from $\phi(H)$ by simply deporting each of the terminals in $S$ to
 new nodes forming leaves.

  We show how to transform $\bx$ into a fractional flow $\bx'$ in $G',H'$
  with congestion $5$, such that $\bx'$ has the same value as $\bx$. For
  that, we only extend the flow paths for the demands in $H' \setminus
  H$. Let $st \in H$ be such a demand and $s't' = \phi(s)\phi(t)$ its
  image. For any $st$-path $P$ with value $x_P$, let $P'$ be the path
  obtained from $P$ by:
\begin{itemize}\setlength{\itemsep}{0pt}
\item if $s \in S_i$ for some $i$, concatenate $P_i$ and the unique $ss_i$-path of $T_i$,
\item similarly if $t \in S_j$ for some $j$, concatenate $P_j$ and the unique $ss_j$-path of $T_j$.
\end{itemize}

Then, set $x'_{P'} \eqdef x_P$. Then $\bx'$ has the same value as $\bx$ by
construction, but has higher congestion.  Any of the edges $r_iu_i$
(or additional leaves from an $x_i$) have congestion at most $2$ by
construction; thus it is enough to focus on edges within $G$. The
original flow paths incur congestion of at most 1 on any edge, so we
address the added congestion from extending the flow paths. The edges
of any $P_i$ are charged by at most $2$ units (by terminals within
$S_i$) and each $T_i$ is also charge by at most $2$ units. As an edge
may be contained in at most one $P_i$ and at most one $T_i$ the extra
congestion is bounded by $4$. Hence the total congestion of $x'$ is at
most $5$, and in particular, this implies that the fractional optimal
solution in $G'$ is at least $\frac{1}{5} \opt$.

 Suppose now that we have an integral solution to \medp{} for $G',H'$
 with congestion $c$. We show how to transform it into a solution for
 $G,H$ of the same value with congestion $c+2$. Since we can assume
 the flow paths used are simple, we only need to address the flow
 paths for demands in $H' \setminus H$. Let $P$ be any path in the
 solution satisfying a commodity associated with a node $s' =
 \phi(s)$, where $s$ is in $S_i$. Then we extend $P$ by concatenating
 $P_i$ and the unique $s_is$-path of $T_i$ to it.  We may also
 shortcut this to obtain a simple path. Again, this clearly defines a
 solution of same value to the original problem. Since the capacity of
 $r_iu_i$ is one, we use each $P_i$ at most once, and a path in $T_i$
 is used for only one such $s \in S_i$.  As the paths $P_i$ are
 disjoint, and the subtrees are disjoint, each edge is used at
 most $c+2$ times: $c$ for the original routing, $1$ for the $P_i$
 paths, and $1$ for the paths inside the $T_i$'s.
\end{proof}

\subsection{Sparsifiers}

Let $G=(V,E)$ be a (multi) graph and let $S \subset V$. We are
interested in creating a graph $H$ only on the node set $S$ that acts
as a proxy for routing between nodes in $S$ in the original graph $G$.
The notion of sparsifiers, introduced in
\cite{moitra2009approximation}, has many possible formulations depending on the  various applications.
For instance, a Gomory-Hu Tree can be viewed as a sparsifier which encodes pairwise maximum flows in a graph.
We are interested in the following model.
We
say that $H=(S,E_H)$ is a {\em $(\sigma,\rho)$-sparsifier} for $S$ in $G$ if
the following properties are true:
\begin{itemize}
\item any feasible (fractional) multicommodity flow in $G$ with the
  endpoints in $S$ is (fractionally) routable in $H$ with congestion
  at most $\sigma$
\item any integer multicommodity flow  in $H$ is integrally routable
  in $G$ with congestion $\rho$.
\end{itemize}

Existing sparsifier results mostly focus on fractional routing (or cut
preservation) while we need {\em integer} sparsifiers in the sense of
the second point above. Chuzhoy~\cite{Chuzhoy12} developed an integer
sparsifier result but it uses Steiner nodes and has limitations that
preclude its direct use in our setting. Instead, a simple
argument based on splitting-off 
gives the following weak
sparsifier result that suffices for our purposes.

\begin{theorem}
  \label{thm:sparsifier}
  Let $G=(V,E)$ be a graph and $S \subset V$. There
  is a $(|S|^2,2)$-sparsifier for $S$ in $G$.
\end{theorem}

\begin{proof}
  First, by standard $T$-join theory, $G$ contains a subset $E'$ of
  edges, such that if we add an extra copy of each such edge, we
  obtain an Eulerian graph $G'$.  We may now apply splitting off
  repeatedly at the (even degree) nodes of $V-S$.  Each operation
  preserves the minimum cut between any pair of nodes $u,v \in S$.
  This ultimately results in a (multi) graph $H=(S,F)$ on $S$.
  We claim that $H$ is the desired sparsifier.

  Note that any integral routing in $H$ can easily mapped to an
  integral routing in $G'$ since the edges in $F$ map to
  edge-disjoint paths in $G'$.  Since $G'$ had potentially an extra
  copy of an edge from $G$ we see that $\rho = 2$.

  Now consider any fractional multicommodity flow in $G$ between nodes
  in $S$. Say $d(uv)$ flow is routed between $u,v \in S$.  Then
  $d(u,v) \le \lambda_G(u,v)$ where $\lambda_G(u,v)$ is the capacity
  of a min $u$-$v$ cut in $G$. Since the splitting-off operation
  preserved connectivity, $\lambda_H(u,v) \geq \lambda_G(u,v)$, hence
  we can route $d(u,v)$ flow between $u$ and $v$ in $H$.
  However, we have $|S|(|S|-1)/2$ distinct pairs of nodes in $S$
  and routing their flows simultaneously in $H$ can result
  in a congestion of at most $|S|(|S|-1) \le |S|^2$ since each
  individual flow can be feasibly routed in $H$. This shows that
  $\sigma \le |S|^2$.
\end{proof}

\begin{remark}
 \label{rem:sparsifier}
  The proof of the preceding theorem shows that the congestion
  parameter $\rho$ can be chosen to be an additive $1$ if $G$ is a
  capacitated graph.
\end{remark}

\subsection{Routings through a small set of nodes}

We use as a black box the following result from Section 3.1\ in
\cite{CKS-sqrtn}.
\begin{prop}
\label{prop:oldsinglenode}
Let $G,H$ be a \medp instance and let $\bx$ be a fractional solution
such that there is a node $v$ that is contained in every flow path
with positive flow. Then there is a polynomial time algorithm
that routes at least $\frac{1}{12}\sum_i x_i$ pairs
from $H$ on edge-disjoint paths.
\end{prop}

\begin{remark}
  The bound of $1/12$ in the preceding proposition is not explicitly
  stated in \cite{CKS-sqrtn} but can be inferred from the arguments.
\end{remark}


Now suppose that instead of a single node $v$, there is a subset $S$
that intersects every flow path in a fractional solution $\bx$. It is
then easy to see that there is a node $v$ that intersects flow paths
of total value at least $\sum_i x_i/|S|$. We can then apply the preceding
proposition to claim that we can route $\frac{1}{12|S|}\sum_i x_i$ pairs.
We combine this with a simple re-routing argument that is relevant
to our algorithm to obtain the following.

\begin{prop}
\label{prop:withreroute}
Let $G,H$ be a matching instance of \medp and let $\bx$ be a feasible
fractional solution for it.  Suppose that there is also a second flow
that routes at least $x_i/\alpha$ flow from each terminal to some $S
\subseteq V$ where $\alpha \ge 1$. Then there is an integral routing
of at least $\frac{1}{36\alpha |S|}\sum_i x_i$ pairs.
\end{prop}

\begin{proof}
  Let $v \in S$ be the terminal which receives the most flow. Clearly
  this is of value at least $\frac{\sum_i x_i}{\alpha |S|}$. Consider
  a pair $s_it_i$ such that one of the end points, say $s_i$ sends
  $y_i \le x_i/\alpha$ flow to $v$. The other end point $t_i$ may send
  less than $y_i$ (or no flow) to $v$. We may then create a $y_i$ flow
  from $t_i$ to $v$ by using the $x_i$-flow between $s_i,t_i$ and the
  flow from $s_i$ to $v$. It is easy to see that overlaying all of the
  flows will cause capacities to be violated by a factor of at most
  $3$; we scale down the flows by a factor of $3$ to satisfy the
  capacity constraints. Via this process we can find a new fractional
  solution $\bx'$ such that (1) all the flow paths contain $v$ and (2)
  $\sum_i x'_i \ge \frac{\sum_i y_i}{3} \ge \frac{\sum_i x_i}{3\alpha |S|}$.
  The result now follows by applying Proposition~\ref{prop:oldsinglenode}.
\end{proof}

\fi

\section{\medp in $k$-sums over a family $\cG$}
\label{sec:ksums}
The goal of this section is to prove
Theorem~\ref{thm:ksums}. Throughout, we assume $\cG$ is a
minor closed family, and we wish to prove bounds for the family
$\cG_k$ obtained by $k$-sums.  In particular, we assume that
every subgraph on $k$ nodes is included in $\cG$.

Let $\cA$ be an algorithm/oracle that has the following
property: given a \medp instance on a graph $G \in \cG$ it
integrally routes $h$ pairs with congestion $\beta$ where $h$ is at
least a $1/\alpha$ fraction of the value of an optimum fractional
solution to that instance. We call $\mathcal{A}$ an {\em $(\alpha,\beta)$-oracle}.
We describe an algorithm using $\mathcal{A}$ to approximate \medp{} on
$\cG_k$. The proof is via induction on the width of a
decomposition and the number of nodes. One basic step is to take a
sparse cut $S$, lose all the flow crossing that cut, and recurse on
both sides. We need to make our recursion on treewidth on the side $S$
to which we charge the flow lost by cutting the graph. On the other
side, $V \setminus S$, we  simply recurse on the number of
nodes. The main difficulty is to show how the treewidth is
decreased on the $S$ side. The trick is a trade-off between the main
treewidth parameter and some connectivity properties in parts of the
graph with higher treewidth. To drive this we need a more refined
induction hypothesis rather than basing it only on the width of a tree
decomposition.

Given $k \leq p$, a \emph{$p$-degenerate $k$-tree decomposition} over
$\cG$ of a connected graph $G$ is a rooted tree decomposition
$\cT$ where some leaves (nodes of degree $1$) of the tree are
labelled {\em degenerate} and:
\begin{itemize}
\item for every node $X$ of $\cT$, either $G[[X]] \in \cG$ or $X$ is a
  degenerate leaf (in which case $G[[X]]$ may be arbitrary),
\item the separator corresponding to any edge $uv \in \cT$ is  of size at most $k$,
unless it is incident to a degenerate leaf, in which case it may be up to size $p$.
\end{itemize}

A pendant leaf is not necessarily degenerate but if it is not, then it
corresponds to a graph in $\cG$.  We use $(k,p)$-tree decomposition as
a shorthand notation. We call a multiflow in such a graph $G$ {\em
  flush} with the decomposition if every flow path that terminates at
some node in a degenerate leaf $L$, also intersects $S_L$ (we recall
that $S_L$ is the separator $V(L) \cap V(X)$ where $X$ is the parent
node of $L$).


We may think of graphs with such flush / degenerate decompositions as
having an {\em effective treewidth} of size $k$. In effect, we can
ignore that some separators can be larger, because we have the
separate property of flushness which we can leverage via results such
as Proposition~\ref{prop:oldsinglenode}.

The next theorem describes the algorithm for converting an LP solution
on some graph in $\cG_p$, into an integral routing. As we process this
solution, the tree decomposition becomes degenerate with the value of
$p$ fixed, $k$ is gradually reduced to $0$. We will assume that $p >
0$ for if $k=p=0$ all the separators are empty and the given connected
graph $G$ is in $\cG$ and we can simply use $\mathcal{A}$.

\begin{theorem}
  \label{thm:ksum-main}
  Let $\mathcal{A}$ be an $(\alpha,\beta)$ oracle for \medp in
  a minor-closed family $\cG$. Let $G$ be graph with a $(k,p)$-tree decomposition $\cT$ and
  suppose that $G,H$ is an instance of \medp with a fractional
  solution $\bx$ that is flush with $\cT$.  Then there is an algorithm
  with oracle $\mathcal{A}$, which computes an integral multicommodity
  flow with congestion $\beta+3$ and value $\gamma = \frac{\sum_i
    x_i}{216 \cdot \alpha p^2 3^{k}}$.  Moreover, this algorithm
  can be used to obtain the following:

\begin{enumerate}
\item an $LP$-based approximation algorithm with ratio
  $O(\alpha p^2 3^p)$ and congestion $\beta+3$ for
  \medp{} in $\cG_{p}$
\item an algorithm with approximation ratio $O((p+1) 3^p)$ and congestion
  $1$ for the class of graphs of treewidth $p$.
\end{enumerate}
\end{theorem}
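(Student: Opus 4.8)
The plan is to prove the main statement by induction on the pair $(k, |V(G)|)$ ordered lexicographically, keeping $p$ fixed. The base case is $k = 0$: every separator in the decomposition has size $0$ except those incident to degenerate leaves. Since $G$ is connected, a size-$0$ separator means the tree decomposition has only a single non-degenerate part plus possibly degenerate leaves hanging off nodes. The flushness property says every flow path terminating in a degenerate leaf $L$ crosses $S_L$, which has size at most $p$; so the flow restricted to the degenerate leaves is ``captured'' by a small node set, and Proposition~\ref{prop:withreroute} (with $\alpha$ there equal to $1$ and $|S| \le p$, but over all degenerate leaves we may need to union the separators, giving a set of size at most $p$ times the number of leaves — this needs care, see below) lets us route a $1/(36 p)$-ish fraction of the flow touching degenerate leaves. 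The flow that stays entirely inside the single $\cG$-part is handled directly by the oracle $\mathcal{A}$, losing a factor $\alpha$ and incurring congestion $\beta$. Comparing the two contributions and taking whichever is larger gives the base-case constant.

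For the inductive step with $k \ge 1$, I would pick a \emph{sparse cut}: among all separators $V_e$ of size at most $k$ in $\cT$, or a sparsest cut of the LP solution, find one that is crossed by at most (say) a $1/2$ fraction of the total flow value — more precisely, I want a separator $S = V_e$ such that the flow crossing it is small relative to $x(S)$, the total marginal landing on $S$. We then \textbf{discard all flow crossing $S$}. This splits $G$ along $\cT_e$ into $G_e$ (the side away from the root) and $\overline{G_e}$. On the side $\overline{G_e}$ we recurse on the number of nodes, with the same $k$: this is legitimate since the instance only shrank. On the side $G_e$, the point is that we can \emph{reduce the effective treewidth}. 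Here is the crux: using Lemma~\ref{lem:rr} (moving terminals), we route the surviving flow in $G_e$ so that all terminals get pushed either toward a small set $R$ inside $S$, or we recognize that the terminals in $G_e$ that cannot reach $S$ form an instance living in a graph where the separator $S$ can be \emph{deleted} — dropping $S$ from the relevant bags reduces some separator sizes, and relabelling the affected bag as a \emph{degenerate leaf} absorbs the remaining structure while preserving flushness (this is exactly why the induction hypothesis carries the degenerate-leaf bookkeeping rather than plain treewidth). Thus $k$ drops by at least $1$ on the $G_e$ side, and we recurse there.

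The main obstacle, and where I expect the delicate work to be, is the \textbf{bookkeeping that makes flushness a genuine invariant under the cut-and-recurse step}, together with the trade-off that justifies calling this a reduction in effective width. Concretely: after cutting at $S$ and moving terminals via Lemma~\ref{lem:rr}, I must argue that (a) the flow crossing $S$ that I discarded is at most a constant fraction of what I keep, chargeable against $x(S)$; (b) on the high-treewidth side $G_e$, the flow either concentrates onto a set of size $O(p)$ — invoking Proposition~\ref{prop:withreroute} to finish — or genuinely lives in a $(k-1,p)$-decomposition that is still flush; and (c) the congestion additions from Lemma~\ref{lem:rr} ($+2$) and the sparsifier / oracle steps stay within the claimed $\beta + 3$ budget and do not accumulate over the recursion depth. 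The recursion on node count is unbounded in depth, so it is essential that the congestion cost is incurred \emph{once} (at the leaves, via $\mathcal{A}$) rather than once per level — this is why the design routes through small node sets and hands off to the oracle only at the bottom. The constant $216 \cdot \alpha p^2 3^k$ then falls out: the $3^k$ from the per-level factor-$3$ loss in the $k$-recursion, the $p^2$ from $|S| \le p$ appearing (roughly squared) in the Proposition~\ref{prop:withreroute} / sparsifier estimates, and the numerical $216 = 6 \cdot 36$ from combining the $1/6$-type sparse-cut loss with the $1/36$ from Proposition~\ref{prop:withreroute}. Finally, parts (1) and (2) follow by instantiating: for (1) take $k = p$ and a $(p,p)$-decomposition over $\cG$ with no degenerate leaves, which any $G \in \cG_p$ admits (so flushness is vacuous); for (2), $\cG$ is the class of graphs on at most $p+1$ nodes, for which the LP is exactly solvable (trees, essentially) giving $\alpha = O(1)$, $\beta = 1$ — but here one must check the congestion stays at $1$, which works because for treewidth the separators literally \emph{are} cliques and the sparsifier/oracle steps collapse to trivial routings, avoiding the $+3$.
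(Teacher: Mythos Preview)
Your high-level shape is right --- induction on $(k,|V(G)|)$, cut-and-recurse with $k$ dropping on one side, congestion incurred only at the leaves --- but two specific mechanisms are missing or wrong, and each is load-bearing.

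\textbf{Base case.} For general $\cG$ you never explain how the oracle $\mathcal{A}$ gets invoked. When $k=0$ the tree is a star with a central bag $X$ (with $G[[X]]\in\cG$) and degenerate leaves $G_i$. You propose to route through the union of the separators $S_i$, and you already see the problem: that union may be huge. The paper's fix is not to route through the $S_i$ but to \emph{replace} each $G_i$ by a $(p^2,2)$-sparsifier on $S_i$ (Theorem~\ref{thm:sparsifier}); the resulting graph is a subgraph of $G[[X]]$, hence in $\cG$, and only now can $\mathcal{A}$ be applied. The $p^2$ in the final bound is the sparsifier's congestion parameter, not a ``$|S|$ squared'' from Proposition~\ref{prop:withreroute}. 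Lemma~\ref{lem:rr} is used here (once, to move terminals onto the $S_i$) and this, together with the sparsifier's $+1$, is exactly where the $+3$ congestion comes from. For treewidth the sparsifier is unnecessary because $G^*$ itself has at most $p+1$ nodes, so every flow path hits it and Proposition~\ref{prop:oldsinglenode} finishes with congestion $1$ --- this is why part~(2) avoids the $+3$, not because ``separators are cliques.''

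\textbf{Inductive step.} You cut at a tree-separator $V_e$, but the paper does not: it first tests whether every terminal can send $x(v)/6$ flow to $V_e$ (any size-$k$ separator); if yes, Proposition~\ref{prop:withreroute} finishes immediately. If no, max-flow/min-cut hands you a \emph{general} sparse node cut $U$ with $c(\delta(U))<x(U)/6$, chosen \emph{minimal}. The minimality is what you are missing for the $k$-drop: inside $U$, any subtree of $\cT_U$ rooted at a size-$k$ separator $V_{e'}$ must have its terminals routable (at rate $x(v)/6$) to $V_{e'}$, for otherwise a smaller sparse cut would exist. Contracting each such maximal subtree to a new degenerate leaf gives a $(k-1,p)$-decomposition, and the routability just established is precisely what makes these new leaves usable (either the residual flow is still $\ge |f_U|/2$ and flush, or most flow is trapped inside the new leaves where Proposition~\ref{prop:withreroute} applies directly). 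Your plan to invoke Lemma~\ref{lem:rr} in the inductive step would accumulate congestion over the recursion and break the $\beta+3$ bound; the paper uses it only in the base case.
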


\iffull
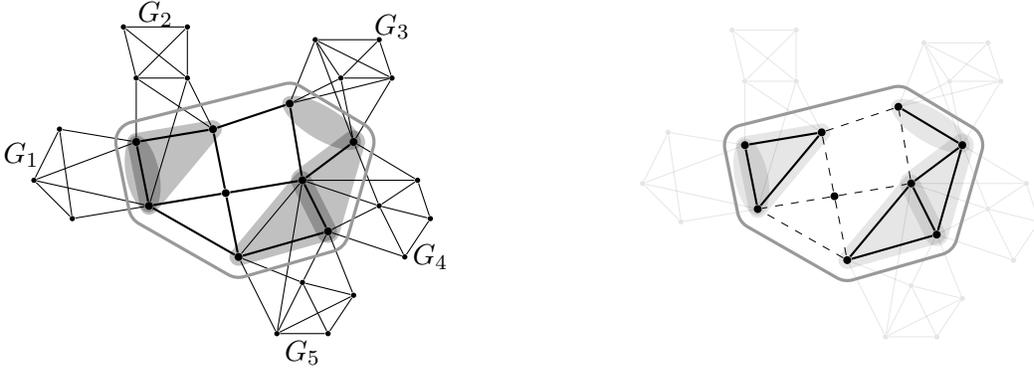
\begin{figure}
\label{fig:base-case}
\begin{tabular}{cc}

\begin{minipage}{0.48\textwidth}

\begin{tikzpicture}[x=0.17cm,y=0.17cm,>=latex]

\vertex{black} (nx32y15) at (32,15) {};
\vertex{black} (nx34y19) at (34,19) {};
\vertex{black} (nx36y15) at (36,15) {};
\vertex{black} (nx38y18) at (38,18) {};
\vertex{black} (nx42y21) at (42,21) {};
\vertex{black} (nx44y24) at (44,24) {};
\vertex{black} (nx43y27) at (43,27) {};
\vertex{black} (nx40y25) at (40,25) {};
\vertex{black} (nx41y35) at (41,35) {};
\vertex{black} (nx40y38) at (40,38) {};
\vertex{black} (nx35y38) at (35,38) {};
\vertex{black} (nx37y35) at (37,35) {};
\vertex{black} (nx13y27) at (13,27) {};
\vertex{black} (nx16y24) at (16,24) {};
\vertex{black} (nx15y31) at (15,31) {};
\vertex{black} (nx25y39) at (25,39) {};
\vertex{black} (nx25y35) at (25,35) {};
\vertex{black} (nx21y35) at (21,35) {};
\vertex{black} (nx20y39) at (20,39) {};

\Vertex{black} (nx34y27) at (34,27) {};
\Vertex{black} (nx28y26) at (28,26) {};
\Vertex{black} (nx29y21) at (29,21) {};
\Vertex{black} (nx36y23) at (36,23) {};
\Vertex{black} (nx38y30) at (38,30) {};
\Vertex{black} (nx33y33) at (33,33) {};
\Vertex{black} (nx27y31) at (27,31) {};
\Vertex{black} (nx21y30) at (21,30) {};
\Vertex{black} (nx22y25) at (22,25) {};

\fill[black,nearly transparent]
      (29.19,20.33) arc[start angle = 286, end angle = 140, radius = 0.70]
   -- (33.46,27.45) arc[start angle = 140, end angle =  27, radius = 0.70]
   -- (36.63,23.31) arc[start angle =  27, end angle = -74, radius = 0.70]
   -- cycle;
\draw[black] (nx34y27) -- (nx29y21);
\draw[black] (nx34y27) -- (nx32y15);
\draw[black] (nx32y15) -- (nx38y18);
\draw[black] (nx36y15) -- (nx34y19);
\draw[black] (nx36y15) -- (nx38y18);
\draw[black] (nx32y15) -- (nx36y15);
\draw[black] (nx32y15) -- (nx29y21);
\draw[black] (nx34y19) -- (nx32y15);
\draw[black] (nx38y18) -- (nx34y19);
\draw[black] (nx36y23) -- (nx38y18);
\draw[black] (nx34y19) -- (nx36y23);
\draw[black] (nx29y21) -- (nx34y19);

\fill[black,nearly transparent] 
      (37.46,30.72) arc[start angle = 127, end angle = -16, radius = 0.90]
   -- (36.87,22.75) arc[start angle = 344, end angle = 207, radius = 0.90]
   -- (33.20,26.60) arc[start angle = 207, end angle = 127, radius = 0.90]
   -- cycle;
\draw[black] (nx40y25) -- (nx34y27);
\draw[black] (nx43y27) -- (nx34y27);
\draw[black] (nx40y25) -- (nx44y24);
\draw[black] (nx40y25) -- (nx42y21);
\draw[black] (nx43y27) -- (nx40y25);
\draw[black] (nx44y24) -- (nx43y27);
\draw[black] (nx42y21) -- (nx44y24);
\draw[black] (nx36y23) -- (nx42y21);
\draw[black] (nx40y25) -- (nx36y23);
\draw[black] (nx38y30) -- (nx40y25);
\draw[black] (nx38y30) -- (nx43y27);

\fill[black,nearly transparent] (35.5, 31.5) circle[x radius = 3.5, y radius = 1.3, rotate = -31] {};
\draw[black] (nx41y35) -- (nx40y38);
\draw[black] (nx37y35) -- (nx41y35);
\draw[black] (nx35y38) -- (nx37y35);
\draw[black] (nx40y38) -- (nx35y38);
\draw[black] (nx37y35) -- (nx40y38);
\draw[black] (nx37y35) -- (nx38y30);
\draw[black] (nx33y33) -- (nx37y35);
\draw[black] (nx35y38) -- (nx41y35);
\draw[black] (nx38y30) -- (nx35y38);
\draw[black] (nx41y35) -- (nx33y33);
\draw[black] (nx38y30) -- (nx41y35);
\draw[black] (nx33y33) -- (nx35y38);

\fill[black,nearly transparent]
      (22.54,24.55) arc[start angle = 320, end angle = 191, radius = 0.70]
   -- (20.31,29.86) arc[start angle = 191, end angle =  99, radius = 0.70]
   -- (26.88,31.69) arc[start angle =  99, end angle = -40, radius = 0.70]
   -- cycle;
\draw[black] (nx25y35) -- (nx22y25);
\draw[black] (nx21y35) -- (nx20y39);
\draw[black] (nx25y35) -- (nx25y39);
\draw[black] (nx21y35) -- (nx25y35);
\draw[black] (nx25y39) -- (nx21y35);
\draw[black] (nx20y39) -- (nx25y39);
\draw[black] (nx25y35) -- (nx20y39);
\draw[black] (nx27y31) -- (nx25y35);
\draw[black] (nx21y35) -- (nx27y31);
\draw[black] (nx21y30) -- (nx21y35);

\fill[black,nearly transparent] (21.5, 27.5) circle[x radius = 3.05, y radius = 1.3, rotate = -78.7] {};
\draw[black] (nx13y27) -- (nx15y31);
\draw[black] (nx16y24) -- (nx13y27);
\draw[black] (nx15y31) -- (nx16y24);
\draw[black] (nx21y30) -- (nx15y31);
\draw[black] (nx22y25) -- (nx16y24);
\draw[black] (nx13y27) -- (nx22y25);
\draw[black] (nx21y30) -- (nx13y27);

\draw[very thick, black!40]
      (21.21,23.61) arc[start angle = 240, end angle = 191, radius = 1.60]
   -- (19.43,29.69) arc[start angle = 191, end angle = 104, radius = 1.60]
   -- (32.61,34.55) arc[start angle = 104, end angle =  59, radius = 1.60]
   -- (38.82,31.37) arc[start angle =  59, end angle = -16, radius = 1.60]
   -- (37.54,22.56) arc[start angle = -16, end angle = -74, radius = 1.60]
   -- (29.44,19.46) arc[start angle = 286, end angle = 240, radius = 1.60]
   -- cycle
;

\draw[black,thick] (nx36y23) -- (nx34y27);
\draw[black,thick] (nx29y21) -- (nx36y23);
\draw[black,thick] (nx29y21) -- (nx22y25);
\draw[black,thick] (nx28y26) -- (nx29y21);
\draw[black,thick] (nx22y25) -- (nx21y30);
\draw[black,thick] (nx28y26) -- (nx22y25);
\draw[black,thick] (nx28y26) -- (nx34y27);
\draw[black,thick] (nx27y31) -- (nx28y26);
\draw[black,thick] (nx34y27) -- (nx33y33);
\draw[black,thick] (nx38y30) -- (nx34y27);
\draw[black,thick] (nx27y31) -- (nx33y33);
\draw[black,thick] (nx21y30) -- (nx27y31);

\path (12,29) node {$G_1$};
\path (22.5,40) node {$G_2$};
\path (41,39) node {$G_3$};
\path (44,21) node {$G_4$};
\path (34,13.5) node {$G_5$};

\end{tikzpicture}

\end{minipage} &

\begin{minipage}{0.48\textwidth}

\begin{tikzpicture}[x=0.17cm,y=0.17cm,>=latex]

\vertex{black!10} (nx32y15) at (32,15) {};
\vertex{black!10} (nx34y19) at (34,19) {};
\vertex{black!10} (nx36y15) at (36,15) {};
\vertex{black!10} (nx38y18) at (38,18) {};
\vertex{black!10} (nx42y21) at (42,21) {};
\vertex{black!10} (nx44y24) at (44,24) {};
\vertex{black!10} (nx43y27) at (43,27) {};
\vertex{black!10} (nx40y25) at (40,25) {};
\vertex{black!10} (nx41y35) at (41,35) {};
\vertex{black!10} (nx40y38) at (40,38) {};
\vertex{black!10} (nx35y38) at (35,38) {};
\vertex{black!10} (nx37y35) at (37,35) {};
\vertex{black!10} (nx13y27) at (13,27) {};
\vertex{black!10} (nx16y24) at (16,24) {};
\vertex{black!10} (nx15y31) at (15,31) {};
\vertex{black!10} (nx25y39) at (25,39) {};
\vertex{black!10} (nx25y35) at (25,35) {};
\vertex{black!10} (nx21y35) at (21,35) {};
\vertex{black!10} (nx20y39) at (20,39) {};

\Vertex{black} (nx34y27) at (34,27) {};
\Vertex{black} (nx28y26) at (28,26) {};
\Vertex{black} (nx29y21) at (29,21) {};
\Vertex{black} (nx36y23) at (36,23) {};
\Vertex{black} (nx38y30) at (38,30) {};
\Vertex{black} (nx33y33) at (33,33) {};
\Vertex{black} (nx27y31) at (27,31) {};
\Vertex{black} (nx21y30) at (21,30) {};
\Vertex{black} (nx22y25) at (22,25) {};

\fill[black,very nearly transparent]
      (29.19,20.33) arc[start angle = 286, end angle = 140, radius = 0.70]
   -- (33.46,27.45) arc[start angle = 140, end angle =  27, radius = 0.70]
   -- (36.63,23.31) arc[start angle =  27, end angle = -74, radius = 0.70]
   -- cycle;
\draw[black,very nearly transparent] (nx34y27) -- (nx29y21);
\draw[black,very nearly transparent] (nx34y27) -- (nx32y15);
\draw[black,very nearly transparent] (nx32y15) -- (nx38y18);
\draw[black,very nearly transparent] (nx36y15) -- (nx34y19);
\draw[black,very nearly transparent] (nx36y15) -- (nx38y18);
\draw[black,very nearly transparent] (nx32y15) -- (nx36y15);
\draw[black,very nearly transparent] (nx32y15) -- (nx29y21);
\draw[black,very nearly transparent] (nx34y19) -- (nx32y15);
\draw[black,very nearly transparent] (nx38y18) -- (nx34y19);
\draw[black,very nearly transparent] (nx36y23) -- (nx38y18);
\draw[black,very nearly transparent] (nx34y19) -- (nx36y23);
\draw[black,very nearly transparent] (nx29y21) -- (nx34y19);

\fill[black,very nearly transparent] 
      (37.46,30.72) arc[start angle = 127, end angle = -16, radius = 0.90]
   -- (36.87,22.75) arc[start angle = 344, end angle = 207, radius = 0.90]
   -- (33.20,26.60) arc[start angle = 207, end angle = 127, radius = 0.90]
   -- cycle;
\draw[black,very nearly transparent] (nx40y25) -- (nx34y27);
\draw[black,very nearly transparent] (nx43y27) -- (nx34y27);
\draw[black,very nearly transparent] (nx40y25) -- (nx44y24);
\draw[black,very nearly transparent] (nx40y25) -- (nx42y21);
\draw[black,very nearly transparent] (nx43y27) -- (nx40y25);
\draw[black,very nearly transparent] (nx44y24) -- (nx43y27);
\draw[black,very nearly transparent] (nx42y21) -- (nx44y24);
\draw[black,very nearly transparent] (nx36y23) -- (nx42y21);
\draw[black,very nearly transparent] (nx40y25) -- (nx36y23);
\draw[black,very nearly transparent] (nx38y30) -- (nx40y25);
\draw[black,very nearly transparent] (nx38y30) -- (nx43y27);

\fill[black,very nearly transparent] (35.5, 31.5) circle[x radius = 3.5, y radius = 1.3, rotate = -31] {};
\draw[black,very nearly transparent] (nx41y35) -- (nx40y38);
\draw[black,very nearly transparent] (nx37y35) -- (nx41y35);
\draw[black,very nearly transparent] (nx35y38) -- (nx37y35);
\draw[black,very nearly transparent] (nx40y38) -- (nx35y38);
\draw[black,very nearly transparent] (nx37y35) -- (nx40y38);
\draw[black,very nearly transparent] (nx37y35) -- (nx38y30);
\draw[black,very nearly transparent] (nx33y33) -- (nx37y35);
\draw[black,very nearly transparent] (nx35y38) -- (nx41y35);
\draw[black,very nearly transparent] (nx38y30) -- (nx35y38);
\draw[black,very nearly transparent] (nx41y35) -- (nx33y33);
\draw[black,very nearly transparent] (nx38y30) -- (nx41y35);
\draw[black,very nearly transparent] (nx33y33) -- (nx35y38);

\fill[black,very nearly transparent]
      (22.54,24.55) arc[start angle = 320, end angle = 191, radius = 0.70]
   -- (20.31,29.86) arc[start angle = 191, end angle =  99, radius = 0.70]
   -- (26.88,31.69) arc[start angle =  99, end angle = -40, radius = 0.70]
   -- cycle;
\draw[black,very nearly transparent] (nx25y35) -- (nx22y25);
\draw[black,very nearly transparent] (nx21y35) -- (nx20y39);
\draw[black,very nearly transparent] (nx25y35) -- (nx25y39);
\draw[black,very nearly transparent] (nx21y35) -- (nx25y35);
\draw[black,very nearly transparent] (nx25y39) -- (nx21y35);
\draw[black,very nearly transparent] (nx20y39) -- (nx25y39);
\draw[black,very nearly transparent] (nx25y35) -- (nx20y39);
\draw[black,very nearly transparent] (nx27y31) -- (nx25y35);
\draw[black,very nearly transparent] (nx21y35) -- (nx27y31);
\draw[black,very nearly transparent] (nx21y30) -- (nx21y35);

\fill[black,very nearly transparent] (21.5, 27.5) circle[x radius = 3.05, y radius = 1.3, rotate = -78.7] {};
\draw[black,very nearly transparent] (nx13y27) -- (nx15y31);
\draw[black,very nearly transparent] (nx16y24) -- (nx13y27);
\draw[black,very nearly transparent] (nx15y31) -- (nx16y24);
\draw[black,very nearly transparent] (nx21y30) -- (nx15y31);
\draw[black,very nearly transparent] (nx22y25) -- (nx16y24);
\draw[black,very nearly transparent] (nx13y27) -- (nx22y25);
\draw[black,very nearly transparent] (nx21y30) -- (nx13y27);

\draw[very thick, black!40]
      (21.21,23.61) arc[start angle = 240, end angle = 191, radius = 1.60]
   -- (19.43,29.69) arc[start angle = 191, end angle = 104, radius = 1.60]
   -- (32.61,34.55) arc[start angle = 104, end angle =  59, radius = 1.60]
   -- (38.82,31.37) arc[start angle =  59, end angle = -16, radius = 1.60]
   -- (37.54,22.56) arc[start angle = -16, end angle = -74, radius = 1.60]
   -- (29.44,19.46) arc[start angle = 286, end angle = 240, radius = 1.60]
   -- cycle
;

\draw[black, dashed,very thin] (nx29y21) -- (nx22y25);
\draw[black, dashed,very thin] (nx28y26) -- (nx29y21);
\draw[black, dashed,very thin] (nx28y26) -- (nx22y25);
\draw[black, dashed,very thin] (nx28y26) -- (nx34y27);
\draw[black, dashed,very thin] (nx27y31) -- (nx28y26);
\draw[black, dashed,very thin] (nx34y27) -- (nx33y33);
\draw[black, dashed,very thin] (nx27y31) -- (nx33y33);

\draw[black,thick] (nx38y30) -- (nx34y27);
\draw[black,thick] (nx36y23) -- (nx34y27);
\draw[black,thick] (nx29y21) -- (nx34y27);
\draw[black,thick] (nx21y30) -- (nx27y31);
\draw[black,thick] (nx22y25) -- (nx21y30);
\draw[black,thick] (nx29y21) -- (nx36y23);
\draw[black,thick] (nx33y33) -- (nx38y30);
\draw[black,thick] (nx38y30) -- (nx36y23);
\draw[black,thick] (nx22y25) -- (nx27y31);

\end{tikzpicture}
\end{minipage}
\end{tabular}

\caption{Illustration of the base case with $k=0$ and $p = 3$: the graph
corresponding to each degenerate leaf is replaced by a sparsifier on the
associated separator.}
\end{figure}

\fi

The proof of the preceding theorem is somewhat long and technical and
 occupies the rest of the section.  To help the exposition we break
it up into several components.  The proof proceeds by induction on $k$
and the number of nodes.  The base case with $k=0$ is non-trivial
and we treat it first. 
\ifabstract
The reader may find Figures~\ref{fig:base-case} and \ref{fig:induction-step}
in the appendix helpful in following the proof.
\fi

\paragraph{The base case:} We can assume without loss of generality
that $G$ is connected. 
Throughout we assume a fixed $p \geq 1$ and consider a 
decomposition together with a flush
fractional routing $\bx$ as described.  Let $x: V \to \mathbb{R}$ be the
marginal values of $\bx$. Again we use $x_i$ to denote the common value
$x(s_i)=x(t_i)$. Hence $\val{\bx} = \sum_i x_i= \frac{1}{2} \sum_{v \in
  V(G)} x(v)$ where we use the notation $\val{\bx}$ to denote the value of
the flow $\bx$.

Now assume that $k=0$ and $p \ge 1$. We may assume that $\cT$ has more
than one node otherwise $G \in \cG$ and we can apply $\mathcal{A}$. 
Since $G$ is connected, each separator corresponding to
an edge of $\cT$ must be non-trivial. Since $k=0$, each edge of $\cT$
must be incident to a degenerate leaf. It follows that $\cT$ is a
single edge between two degenerate leaves or a star whose leaves are
all degenerate. If $\cT$ is a single edge between two degenerate
leaves, all flow paths intersect the separator of size $p$ associated
with the edge; hence by Proposition~\ref{prop:oldsinglenode}, there is
an integral routing of value at least $(\sum_i x_i)/(12p)$.  We
therefore restrict our attention to the case when $\cT$ is a star with
$\ell$ leaves. Let $G^* = G[X]$ where $X$ is the bag at the
center/root; we observe that $G^* \in \cG$. Let $X_i$ be the bag at
the $i$'th leaf.  We let $G_i$ denote the graph obtained from $G[X_i]$
after removing the edges between the separator nodes $S_i = X \cap
X_i$.


The base case of theorem assertion (2) on treewidth $p$ graphs holds
as follows. By flushness, all flow paths intersect $G^*$, which has at
most $p+1$ nodes. Hence there is some node which is receiving at least
$(\sum_i x_i)/(p+1)$ of this flow. Hence by
Proposition~\ref{prop:oldsinglenode}, there is a (congestion 1)
integral routing of value at least $(\sum_i x_i)/(12(p+1))$.


Now consider the general case with a minor-closed class $\cG$ and
$\cT$ a star whose center is $G^* \in \cG$.  We proceed in the
following steps:

\begin{enumerate}
\item Using the flushness property move the terminals in each $G_i$ to
  the separator $S_i$ that is contained in $G^*$ (using
  Lemma~\ref{lem:rr}).

\item In $G$ replace each $G_i$ by a $(p^2,2)$ sparsifier on $S_i$ to
  obtain a new graph $G'$. Via the sparsifier property,
  scale the flow in $G$ down by a factor of $p^2$ to obtain a corresponding
  feasible flow in $G'$.

\item Apply the algorithm $\mathcal{A}$ on the new instance in $G'$.

\item Transfer the routing in $G'$ to a routing in $G$ with
  additive $+1$ via the sparsifier property.

\item Use the second part of Lemma~\ref{lem:rr} to convert the routing
  in $G$ into a solution for our original instance before the terminals
  were moved (incur an additional $+2$ additive congestion).
\end{enumerate}

\noindent
We describe the steps in more detail. We observe that the graphs $G_i$
are edge-disjoint. The first step is a simple application of
Lemma~\ref{lem:rr} where for each $i,$ we move any terminals in
$G_i-S_i$ to the separator $S_i$ via clustering. This is
possible because of the flushness assumption; if $P$ is a flow path
with an endpoint in $G_i-S_i$ then that path intersects
$S_i$.  This  incurs a factor $5$ loss in the value of the new
flow we work with (and it  incurs an additive $2$ congestion when
we convert back to an integral solution for our original instance).
To avoid notational overload we let $f$ be the flow for the new instance
which is at least $\frac{1}{5}$th of the original flow. 

After the preceding step no node in $G_i-S_i$ is the end point of a terminal.
In step (2), we can simultaneously replace each $G_i$ by a
$(p^2,2)$-sparsifier $F_i$ on $S_i$ --- see
Theorem~\ref{thm:sparsifier}. Call the new instance $G'$ and note that
since we only added edges to the separators $S_i$, $G'$ is a subgraph
of $G[[X]]$ and hence $G' \in \cG$. At this step, we also need to
convert our flows in $G_i$'s to be flows in $G'$. The sparsifier
guarantees that any multicommodity flow on $S_i$ that is feasible in
$G_i$ can be routed in $F_i$ with congestion $p^2$. Hence, scaling the
flow down by $p^2$ guarantees its feasibility in $G'$.

We now work with the new flow $\bx'$ in the graph $G'$ and apply
$\mathcal{A}$ to obtain a routing of size $\val{\bx'}/\alpha \ge
\val{\bx}/(5p^2 \alpha)$ with congestion $\beta$. We must now convert
this integral routing to one in $G$. Again, for each $i$, there is an
embedded integral routing in $F_i$ which will be re-routed in
$G_i$. We incur an additive $1$ congestion for this; see
Remark~\ref{rem:sparsifier}. Finally we apply the second part of
Lemma~\ref{lem:rr} to route the original pairs in $G$ before they were moved
to the separators, incurring an additive congestion of $2$.

Thus the total number of pairs routed is at least $\val{\bx}/(5 \alpha p^2)$
and the overall congestion of the routing is $\beta+3$. This proves
the base case when $k=0$.

\paragraph{The induction step:}
Henceforth, we assume that $p \geq k > 0$ and that $\cT$ contains at
least one edge $e$ with the associated separator $V_e$ (the
intersection of the bags at the two end points of $e$) of size equal
to $k$; otherwise $\cT$ is a star with degenerate leaves as in the
base case, or we may use $k-1$.  We consider an easy setting when
there is a flow $g$ that simultaneously routes $x(v)/6$ amount from
each vertex $v$ to the set $V_e$. (Note that checking the existence of the
desired flow to $V_e$ can be done by a simple maximum-flow
computation.)  We then obtain an integral (congestion 1) flow of size
$(\sum_i x_i)/(216 k)$ via Proposition~\ref{prop:withreroute} which is
sufficient to establish the induction step for $k$.

Assume now that there is no such flow $g$. Then there is a cut $U
\subset V \setminus V_e$ with $c(U) := c(\delta(U)) < \frac{1}{6}
x(U)$.  We may assume that $U$ is minimal and central ($G[U]$ and $G[V
\setminus U]$ are connected). Such a cut can be recovered from
the maximum flow computation. We now work with a reduced flow $\bx'$
obtained from $\bx$ by eliminating any flow path that intersects
$\delta(U)$.  We also let $x'$ be the marginals for $\bx'$.  Obviously
we have
$$\val{\bx} - \val{\bx'} \leq c(U) < \frac{x(U)}{6}.$$

\noindent
Let $f_U,f_\bU$ be the flow vectors obtained from $\bx'$, where $f_U$
only uses the flow paths contained in $U$, and $f_\bU$ uses the flow
paths contained in $V \setminus U$. The idea is that we recurse on
$G[U]$ and $G[V\setminus U]$. We  modify the instance on $G[U]$ to
ensure that it has a $(k-1,p)$-tree decomposition and charge the
lost flow to this side. The recursion on $G[V\setminus U]$ is based on
reducing the number of nodes, the width is not reduced. Reducing the
width on the $U$ side and ensuring the flushness property is not
immediate; it requires us to modifying $f_U$ and in the process we may
lose further flow. We explain this process before analyzing the number
of pairs routed by the algorithm.

We note that $G[U]$ and $G[V \setminus U]$ easily admit $(k,p)$-tree
decompositions, by intersecting the nodes of $\cT$ with $U$ and $V
\setminus U$ respectively, and removing the empty nodes; recall $G[U]$
and $G[V\setminus U]$ are connected.  Denote these by $\cT_U$ and
$\cT_\bU$ respectively.  Degenerate leaves of $\cT_U$ and $\cT_\bU$
are the same as the degenerate leaves of $\cT$.  Some of these are
``split'' by the cut, otherwise they are simply assigned to either
$\cT_U$ or $\cT_\bU$. If split, the two ``halves'' go to appropriate
sides of the decomposition. The flows $f_U,f_\bU$ will be flush with
each such degenerate leaf (if the leaf is split, any flow path that
crosses the cut is removed).


\iffull
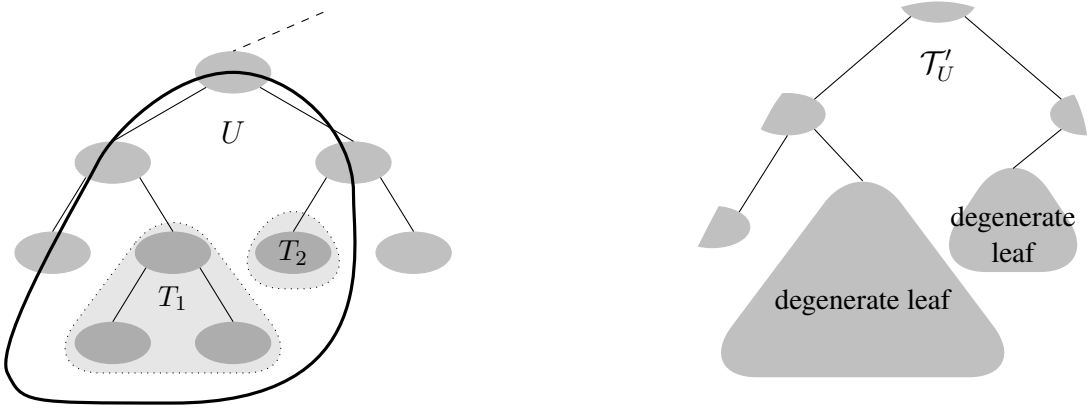
\begin{figure}
\begin{tabular}{cc}
\begin{minipage}{0.48\textwidth}

\begin{tikzpicture}[x=0.4cm,y=0.4cm,>=latex]

\nellipse{black,nearly transparent} (a) at (8,10) {};
\nellipse{black,nearly transparent} (b) at (4,7) {};
\nellipse{black,nearly transparent} (c) at (12,7) {};
\nellipse{black,nearly transparent} (d) at (2,4) {};
\nellipse{black,nearly transparent} (g) at (14,4) {};

\nellipse{black,nearly transparent} (e) at (6,4) {};
\nellipse{black,nearly transparent} (f) at (10,4) {};
\nellipse{black,nearly transparent} (h) at (4,1) {};
\nellipse{black,nearly transparent} (i) at (8,1) {};

\draw[dashed] (11,12) -- (a.north);
\draw
  (a.south west) -- (b.north)
  (a.south east) -- (c.north)
  (b.south west) -- (d.north)
  (c.south east) -- (g.north)
;

\draw
  (b.south east) -- (e.north)
  (c.south west) -- (f.north)
  (e.south west) -- (h.north)
  (e.south east) -- (i.north)
;

\draw[very thick,black] 
  (6,-1) .. controls (13,-1) and (12,4) ..
  (12,6) .. controls (12,8) and (10,10) ..
  (8,10) .. controls (6,10) and (4,8) ..
  (3.5,7) .. controls (3,6) and (0,1) ..
  (0.5,0) .. controls (1,-1) and (1,-1) .. (6,-1);

\path[rounded corners=24pt,fill=black,very nearly transparent]
  (6,6) -- (1.5,0) -- (10.5,0) -- cycle;
\path[rounded corners=24pt,draw=black,dotted]
  (6,6) -- (1.5,0) -- (10.5,0) -- cycle;
\node (t1) at (6,2.5) {$T_1$};

\path[rounded corners=24pt,fill=black,very nearly transparent]
 (10,6.4) -- (7.6,2.8) -- (12.4,2.8) -- cycle;
\path[rounded corners=24pt,draw=black,dotted]
 (10,6.4) -- (7.6,2.8) -- (12.4,2.8) -- cycle;
\node (t2) at (10,4) {$T_2$};

\node at (8,8) {{\large $U$}};

\end{tikzpicture}

\end{minipage} &

\begin{minipage}{0.48\textwidth}

\begin{tikzpicture}[x=0.5cm,y=0.5cm,>=latex]

\path[clip] 
  (6,-1) .. controls (13,-1) and (12,4) ..
  (12,6) .. controls (12,8) and (10,10) ..
  (8,10) .. controls (6,10) and (4,8) ..
  (3.5,7) .. controls (3,6) and (0,1) ..
  (0.5,0) .. controls (1,-1) and (1,-1) .. (6,-1);

\nellipse{black,nearly transparent} (a) at (8,10) {};
\nellipse{black,nearly transparent} (b) at (4,7) {};
\nellipse{black,nearly transparent} (c) at (12,7) {};
\nellipse{black,nearly transparent} (d) at (2,4) {};


\draw[dashed] (11,12) -- (a.north);
\draw
  (a.south west) -- (b.north east)
  (a.south east) -- (c.north west)
  (b.south) -- (d.north east)
  (10,6.4) ++(0,-12pt) -- (c.south west)
;

\draw
  (6,6) ++(0,-11pt)-- (b.south east);

\path[rounded corners=24pt,fill=black,nearly transparent]
  (6,6) -- (1.5,0) -- (10.5,0) -- cycle;
\node (t1) at (6,2) {degenerate leaf};

\path[rounded corners=24pt,fill=black,nearly transparent]
 (10,6.4) -- (7.6,2.8) -- (12.4,2.8) -- cycle;

\node[align=center] (t2) at (10,3.8) {degenerate\\ leaf};

\node at (8,8.3) {{\large $\mathcal{T}'_U$}};

\end{tikzpicture}
\end{minipage}
\end{tabular}

\caption{Induction step: reducing the effective treewidth of $G[U]$
by creating degenerate leaves. $T_1,T_2$ are maximal subtrees rooted at separators of size $k$, converted to degenerate leaves.}
\label{fig:induction-step}
\end{figure}
\fi

We  proceed in $\cT_\bU$ by induction on the number of
nodes. However, since we charge the lost flow to the cut (i.e., to
$\cT_U$) we modify $\cT_U$ to obtain a $(k-1,p)$-tree decomposition.
We state a lemma that accomplishes this.

\begin{lemma}
  \label{lem:ksum-recursion}
  For the residual instance on $G[U]$ with flow $f_U$ we can either
  route $\frac{1}{216k} \cdot \val{f_U}/2$ pairs integrally or find a
  $(k-1,p)$-tree decomposition $\cT'_U$ and a reduced flow
  vector $f'_U$ that is flush with $\cT'_U$ with $\val{f'_U} \ge \val{f_U}/2$.
\end{lemma}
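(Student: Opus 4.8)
The plan is to look at $\cT_U$ and try to reduce its effective width by converting the subtrees hanging below separators of size exactly $k$ into degenerate leaves. Concretely, I would first identify, in the rooted decomposition $\cT_U$, a maximal antichain of edges $e_1,\dots,e_m$ whose separators $V_{e_j}$ have size exactly $k$; ``maximal'' means no $e_j$ lies in the subtree below another. Let $T_j = \overline{\cT_U}$-complement be the subtree rooted just below $e_j$, i.e. the component of $\cT_U - e_j$ not containing the root, and let $U_j = V(T_j)$ be the corresponding vertex set inside $U$. The idea is that if we could turn each $T_j$ into a single degenerate leaf on vertex set $U_j$ with separator $S = V_{e_j}$, then all remaining separators of $\cT'_U$ would have size $< k$ (i.e.\ at most $k-1$), except those incident to degenerate leaves, which are allowed to be up to $p$; so $\cT'_U$ is a $(k-1,p)$-tree decomposition. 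Collapsing $T_j$ to a leaf is legitimate for the tree-decomposition axioms because $U_j$ together with the separator already forms a bag-union whose graph can be arbitrary in a degenerate leaf; property (iii) is waived for degenerate leaves.

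The subtlety — and the main obstacle — is the \emph{flushness} requirement: the surviving flow $f'_U$ must have every flow path that terminates inside a (new) degenerate leaf $T_j$ also intersect its separator $V_{e_j}$. The flow $f_U$ need not satisfy this, since a demand pair could have both endpoints deep inside $U_j$ and be routed entirely within $G[U_j]$ without touching $V_{e_j}$. So the plan is: for each $j$, look at the portion of $f_U$ consisting of flow paths contained in $G[U_j]$ and not meeting $V_{e_j}$ — call its value $v_j$ — versus the portion that does meet $V_{e_j}$. If $\sum_j v_j \le \val{f_U}/2$, then simply discard all those offending paths; the remainder $f'_U$ has value $\ge \val{f_U}/2$ and is flush with $\cT'_U$ by construction, and we are done. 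Otherwise $\sum_j v_j > \val{f_U}/2$, and the ``bad'' flow is large. In that case I want to route a constant fraction of it. The key observation is that each bad sub-instance lives entirely in $G[U_j]$, whose decomposition $T_j$ still has width $\le k$ with no degenerate leaves with separator $> k$ — wait, rather, I should exploit that $U_j$ itself sits below a separator of size $k$; but that does not directly give a single-node routing. Instead the cleaner route is: the cut-sparsity assumption says there is no flow sending $x(v)/6$ from every vertex of $U$ to $V_e$, and we chose the minimal central cut $U$; I would apply the same kind of argument recursively, or — simpler — note that the instances on the $U_j$'s are disjoint, so it suffices to route a constant fraction within one of them, and within $G[U_j]$ the separator $V_{e_j}$ of size $k$ is crossed by \emph{all} the remaining (non-bad) paths but that's the wrong set. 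The actual mechanism the authors likely intend: recurse. Since each $G[U_j]$ with decomposition $T_j$ has fewer nodes than $G$, by the induction hypothesis of Theorem~\ref{thm:ksum-main} we can integrally route $\gamma$-fraction of the bad flow in $G[U_j]$; summing over $j$ (the $G[U_j]$ are vertex-disjoint, so congestion does not accumulate) gives an integral routing of value $\ge \frac{1}{216\alpha p^2 3^k}\cdot \sum_j v_j > \frac{1}{216\alpha p^2 3^k}\cdot \val{f_U}/2$, which already exceeds the target $\frac{1}{216k}\val{f_U}/2$ when... hmm, that inequality goes the wrong way since $\alpha p^2 3^k \ge k$. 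So instead the bad flow must be handled by flushness within $T_j$: each $T_j$, rooted at a size-$k$ separator, can be made flush after moving terminals (Lemma~\ref{lem:rr}) to $V_{e_j}$ and applying Proposition~\ref{prop:oldsinglenode} (all bad paths crossing... no).

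Given the constant $\frac{1}{216k}$ in the lemma statement and that $216 = 36\cdot 6$ matches Proposition~\ref{prop:withreroute} with $\alpha = 6$ and $|S| = k$, I expect the intended argument in the ``bad flow large'' case is precisely an application of Proposition~\ref{prop:withreroute} to the flow inside whichever $G[U_j]$ carries the most bad flow — but using that the bad paths are confined to $U_j$ whose only ``exit'' is $V_{e_j}$ of size $k$, one gets a flow routing a constant fraction of each bad terminal's value to $V_{e_j}$, hence by Proposition~\ref{prop:withreroute} an integral routing of $\frac{1}{36\cdot 6\cdot k}$ of the bad flow in that component, i.e.\ $\ge \frac{1}{216k}\val{f_U}/2$ pairs, using congestion $1$. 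Thus the plan has two cases:

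\textbf{Case 1 (bad flow small).} $\sum_j v_j \le \val{f_U}/2$. Discard the bad paths, collapse each $T_j$ to a degenerate leaf on $U_j$ with separator $V_{e_j}$, obtaining $\cT'_U$ with all non-degenerate separators of size $\le k-1$, hence a $(k-1,p)$-tree decomposition; the surviving flow $f'_U$ has value $\ge \val{f_U}/2$ and is flush by construction. Output $(\cT'_U, f'_U)$.

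\textbf{Case 2 (bad flow large).} $\sum_j v_j > \val{f_U}/2$. Pick $j$ maximizing $v_j$. Inside $G[U_j]$, every bad path avoids $V_{e_j}$; but $V_{e_j}$ of size $k$ separates $U_j$ from the rest, so... I would instead observe: collapse all \emph{other} subtrees so that $G[U_j]$'s decomposition $T_j$ is still width $\le k$; the bad flow in $G[U_j]$ has all endpoints in $U_j$. Use the flushness of the \emph{rest} of $f_U$ is irrelevant here. The honest move is to route the bad flow in $G[U_j]$ using Proposition~\ref{prop:withreroute}: since $G[U_j]$ is connected and, up to recursing on its own decomposition, either there is a flow pushing a $1/6$ fraction of each bad terminal to a size-$\le k$ separator inside $T_j$ (then Proposition~\ref{prop:withreroute} with $|S|\le k$ gives $\ge \frac{\sum_j v_j}{36\cdot 6 k} > \frac{\val{f_U}}{216k}\cdot\frac12$ pairs, congestion $1 \le \beta+3$), or we cut again and strictly decrease. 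The recursion terminates because vertex count strictly drops. I expect the hardest part of writing this cleanly is the bookkeeping that makes Case 2 terminate with exactly the claimed constant without blowing up congestion — i.e.\ confirming the bad flow can always be routed with the single-node/small-separator machinery at congestion $1$, matching the $\frac{1}{216k}$ factor — rather than any structural difficulty, since the collapse-to-degenerate-leaf construction in Case 1 is essentially immediate from the definitions.
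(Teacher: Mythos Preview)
Your overall plan matches the paper's: collapse each maximal subtree of $\cT_U$ hanging below a size-$k$ separator into a new degenerate leaf, drop the non-flush flow, and split into the two cases you describe. Case~1 is exactly right.

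The genuine gap is in Case~2. You correctly reverse-engineer from the constant $216 = 36\cdot 6$ that Proposition~\ref{prop:withreroute} with $\alpha=6$ and $|S|=k$ is the tool, but you never identify \emph{why} each bad terminal in $U_j$ can route $x(v)/6$ to the separator $V_{e_j}$. Your fallback (``or we cut again and strictly decrease'') would introduce another recursion whose constants you cannot control, and picking a single $j$ maximizing $v_j$ also loses a factor of $m$. The missing idea is that this $1/6$-flow comes for free from the \emph{minimality of $U$}. Recall $U$ was chosen minimal subject to $c(\delta(U)) < x(U)/6$; by centrality of $U$ one has $U_j \subset U$, and any cut $W \subset U_j \setminus V_{e_j}$ separating some vertices of $U_j$ from $V_{e_j}$ is also a cut of $G$ separating them from $V_e$, hence by minimality of $U$ satisfies $c(\delta(W)) \ge x(W)/6$. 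By max-flow/min-cut this gives the desired $x(v)/6$-flow from every $v\in U_j$ to $V_{e_j}$ inside $G[U_j]$. Now apply Proposition~\ref{prop:withreroute} \emph{in each $G[U_j]$ separately} (they are edge-disjoint, so the integral routings combine with congestion $1$), getting $\sum_j \frac{v_j}{36\cdot 6\cdot k} > \frac{1}{216k}\cdot \frac{\val{f_U}}{2}$ pairs. No further recursion is needed; this is not bookkeeping but the one structural observation that makes Case~2 go through with the stated constant.
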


We postpone the proof of the above lemma and proceed to finish the
recursive analysis.  We apply the induction hypothesis for $k$ on
$\cT_\bU$ with the number of nodes reduced; hence the algorithm routes
at least $\frac{\val{f_\bU}}{\alpha p^2 216 \cdot 3^{k}}$
pairs in $G[V\setminus U]$ with congestion at most $\beta+3$.  For
$G[U]$ we consider two cases based on the preceding lemma.  In the
first case the algorithm directly routes $\frac{1}{216k} \cdot
\val{f_U}/2$ pairs integrally in $G[U]$. In the second case we recurse on
$G[U]$ with the flow $f'_U$ that is flush with respect to the
$(k-1,p)$-tree decomposition $\cT'_U$; by the induction
hypothesis the algorithm routes at least
$\frac{\val{f'_U}}{\alpha p^2 216 \cdot 3^{k-1}}$ pairs with
congestion $\beta+3$. Since the number of pairs routed in this second
case is less than in the first case, we may focus on it, as we now show  that the
total number of pairs routed in $G[U]$ and $G[V\setminus U]$ satisfies
the induction hypothesis for $k$. We first observe that $\val{f} \le \val{f_U}
+ \val{f_\bU} + c(U)$. 
Moreover, $c(U) < x(U)/6$, and since at least $x(U)/2$ flow
originated in $U$, and we lost at most $c(U)$ of this flow, we have
$\val{f_U} \geq 2 c(U)$.  The total number of pairs routed is at least
\begin{align*}
\frac{\val{f'_U}}{216 \alpha p^2 \cdot 3^{k-1}} + \frac{\val{f_\bU}}{216 \alpha p^2  \cdot 3^{k}}
  &\geq \frac{3\val{f'_U}+\val{f_\bU}}{216 \alpha p^2  \cdot 3^{k}} \\
  &\geq \frac{(3/2)\val{f_U}+\val{f_\bU}}{216 \alpha p^2  \cdot 3^{k}} \\
  &\geq \frac{\val{f_U}+\val{f_\bU}+c(U)}{216 \alpha p^2  \cdot 3^{k}} \\
  &\geq \frac{\val{\bx}}{216 \alpha p^2  \cdot 3^{k}} ,
\end{align*}
which establishes the induction step for $k$. A very similar analysis
shows the slightly stronger bound for treewidth $p$ graphs --- we simply
have to use the stronger induction hypothesis in the preceding calculations
and observe that the base case analysis also proves the desired stronger
hypothesis.

\paragraph{Proof of Lemma~\ref{lem:ksum-recursion}:}
Recall that $\cT_U$ is a $(k,p)$-degenerate decomposition for $G[U]$
and that $f_U$ is flush with respect to $\cT_U$. However, we wish to
find a $(k-1,p)$-degenerate decomposition. Recall that $U$ is disjoint
from $V_e$, the separator of size $k$ associated with an edge $e$ of
$\cT$.  We can assume that $\cT$ is rooted and without loss of
generality that $\cT_U$ is a sub-tree of $\cT_e$. The reason that
$\cT_U$ may not be a $(k-1,p)$-tree decomposition is that it may
contain a edge $e'$ not incident to a degenerate leaf such that
$|V_{e'}| = k$. $V_{e'}$ was then also a separator of $G$ since $\cT$
is a $(k,p)$-tree decomposition. By centrality of $U$, every node that
$V_{e'}$ separates from $V_e$ must then be in $U$; therefore, letting
$U' = V(\cT_{e'})$ (the union of all bags contained in $\cT_{e'}$), we
have $U' \subset U$.  We claim that we can route $x(v)/6$ from each
$v \in U'$ to $V_{e'}$ in $G'=G_{e'}$ (this is the graph induced by
$G[U']$ but edges between the separator nodes $V_{e'}$ removed); this
follows from the minimality of $U$ since a cutset induced by $W
\subset U' \setminus V_{e'}$ is also a cutset of $G$.  We define a
$(k-1,p)$-tree decomposition $\cT'_U$, by contracting every {\em
  maximal} subtree of $\cT_U$ rooted at such a separator of size
$k$. Each such subtree identifies a new degenerate leaf in
$\cT'_U$. However, the flow $f_U$ may not be flush with $\cT'_U$ due
to the creation of new degenerate leaves. We try to amend this by
dropping flow paths with end points in a new degenerate leaf $L$ that does
not intersect the separator $S_L$. Let $f'_U$ be the residual flow;
observe that by definition $f'_U$ is flush with respect to
$\cT'_U$. Two cases arise.

\begin{itemize}
\item If $\val{f'_U} \ge \val{f_U}/2$ then we have the desired
  degenerate $(k-1,p)$-decomposition of $G[U]$.
\item Else at least $\val{f_U}/2$ of the flow is being
  routed completely within the new degenerate leaves $L$.  Moreover,
  these graphs and flows are edge-disjoint.  Note also 
  in such an $L$, we have that the terminals involved can simultaneously
  route $x(v)/6$ each to $S_L$. We can thus obtain a constant fraction
  of the profit by applying the method from
  Proposition~\ref{prop:withreroute}, separately to every new leaf. In
  particular, we route at least $\frac{1}{216k} \cdot \val{f_U}/2$ of the
  pairs. In this case, we no longer need to recurse on $\cT_U$.
\end{itemize}
This finishes the proof of the lemma.

Finally, the main inductive claim implies the claimed algorithmic
results since we can start with a proper $p$-decomposition $\cT$ of $G
\in \cG_p$ (viewed as a $(p,p)$-tree decomposition with no degenerate
leaves) and an arbitrary multiflow on its support (since there are no
degenerate leaves the flushness is satisfied) in order to begin our
induction. This finishes the proof. \qed

\iffull
\section{MEDP in Bounded Genus Graphs}
\label{sec:genus}
In this section we consider \medp and obtain an approximation ratio
that depends on the genus of the given graph. Throughout we use $g$ to
denote the genus of the given graph $G$; we   assume that
$g > 0$ since we already understand planar graphs. We assume that
the given instance of \medp is a matching instance in which
the terminals are degree $1$ leaves and that each terminal participates
in exactly one pair. An instance of \medp with this restriction is
characterized by a tuple $(G,X,M)$ where $G$ is the graph and $X$ is
the set of terminals and $M$ is a matching on the terminals
corresponding to the given pairs.  We  also assume that the degree
of each node is at most $4$; this can be arranged without changing the
genus by replacing a high-degree node by a grid (see \cite{Chekuri04b} for the
description for planar graphs which generalizes easily for any
surface).  Let $\bx$ be a feasible fractional solution to the
multicommodity flow based linear programming relaxation.  We use again the
terminology $\val{\bx}$ to denote $\sum_i x_i$, the fractional
amount of flow routed by $\bx$.  We let $\beta_g$ denote the flow-cut
gap for product-multicommodity flow instances in a graph of genus $g$;
this is known to be $O(\log (g+1))$ \cite{lee2010genus}. We  implicitly
assume that $\beta_g$ is an effective upper bound on the flow-cut gap
in that there is a polynomial-time algorithm that outputs a sparse cut
no worse than $\beta_g$ times the maximum concurrent flow for
a given product multicommodity instance on a genus $g$ graph.
The main result is the following.

\begin{theorem}
  \label{thm:genus_main}
  Let $\bx$ be a feasible fractional solution to a \medp instance in a
  graph of genus $g > 0$. Then
  $\Omega(\val{\bx}/\gamma_g)$ pairs can be routed with congestion
  $3$ where $\gamma_g = O(g \log^2 (g+1))$.
\end{theorem}

As we remarked previously we do not have currently have a polynomial-time
algorithm for the guarantee that we establishes in the preceding theorem.
There are three high-level ingredients in establishing the preceding theorem.
\begin{itemize}
\item A constant factor approximation with congestion $2$ for planar
  graphs based on the LP relaxation \cite{CKS-planar-constant,SeguinS11}.
\item An $O(g)$-approximation with congestion $3$ for graphs with
  genus $g$ when the terminals are well-linked. This extends the
  result in \cite{CKS-well-linked} for planar graphs to
  bounded genus graphs via the use of grid minors.
\item A modification of the well-linked decomposition of
  \cite{CKS-well-linked} that terminates the decomposition
   when the graph is planar even if the terminals are not well-linked.
\end{itemize}

We first give some formal definitions on well-linked sets;
the material follows \cite{CKS-well-linked} closely.

\medskip
\noindent {\bf Well-linked Sets:} Let $X \subseteq V$ be a set of
nodes and let $\pi : X \rightarrow [0,1]$ be a weight function on $X$.
We say $X$ is {\em $\pi$-flow-well-linked} in $G$ if there is a
feasible multicommodity flow in $G$ for the following demand matrix:
between every unordered pair of terminals $u,v \in X$ there is a
demand $\pi(u)\pi(v)/\pi(X)$ (other node pairs have zero demand).  We
say that $X$ is {\em $\pi$-cut-well-linked} in $G$ if $|\delta(S)| \ge
\pi(S \cap X)$ for all $S$ such that $\pi(S \cap X) \le \pi(X)/2$.  It
can be checked easily that if a set $X$ is $\pi$-flow-linked in $G$,
then it is $\pi/2$-cut-linked.  If $\pi(u) = \alpha$ for all $u \in
X$, we say that $X$ is {\em $\alpha$-flow(or cut)-well-linked}.  If
$\alpha=1$ we simply say that $X$ is well-linked.  Given $\pi:X
\rightarrow [0,1]$ one can check in polynomial time whether $X$ is
$\pi$-flow-well-linked or not via linear programming.

One can efficiently find an approximate sparse cut
if $X$ is not $\pi$-flow-well-linked via the algorithmic aspects
of the product flow-cut gap; the lemma below follows from
\cite{lee2010genus}.

\begin{lemma}
\label{lem:sparsecut}
Let $G=(V,E)$ be  a graph of genus at most $g > 1$.
Let $X \subseteq V$ and $\pi : X \rightarrow [0,1]$. There is
a polynomial-time algorithm that given $G$, $X$ and $\pi$ decides
whether $X$ is $\pi$-flow-well-linked in $G$ and if not outputs
a set $S \subseteq V$ such that $\pi(S) \le \pi(V\setminus S)$ and
$|\delta(S)| \le \beta_g \cdot \pi(S)$ where $\beta_g = O(\log (g+1))$.
\end{lemma}


We now formally state the theorems that correspond to the high-level
ingredients. The first is a constant factor approximation for
routing in planar graphs.

\begin{theorem}[\cite{CKS-planar-constant,SeguinS11}]
  \label{thm:planar-constant}
  Let $\bx$ be a feasible fractional solution to a \medp instance in
  a planar graph. Then there is a polynomial-time algorithm that routes
  $\Omega(\val{\bx})$ pairs with congestion $2$.
\end{theorem}

The second ingredient is an $O(g)$-approximation if the terminals
are well-linked. More precisely we have the following theorem which
we prove in Section~\ref{subsec:well-linked-genus}.
\begin{theorem}
  \label{thm:well-linked-genus}
  Let $\bx$ be a feasible fractional solution to a \medp instance in a
  graph of genus $g > 0$. Moreover, suppose the terminal set $X$ is
  $\pi$-flow-well-linked in $G$ where $\pi(v) = \rho \cdot x(v)$ for some
  scalar $\rho \le 1$. Then there is an algorithm
  that routes $\Omega(\rho \cdot \val{\bx}/g)$ pairs with congestion
  $3$.
\end{theorem}

The last ingredient is the following adaptation of the well-linked
decomposition from \cite{CKS-well-linked}.
\begin{theorem}
  \label{thm:genus-decomp}
  Let $(G,X,M)$ be an instance of \medp in a graph of genus at most
  $g$ and let $\bx$ be a feasible fractional solution. Then there is a
  polynomial-time algorithm that decomposes the given instances
  into several instances $(G_1,X_1,M_1),\ldots,(G_h,X_h,M_h)$
  with the following properties.
  \begin{itemize}
  \item The graphs $G_1,G_2,\ldots,G_h$ are node-disjoint subgraphs of $G$.
  \item For $1 \le j \le h$, $M_j \subseteq M$ and the end points of
    $M_j$ are in $G_j$.
  \item For $1 \le j \le h$, there is a feasible fractional solution
    $\bx^j$ for the instance $(G_j,X_j,M_j)$ such that $\sum_j \val{\bx^j}
    = \Omega(\val{\bx})$.
  \item For $1 \le j \le h$, $G_i$ is planar or the terminals $X_j$
    are $\bx^j/(10\beta_g \log (g+1))$-flow-well-linked  in $G_j$.
  \end{itemize}
\end{theorem}

Assuming the preceding three theorems we finish the proof of
Theorem~\ref{thm:genus_main}. Let $G$ be a graph of genus $\le g$.  We
apply the decomposition given by Theorem~\ref{thm:well-linked-genus}
which reduces the original problem instance $(G,X,M)$ to a collection
of separate instances $(G_1,X_1,M_1),\ldots,(G_h,X_h,M_h)$.  Note that
pairs in these new instances are from the original instance and
routings in these instances can be combined into a routing in the
original graph $G$ since the $G_1,G_2,\ldots,G_h$ are node-disjoint
(and hence also edge-disjoint). The total fractional solution value in
the new instances is $\Omega(\val{\bx}/\beta_g \log (g+1))$. If $G_j$
is planar we use Theorem~\ref{thm:planar-constant} to route
$\Omega(\val{\bx^j})$ pairs from $M_j$ in $G_j$ with congestion
$2$. If $G_j$ is not planar, then $X_j$ is $\bx^j/(10\beta_g \log
(g+1))$-flow-well-linked. Then, via
Theorem~\ref{thm:well-linked-genus}, we can route $\Omega(\val{\bx^j}/(g
\beta_g \log (g+1)))$ pairs from $M_j$ in $G_j$ with congestion $4$. Thus
we route in total $\Omega(\sum_j \val{\bx_j}/(g \beta_g \log (g+1))$ pairs.
Since $\sum_j \val{\bx_j} = \Omega(\val{x})$, we route a total of
$\Omega(\val{\bx}/(g \beta_g \log (g+1))) = \Omega(\val{\bx}/(g \log^2 (g+1)))$
pairs from $M$ in $G$ with congestion $3$.

\subsection{Proof of Theorem~\ref{thm:genus-decomp}}
\label{subsec:genus-decomp}

We start with a well-known fact.

\begin{prop}
  \label{prop:genus}
  Let $G=(V,E)$ be a connected graph of genus $g$. Let $S \subset V$
  be such that $G_1 = G[S]$ and $G_2=G[V\setminus S]$ are both connected.
  Then $g_1+g_2 \le g$ where $g_1$ is the genus of $G$ and $g_2$ is the
  genus of $G_2$.
\end{prop}

The algorithm below is an adaptation of the well-linked decomposition
algorithm from \cite{CKS-well-linked} that recursively partitions a
graph if the terminals are not well-linked (with a certain
parameter). In the adaptation below we stop the partitioning if the
graph becomes planar even if the terminals are not well-linked. We
start with an instance $(G,X,M)$ and an associated fractional solution
$\bx$. We fix a particular selection of flow paths for the solution $\bx$.
The algorithm recursively cuts $G$ into subgraphs by removing edges.
The flow paths that use the removed edges are lost and each connected
component retains flow corresponding to those flow paths which are
completely contained in that component.

\smallskip
\noindent
{\bf Decomposition Algorithm:}

\begin{enumerate}
\item \label{alg:term1} If $\val{\bx} < 10 \beta_g \log (g+1)$ or if $G$ is
  a planar graph stop and output $(G,X,M)$ with fractional solution
  $\bx$.

\item Else if $X$ is $\bx/(10 \beta_g \log (g+1))$ flow-well-linked in $G$ then
  stop and output $(G,X,M)$ with fractional solution $\bx$.

\item Else find a sparse cut $\delta(S)$ such that $x(S) \le
  \val{\bx}/2$ and $|\delta(S)| \le 1/(10 \log (g+1)) \cdot x(S)$. Let
  $G_1=G[S]$ and $G_2=G[V\setminus S]$. (Assume wlog that $G_1,G_2$
  are connected).  Let $(G_1,X_1,M_1) $ and $(G_2,X_2,M_2)$ be the
  induced instances on $G_1$ and $G_2$ with fractional solutions $\bx^1$
  and $\bx^2$ respectively.  Recurse
  separately on $(G_1,X_1,M_1) $ and $(G_2,X_2,M_2)$.
\end{enumerate}

We now prove that the above algorithm outputs a decomposition as
stated in Theorem~\ref{thm:genus-decomp}.  The only property that is
non-trivial to see is the one about the total flow retained in the
decomposition. As in \cite{CKS-well-linked} it is easier to upper
bound the total number of edges cut by the algorithm which in turn
upper bounds the total amount of flow from the original solution $\bx$
that is lost during the decomposition. We write a recurrence for this
as follows.  Let $L(a,r)$ be total number of edges cut by the
algorithm if $a$ is the amount of flow in the graph and $r\le g$ is
the genus. The base case is $L(a,0) = 0$ since the algorithm stops
when the graph is planar. The algorithm cuts and recurses only when
the terminals are not $\bx/(10 \beta_g \log (g+1))$ flow-well linked.
Suppose $H$ is the current graph with flow value $a = \sum_v x(v)/2$
and genus $r$, and $H$ is partitioned into $H_1$ and $H_2$. Let $a_1$
and $a_2$ be the total flow values in $H_1$ and $H_2$ and let
$r_1$ and $r_2$ be their genus respectively. We have $a_1+a_2 \le a$ and
by Proposition~\ref{prop:genus} we have $r_1+r_2 \le r$.  We claim
that the number of edges cut is at most $\frac{1}{4 \log (g+1)}
\min\{a_1,a_2\}$. To see this, suppose $\delta(S)$ is the sparse cut
in $H$ that resulted in $H_1$ and $H_2$ with $H_1=H[S]$ and $H_2 =
H[V\setminus S]$ and $a_1 = \min\{a_1,a_2\}$. Let $\bx$ be the
fractional solution in $H$ and $\bx'$ the solution after the
partitioning. We have $a = x(V(H))/2$. Since the terminals are not
$\bx/(10 \beta_g \log (g+1))$ flow-well-linked in $H$, by
Lemma~\ref{lem:sparsecut}, $|\delta(S)| \le \beta_g \cdot x(S)/(10
\beta_g \log (g+1)) \le x(S)/(10 \log (g+1))$. We have $a_1 =
x'(S)/2$. Moreover, $x'(S) \ge x(S) - 2 |\delta(S)|$ since a flow path
$p$ crossing the cut with flow $f_p$ can contribute at most $2f_p$ to
the reduction of the marginal values of $x(S)$ at its end points.
Thus $2a_1 = x'(S) \ge x(S) -2 |\delta(S)| \ge (10 \log (g+1) - 2)
|\delta(S)| \ge 8 |\delta(S)|$ since $g \ge 1$, which implies
that $|\delta(S)| \le a_1/4$, as claimed.

Therefore we have the following recurrence for the total
number of edges cut in the overall decomposition:

$$L(a,r) \le L(a_1,r_1) + L(a_2,r_2) + \frac{1}{4 \log (g+1)} \min\{a_1,a_2\}.$$

We prove by induction on $r$ and number of nodes of $G$ that for $r
\le g$, $L(a,r) \le \frac{\log (r+1)}{2\log (g+1)} \cdot a$. We had
already seen the base case with $r = 0$ since $L(a,r) = 0$.  By
induction $L(a_1,r_1) \le \frac{\log (r_1+1)}{2\log (g+1)} \cdot a_1$
and $L(a_2,r_2) \le \frac{\log (r_2+1)}{2\log (g+1)} \cdot a_2$.
Since $r_1+r_2 \le r$, $\min\{r_1+1,r_2+1\} \le (r+1)/2$. It is not
hard to see that $a_1 \log (r_1+1) + a_2 \log (r_2+1) \le (a_1+a_2)
\log (r+1) - \min\{a_1,a_2\} \log 2$. Using the recurrence.
\begin{eqnarray*}
  L(a,r) & \le & L(a_1,r_1) + L(a_2,r_2) + \frac{1}{4 \log (g+1)} \min\{a_1,a_2\} \\
  & \le & \frac{\log (r_1+1)}{2\log (g+1)} \cdot a_1 + \frac{\log (r_2+1)}{2\log (g+1)} \cdot a_2 + \frac{1}{4 \log (g+1)} \min\{a_1,a_2\} \\
  & \le &  \frac{\log (r+1)}{2 \log (g+1)} (a_1 + a_2)  - \frac{\log 2}{2\log (g+1)} \min\{a_1,a_2\} + \frac{1}{4 \log (g+1)} \min\{a_1,a_2\} \\
  & \le & \frac{\log (r+1)}{2 \log (g+1)} \cdot a.
\end{eqnarray*}

Thus $L(\val{\bx},g) \le \frac{\log (g+1)}{2 \log (g+1)} \val{\bx} \le
\val{\bx}/2$.  Hence the total flow that remains after the
decomposition is at least $\val{\bx}/2$.

\subsection{Proof of Theorem~\ref{thm:well-linked-genus}}
\label{subsec:well-linked-genus}

We start with a grouping technique from \cite{CKS-well-linked} that
boosts the well-linkedness.

\begin{theorem}
  \label{thm:boost}  Let $\bx$ be a feasible fractional solution to an
  \medp instance $(G,X,M)$.  Moreover, suppose the terminal set $X$ is
  $\pi$-flow-well-linked in $G$ where $\pi(v) = \rho \cdot x(v)$ for
  some scalar $\rho \le 1$. Then there is a polynomial-time algorithm
  that routes $\Omega(\rho \cdot \val{\bx})$ pairs of $M$ edge-disjointly or
  outputs a new instance $(G,X',M')$ where $M' \subset M$ and $X'$ is
  well-linked and $|M'| = \Omega(\rho |M|)$.
\end{theorem}

Using the preceding theorem we assume that we are working with an
instance $(G,X,M)$ where $X$ is well-linked. Recall that $G$ has genus
at most $g$ and degree of each node is at most $4$.  We use the
observation below (see \cite{Chekuri04b}) that relates the
size of a well-linked set and the treewidth.
\begin{lemma}
  Let $G$ be a graph with maximum degree $\Delta$ and let $X \subseteq V$ be
  a well-linked set in $G$. Then the treewidth of $G$ is $\Omega(|X|/\Delta)$.
\end{lemma}

Thus we can assume that $G$ has treewidth $\Omega(|X|)$.  Demaine
\etal \cite{Demaineetal-bidimensional} showed the following theorem on the size of a grid minor in
graphs of genus $g$ following the work of Robertson and Seymour.

\begin{theorem}[\cite{Demaineetal-bidimensional}]
  Let $G$ be a graph of genus at most $g$. Then $G$ has a grid minor
  of size $\Omega(h/g)$ where $h$ is the treewidth of $G$.
\end{theorem}

Following the scheme from \cite{CKS-well-linked}, one can use the grid
minor as a cross bar to route a large number of pairs from $M$
provided we can route $\Omega(|X|/g)$ terminals to the ``interface''
of the grid-minor of size $\Omega(|X|/g)$ that is guaranteed to exist
in $G$. We can view the grid minor as rows and columns.  In our
current context we take every other node in the first row of the
grid as the interface of the grid-minor. Each node $v$ of the minor
corresponds to a subset of connected nodes $A_v$ in original graph $G$ that
are contracted to form $v$. To simplify
notation we say that $S \subset V$ is the interface of a grid-minor
in $G$ if $|S \cap A_v| = 1$ for each interface node $v$ of the
grid-minor. The following lemma is essentially implicit in previous
work, in particular \cite{CKS-planar} used in the context of planar
graphs.

\begin{lemma}
  \label{lem:grid-routing}
  Suppose $G=(V,E)$ contains a $h \times h$ grid as a minor. Let $S
  \subseteq V$ be the interface of the grid-minor. Then $S$ is
  well-linked in $G$. Moreover, any matching $M$ on $S$ is routable
  in $G$ with congestion $2$.
\end{lemma}

The key technical difficulty is to ensure that $G$ contains
a large grid-minor whose interface is reachable from the terminals $X$.
In a sense, the grid-minor's existence is shown via the
well-linkedness of $X$ and hence there should be such a ``reachable''
grid-minor. The following theorem formalizes the existence
of the desired grid-minor.

\begin{theorem}
  \label{thm:reachable-grid}
  Let $G$ be a graph of genus $g > 0$. Suppose $X$ is a well-linked
  set in $G$. Then $G$ contains a $h \times h$ grid-minor
  with interface $S$ such that $h = \Omega(|X|/g)$ and
  at least $h/8$ edge-disjoint paths in $G$ from $X$ to $S$.
\end{theorem}

Proving the preceding theorem formally 
requires work. In \cite{CKS-planar} an argument tailored to planar
graphs was used to prove a similar theorem, and in fact a
polynomial-time algorithm was given to find the desired grid-minor. In that
same paper a general deletable edge lemma \cite{cks04} was announced (though
never published); in the appendix we give  a streamlined
proof based on the original manuscript. We
postpone the proof of the preceding theorem and outline how to
complete the proof of Theorem~\ref{thm:well-linked-genus}.
We start with our well-linked set $X$, and via Theorem~\ref{thm:reachable-grid},
find a grid-minor of size $h = \Omega(|X|/g)$ such that there
are $h/8$ edge-disjoint paths from $X$ to the interface $S$.
Let $X' \subset X$ be $h/8$ terminals that are the end points of these
edge-disjoint paths. Recall that we are interested in routing
a given matching $M$ on $X$. If $X'$ contains a ``large'' sub-matching
$M' \subset M$ then we can use the grid-minor to route $M'$ with
congestion $3$ as follows. Let $S' \subset S$ be end points of
the paths from $X'$ to $S$. Clearly $M'$ induces a matching on
$S'$ and the grid-minor can route this matching with congestion $2$.
Patching this routing with the paths from $X'$ to $S'$ gives the
desired congestion $3$ routing of $M'$ in $G$. However, it may be the
case that $|M'|$ is very small, even zero,
because $X'$ only contains one end point from every edge in $M$.
However, here, we can use the fact that $X$ and $S$
are  well-linked in $G$ to argue that we can route any desired
subset of $X$ of sufficiently large size to $S$. Thus, we can assume
that indeed $|M'|$ is a constant fraction of $|X'|$.
This was essentially done in \cite{Chekuri04b} and details
are also included in the appendix (see Lemma~\ref{lem:alltoall2}).
This shows the number of pairs from $M$ that can be routed
with congestion $3$ in $G$ is a constant fraction of $h$, the size
of the grid-minor. Since $h = \Omega(|X|/g)$ we route
$\Omega(|X|/g)$ pairs from $M$. This finishes the proof of
Theorem~\ref{thm:well-linked-genus}.


Now we come to the proof of Theorem~\ref{thm:reachable-grid}.  This is
done in Section~\ref{sec:deletable-edge} via the following high-level approach.  Suppose
$X$ is a well-linked set in $G$. We are guaranteed that $G$ has a
grid-minor of size $h = \Omega(|X|/g)$. Let $S$ be its interface. If
there are $h/8$ edge-disjoint paths from $X$ to $S$ then we are
done. Otherwise the claim is that there is an edge $e$ such that $X$
is well-linked in $G-e$. This is established in
Theorem~\ref{thm:deletable}.  In other words, if we start with a graph
which is edge-minimal subject to $X$ being well-linked, then following
the procedure above  yields a grid-minor that is reachable from
$X$. The reason that this does not immediately lead to a polynomial-time
algorithm to find such a grid-minor is the fact that checking whether
a given set $X$ is not well-linked is NP-Complete. Note that we can
check if $X$ if flow-well-linked but the deletable edge lemma we have
works with the notion of cut-well-linkedness which is hard to check.
In \cite{CKS-planar} a poly-time deletable edge lemma was obtained
for the special case of planar graphs and we believe that it can be
extended to the case of genus $g$ graphs but the technical details
are quite involved.

\section{Open Problems and Concluding Remarks}
Resolving Conjecture~\ref{conj:minor-free} is the main open problem
that arises from this work. In particular, does a planar graph with a
single vortex satisfy the CFCC property? This is
is the key technical obstacle.

Theorem~\ref{thm:ksum-main} loses an approximation factor that is
exponential in $k$. Is this necessary? In particular, the known
integrality gap for the flow LP on graphs of treewidth $k$ is only
$\Omega(k)$; this comes from the grid example. Moreover, the
congestion we obtain for $\cG_k$ is $\beta+3$ where $\beta$ is the
congestion guaranteed for the class $\cG$. Can this be improved
further, say to $\beta$ or $\beta+1$?

We believe that Theorem~\ref{thm:genus_main} can be made algorithmic
and also conjecture that the congestion bound can be improved to
$2$. The key bottleneck is to find an algorithmic proof of
Theorem~\ref{thm:reachable-grid}.
\fi

\medskip
\noindent
{\bf Acknowledgments:} We thank Sanjeev Khanna for allowing us to
include Section~\ref{sec:deletable-edge} in the paper for the sake
of completeness.

\bibliography{conf}{}
\bibliographystyle{plain}

\appendix
\ifabstract
\section{Figures for Proof of Theorem~\ref{thm:ksum-main}}

\begin{figure}[thb]
\begin{tabular}{cc}

\begin{minipage}{0.48\textwidth}

\begin{tikzpicture}[x=0.17cm,y=0.17cm,>=latex]

\vertex{black} (nx32y15) at (32,15) {};
\vertex{black} (nx34y19) at (34,19) {};
\vertex{black} (nx36y15) at (36,15) {};
\vertex{black} (nx38y18) at (38,18) {};
\vertex{black} (nx42y21) at (42,21) {};
\vertex{black} (nx44y24) at (44,24) {};
\vertex{black} (nx43y27) at (43,27) {};
\vertex{black} (nx40y25) at (40,25) {};
\vertex{black} (nx41y35) at (41,35) {};
\vertex{black} (nx40y38) at (40,38) {};
\vertex{black} (nx35y38) at (35,38) {};
\vertex{black} (nx37y35) at (37,35) {};
\vertex{black} (nx13y27) at (13,27) {};
\vertex{black} (nx16y24) at (16,24) {};
\vertex{black} (nx15y31) at (15,31) {};
\vertex{black} (nx25y39) at (25,39) {};
\vertex{black} (nx25y35) at (25,35) {};
\vertex{black} (nx21y35) at (21,35) {};
\vertex{black} (nx20y39) at (20,39) {};

\Vertex{black} (nx34y27) at (34,27) {};
\Vertex{black} (nx28y26) at (28,26) {};
\Vertex{black} (nx29y21) at (29,21) {};
\Vertex{black} (nx36y23) at (36,23) {};
\Vertex{black} (nx38y30) at (38,30) {};
\Vertex{black} (nx33y33) at (33,33) {};
\Vertex{black} (nx27y31) at (27,31) {};
\Vertex{black} (nx21y30) at (21,30) {};
\Vertex{black} (nx22y25) at (22,25) {};

\fill[black,nearly transparent]
      (29.19,20.33) arc[start angle = 286, end angle = 140, radius = 0.70]
   -- (33.46,27.45) arc[start angle = 140, end angle =  27, radius = 0.70]
   -- (36.63,23.31) arc[start angle =  27, end angle = -74, radius = 0.70]
   -- cycle;
\draw[black] (nx34y27) -- (nx29y21);
\draw[black] (nx34y27) -- (nx32y15);
\draw[black] (nx32y15) -- (nx38y18);
\draw[black] (nx36y15) -- (nx34y19);
\draw[black] (nx36y15) -- (nx38y18);
\draw[black] (nx32y15) -- (nx36y15);
\draw[black] (nx32y15) -- (nx29y21);
\draw[black] (nx34y19) -- (nx32y15);
\draw[black] (nx38y18) -- (nx34y19);
\draw[black] (nx36y23) -- (nx38y18);
\draw[black] (nx34y19) -- (nx36y23);
\draw[black] (nx29y21) -- (nx34y19);

\fill[black,nearly transparent] 
      (37.46,30.72) arc[start angle = 127, end angle = -16, radius = 0.90]
   -- (36.87,22.75) arc[start angle = 344, end angle = 207, radius = 0.90]
   -- (33.20,26.60) arc[start angle = 207, end angle = 127, radius = 0.90]
   -- cycle;
\draw[black] (nx40y25) -- (nx34y27);
\draw[black] (nx43y27) -- (nx34y27);
\draw[black] (nx40y25) -- (nx44y24);
\draw[black] (nx40y25) -- (nx42y21);
\draw[black] (nx43y27) -- (nx40y25);
\draw[black] (nx44y24) -- (nx43y27);
\draw[black] (nx42y21) -- (nx44y24);
\draw[black] (nx36y23) -- (nx42y21);
\draw[black] (nx40y25) -- (nx36y23);
\draw[black] (nx38y30) -- (nx40y25);
\draw[black] (nx38y30) -- (nx43y27);

\fill[black,nearly transparent] (35.5, 31.5) circle[x radius = 3.5, y radius = 1.3, rotate = -31] {};
\draw[black] (nx41y35) -- (nx40y38);
\draw[black] (nx37y35) -- (nx41y35);
\draw[black] (nx35y38) -- (nx37y35);
\draw[black] (nx40y38) -- (nx35y38);
\draw[black] (nx37y35) -- (nx40y38);
\draw[black] (nx37y35) -- (nx38y30);
\draw[black] (nx33y33) -- (nx37y35);
\draw[black] (nx35y38) -- (nx41y35);
\draw[black] (nx38y30) -- (nx35y38);
\draw[black] (nx41y35) -- (nx33y33);
\draw[black] (nx38y30) -- (nx41y35);
\draw[black] (nx33y33) -- (nx35y38);

\fill[black,nearly transparent]
      (22.54,24.55) arc[start angle = 320, end angle = 191, radius = 0.70]
   -- (20.31,29.86) arc[start angle = 191, end angle =  99, radius = 0.70]
   -- (26.88,31.69) arc[start angle =  99, end angle = -40, radius = 0.70]
   -- cycle;
\draw[black] (nx25y35) -- (nx22y25);
\draw[black] (nx21y35) -- (nx20y39);
\draw[black] (nx25y35) -- (nx25y39);
\draw[black] (nx21y35) -- (nx25y35);
\draw[black] (nx25y39) -- (nx21y35);
\draw[black] (nx20y39) -- (nx25y39);
\draw[black] (nx25y35) -- (nx20y39);
\draw[black] (nx27y31) -- (nx25y35);
\draw[black] (nx21y35) -- (nx27y31);
\draw[black] (nx21y30) -- (nx21y35);

\fill[black,nearly transparent] (21.5, 27.5) circle[x radius = 3.05, y radius = 1.3, rotate = -78.7] {};
\draw[black] (nx13y27) -- (nx15y31);
\draw[black] (nx16y24) -- (nx13y27);
\draw[black] (nx15y31) -- (nx16y24);
\draw[black] (nx21y30) -- (nx15y31);
\draw[black] (nx22y25) -- (nx16y24);
\draw[black] (nx13y27) -- (nx22y25);
\draw[black] (nx21y30) -- (nx13y27);

\draw[very thick, black!40]
      (21.21,23.61) arc[start angle = 240, end angle = 191, radius = 1.60]
   -- (19.43,29.69) arc[start angle = 191, end angle = 104, radius = 1.60]
   -- (32.61,34.55) arc[start angle = 104, end angle =  59, radius = 1.60]
   -- (38.82,31.37) arc[start angle =  59, end angle = -16, radius = 1.60]
   -- (37.54,22.56) arc[start angle = -16, end angle = -74, radius = 1.60]
   -- (29.44,19.46) arc[start angle = 286, end angle = 240, radius = 1.60]
   -- cycle
;

\draw[black,thick] (nx36y23) -- (nx34y27);
\draw[black,thick] (nx29y21) -- (nx36y23);
\draw[black,thick] (nx29y21) -- (nx22y25);
\draw[black,thick] (nx28y26) -- (nx29y21);
\draw[black,thick] (nx22y25) -- (nx21y30);
\draw[black,thick] (nx28y26) -- (nx22y25);
\draw[black,thick] (nx28y26) -- (nx34y27);
\draw[black,thick] (nx27y31) -- (nx28y26);
\draw[black,thick] (nx34y27) -- (nx33y33);
\draw[black,thick] (nx38y30) -- (nx34y27);
\draw[black,thick] (nx27y31) -- (nx33y33);
\draw[black,thick] (nx21y30) -- (nx27y31);

\path (12,29) node {$G_1$};
\path (22.5,40) node {$G_2$};
\path (41,39) node {$G_3$};
\path (44,21) node {$G_4$};
\path (34,13.5) node {$G_5$};

\end{tikzpicture}

\end{minipage} &

\begin{minipage}{0.48\textwidth}

\begin{tikzpicture}[x=0.17cm,y=0.17cm,>=latex]

\vertex{black!10} (nx32y15) at (32,15) {};
\vertex{black!10} (nx34y19) at (34,19) {};
\vertex{black!10} (nx36y15) at (36,15) {};
\vertex{black!10} (nx38y18) at (38,18) {};
\vertex{black!10} (nx42y21) at (42,21) {};
\vertex{black!10} (nx44y24) at (44,24) {};
\vertex{black!10} (nx43y27) at (43,27) {};
\vertex{black!10} (nx40y25) at (40,25) {};
\vertex{black!10} (nx41y35) at (41,35) {};
\vertex{black!10} (nx40y38) at (40,38) {};
\vertex{black!10} (nx35y38) at (35,38) {};
\vertex{black!10} (nx37y35) at (37,35) {};
\vertex{black!10} (nx13y27) at (13,27) {};
\vertex{black!10} (nx16y24) at (16,24) {};
\vertex{black!10} (nx15y31) at (15,31) {};
\vertex{black!10} (nx25y39) at (25,39) {};
\vertex{black!10} (nx25y35) at (25,35) {};
\vertex{black!10} (nx21y35) at (21,35) {};
\vertex{black!10} (nx20y39) at (20,39) {};

\Vertex{black} (nx34y27) at (34,27) {};
\Vertex{black} (nx28y26) at (28,26) {};
\Vertex{black} (nx29y21) at (29,21) {};
\Vertex{black} (nx36y23) at (36,23) {};
\Vertex{black} (nx38y30) at (38,30) {};
\Vertex{black} (nx33y33) at (33,33) {};
\Vertex{black} (nx27y31) at (27,31) {};
\Vertex{black} (nx21y30) at (21,30) {};
\Vertex{black} (nx22y25) at (22,25) {};

\fill[black,very nearly transparent]
      (29.19,20.33) arc[start angle = 286, end angle = 140, radius = 0.70]
   -- (33.46,27.45) arc[start angle = 140, end angle =  27, radius = 0.70]
   -- (36.63,23.31) arc[start angle =  27, end angle = -74, radius = 0.70]
   -- cycle;
\draw[black,very nearly transparent] (nx34y27) -- (nx29y21);
\draw[black,very nearly transparent] (nx34y27) -- (nx32y15);
\draw[black,very nearly transparent] (nx32y15) -- (nx38y18);
\draw[black,very nearly transparent] (nx36y15) -- (nx34y19);
\draw[black,very nearly transparent] (nx36y15) -- (nx38y18);
\draw[black,very nearly transparent] (nx32y15) -- (nx36y15);
\draw[black,very nearly transparent] (nx32y15) -- (nx29y21);
\draw[black,very nearly transparent] (nx34y19) -- (nx32y15);
\draw[black,very nearly transparent] (nx38y18) -- (nx34y19);
\draw[black,very nearly transparent] (nx36y23) -- (nx38y18);
\draw[black,very nearly transparent] (nx34y19) -- (nx36y23);
\draw[black,very nearly transparent] (nx29y21) -- (nx34y19);

\fill[black,very nearly transparent] 
      (37.46,30.72) arc[start angle = 127, end angle = -16, radius = 0.90]
   -- (36.87,22.75) arc[start angle = 344, end angle = 207, radius = 0.90]
   -- (33.20,26.60) arc[start angle = 207, end angle = 127, radius = 0.90]
   -- cycle;
\draw[black,very nearly transparent] (nx40y25) -- (nx34y27);
\draw[black,very nearly transparent] (nx43y27) -- (nx34y27);
\draw[black,very nearly transparent] (nx40y25) -- (nx44y24);
\draw[black,very nearly transparent] (nx40y25) -- (nx42y21);
\draw[black,very nearly transparent] (nx43y27) -- (nx40y25);
\draw[black,very nearly transparent] (nx44y24) -- (nx43y27);
\draw[black,very nearly transparent] (nx42y21) -- (nx44y24);
\draw[black,very nearly transparent] (nx36y23) -- (nx42y21);
\draw[black,very nearly transparent] (nx40y25) -- (nx36y23);
\draw[black,very nearly transparent] (nx38y30) -- (nx40y25);
\draw[black,very nearly transparent] (nx38y30) -- (nx43y27);

\fill[black,very nearly transparent] (35.5, 31.5) circle[x radius = 3.5, y radius = 1.3, rotate = -31] {};
\draw[black,very nearly transparent] (nx41y35) -- (nx40y38);
\draw[black,very nearly transparent] (nx37y35) -- (nx41y35);
\draw[black,very nearly transparent] (nx35y38) -- (nx37y35);
\draw[black,very nearly transparent] (nx40y38) -- (nx35y38);
\draw[black,very nearly transparent] (nx37y35) -- (nx40y38);
\draw[black,very nearly transparent] (nx37y35) -- (nx38y30);
\draw[black,very nearly transparent] (nx33y33) -- (nx37y35);
\draw[black,very nearly transparent] (nx35y38) -- (nx41y35);
\draw[black,very nearly transparent] (nx38y30) -- (nx35y38);
\draw[black,very nearly transparent] (nx41y35) -- (nx33y33);
\draw[black,very nearly transparent] (nx38y30) -- (nx41y35);
\draw[black,very nearly transparent] (nx33y33) -- (nx35y38);

\fill[black,very nearly transparent]
      (22.54,24.55) arc[start angle = 320, end angle = 191, radius = 0.70]
   -- (20.31,29.86) arc[start angle = 191, end angle =  99, radius = 0.70]
   -- (26.88,31.69) arc[start angle =  99, end angle = -40, radius = 0.70]
   -- cycle;
\draw[black,very nearly transparent] (nx25y35) -- (nx22y25);
\draw[black,very nearly transparent] (nx21y35) -- (nx20y39);
\draw[black,very nearly transparent] (nx25y35) -- (nx25y39);
\draw[black,very nearly transparent] (nx21y35) -- (nx25y35);
\draw[black,very nearly transparent] (nx25y39) -- (nx21y35);
\draw[black,very nearly transparent] (nx20y39) -- (nx25y39);
\draw[black,very nearly transparent] (nx25y35) -- (nx20y39);
\draw[black,very nearly transparent] (nx27y31) -- (nx25y35);
\draw[black,very nearly transparent] (nx21y35) -- (nx27y31);
\draw[black,very nearly transparent] (nx21y30) -- (nx21y35);

\fill[black,very nearly transparent] (21.5, 27.5) circle[x radius = 3.05, y radius = 1.3, rotate = -78.7] {};
\draw[black,very nearly transparent] (nx13y27) -- (nx15y31);
\draw[black,very nearly transparent] (nx16y24) -- (nx13y27);
\draw[black,very nearly transparent] (nx15y31) -- (nx16y24);
\draw[black,very nearly transparent] (nx21y30) -- (nx15y31);
\draw[black,very nearly transparent] (nx22y25) -- (nx16y24);
\draw[black,very nearly transparent] (nx13y27) -- (nx22y25);
\draw[black,very nearly transparent] (nx21y30) -- (nx13y27);

\draw[very thick, black!40]
      (21.21,23.61) arc[start angle = 240, end angle = 191, radius = 1.60]
   -- (19.43,29.69) arc[start angle = 191, end angle = 104, radius = 1.60]
   -- (32.61,34.55) arc[start angle = 104, end angle =  59, radius = 1.60]
   -- (38.82,31.37) arc[start angle =  59, end angle = -16, radius = 1.60]
   -- (37.54,22.56) arc[start angle = -16, end angle = -74, radius = 1.60]
   -- (29.44,19.46) arc[start angle = 286, end angle = 240, radius = 1.60]
   -- cycle
;

\draw[black, dashed,very thin] (nx29y21) -- (nx22y25);
\draw[black, dashed,very thin] (nx28y26) -- (nx29y21);
\draw[black, dashed,very thin] (nx28y26) -- (nx22y25);
\draw[black, dashed,very thin] (nx28y26) -- (nx34y27);
\draw[black, dashed,very thin] (nx27y31) -- (nx28y26);
\draw[black, dashed,very thin] (nx34y27) -- (nx33y33);
\draw[black, dashed,very thin] (nx27y31) -- (nx33y33);

\draw[black,thick] (nx38y30) -- (nx34y27);
\draw[black,thick] (nx36y23) -- (nx34y27);
\draw[black,thick] (nx29y21) -- (nx34y27);
\draw[black,thick] (nx21y30) -- (nx27y31);
\draw[black,thick] (nx22y25) -- (nx21y30);
\draw[black,thick] (nx29y21) -- (nx36y23);
\draw[black,thick] (nx33y33) -- (nx38y30);
\draw[black,thick] (nx38y30) -- (nx36y23);
\draw[black,thick] (nx22y25) -- (nx27y31);

\end{tikzpicture}

\end{minipage}
\end{tabular}
\caption{Illustration of the base case with $k=0$ and $p = 3$: the graph
corresponding to each degenerate leaf is replaced by a sparsifier on the
associated separator.}
\label{fig:base-case}
\label{fig:base-case}
\end{figure}

\begin{figure}[hbt]
\label{fig:induction-step}
\begin{tabular}{cc}

\begin{minipage}{0.48\textwidth}

\begin{tikzpicture}[x=0.4cm,y=0.4cm,>=latex]

\nellipse{black,nearly transparent} (a) at (8,10) {};
\nellipse{black,nearly transparent} (b) at (4,7) {};
\nellipse{black,nearly transparent} (c) at (12,7) {};
\nellipse{black,nearly transparent} (d) at (2,4) {};
\nellipse{black,nearly transparent} (g) at (14,4) {};

\nellipse{black,nearly transparent} (e) at (6,4) {};
\nellipse{black,nearly transparent} (f) at (10,4) {};
\nellipse{black,nearly transparent} (h) at (4,1) {};
\nellipse{black,nearly transparent} (i) at (8,1) {};

\draw[dashed] (11,12) -- (a.north);
\draw
  (a.south west) -- (b.north)
  (a.south east) -- (c.north)
  (b.south west) -- (d.north)
  (c.south east) -- (g.north)
;

\draw
  (b.south east) -- (e.north)
  (c.south west) -- (f.north)
  (e.south west) -- (h.north)
  (e.south east) -- (i.north)
;

\draw[very thick,black] 
  (6,-1) .. controls (13,-1) and (12,4) ..
  (12,6) .. controls (12,8) and (10,10) ..
  (8,10) .. controls (6,10) and (4,8) ..
  (3.5,7) .. controls (3,6) and (0,1) ..
  (0.5,0) .. controls (1,-1) and (1,-1) .. (6,-1);

\path[rounded corners=24pt,fill=black,very nearly transparent]
  (6,6) -- (1.5,0) -- (10.5,0) -- cycle;
\path[rounded corners=24pt,draw=black,dotted]
  (6,6) -- (1.5,0) -- (10.5,0) -- cycle;
\node (t1) at (6,2.5) {$T_1$};

\path[rounded corners=24pt,fill=black,very nearly transparent]
 (10,6.4) -- (7.6,2.8) -- (12.4,2.8) -- cycle;
\path[rounded corners=24pt,draw=black,dotted]
 (10,6.4) -- (7.6,2.8) -- (12.4,2.8) -- cycle;
\node (t2) at (10,4) {$T_2$};

\node at (8,8) {{\large $U$}};

\end{tikzpicture}

\end{minipage} &

\begin{minipage}{0.48\textwidth}

\begin{tikzpicture}[x=0.5cm,y=0.5cm,>=latex]

\path[clip] 
  (6,-1) .. controls (13,-1) and (12,4) ..
  (12,6) .. controls (12,8) and (10,10) ..
  (8,10) .. controls (6,10) and (4,8) ..
  (3.5,7) .. controls (3,6) and (0,1) ..
  (0.5,0) .. controls (1,-1) and (1,-1) .. (6,-1);

\nellipse{black,nearly transparent} (a) at (8,10) {};
\nellipse{black,nearly transparent} (b) at (4,7) {};
\nellipse{black,nearly transparent} (c) at (12,7) {};
\nellipse{black,nearly transparent} (d) at (2,4) {};


\draw[dashed] (11,12) -- (a.north);
\draw
  (a.south west) -- (b.north east)
  (a.south east) -- (c.north west)
  (b.south) -- (d.north east)
  (10,6.4) ++(0,-12pt) -- (c.south west)
;

\draw
  (6,6) ++(0,-11pt)-- (b.south east);

\path[rounded corners=24pt,fill=black,nearly transparent]
  (6,6) -- (1.5,0) -- (10.5,0) -- cycle;
\node (t1) at (6,2) {degenerate leaf};

\path[rounded corners=24pt,fill=black,nearly transparent]
 (10,6.4) -- (7.6,2.8) -- (12.4,2.8) -- cycle;

\node[align=center] (t2) at (10,3.8) {degenerate\\ leaf};

\node at (8,8.3) {{\large $\mathcal{T}'_U$}};

\end{tikzpicture}

\end{minipage}
\end{tabular}
\caption{Induction step: reducing the effective treewidth of $G[U]$
by creating degenerate leaves. $T_1,T_2$ are maximal subtrees rooted at separators of size $k$, converted to degenerate leaves.}
\label{fig:induction-step}
\end{figure}

\newpage
\section{Technical Ingredients}
\label{sec:tools}
We rely on several technical tools and ingredients that are either explicitly
or implicitly used in recent work on \medp.

\subsection{Moving Terminals}
We first describe a general tool (ideas of which are leveraged also in
previous work, cf. \cite{CKS-planar-constant,CKS-treewidth}) that allows
us to reduce a \medp instance to a simpler one by
moving the terminals to a specific set of new locations (nodes). The
two instances are equivalent for \medp, up to an additional constant
congestion and constant factor approximation.


\begin{lemma}
\label{lem:rr}
Suppose we have a (matching) instance $(G,H)$ of \medp{} with some
solution $\bx$ to its LP relaxation. Let $val(\bx) = \sum_P x_P$. Suppose
that for some $S,R \subseteq V(G)$, there is a flow which routes
$x(s)$ from each $s \in S$, and all flow terminates in $R$.  Then
there is another (matching) instance $(G',H')$ with the following
properties.
\begin{enumerate}\setlength{\itemsep}{0pt}
\item The new instance has a (fractional) solution of value at least
  $val(\bx)/5$,
\item If there is an integral solution for $(G',H')$ of
  congestion $c$, then there is an integral solution for $(G,H)$
  of the same value and congestion $c+2$,
\item $G',H'$ is obtained from $G,H$ by hanging off pendant stars from
  some of the nodes.
\end{enumerate}
\end{lemma}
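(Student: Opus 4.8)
The plan is to move all terminals of $S$ onto fresh pendant leaves attached near nodes of $R$, using the given $S$-to-$R$ flow to build the connecting paths. First I would extract a forest $T\subseteq G$ spanning $S$, each of whose components meets $R$; treating components independently, I may assume $T$ is a tree rooted at some $r\in R$. The crux is to partition $S$ into \emph{clusters} $S_1,\dots,S_\ell$, each carrying roughly one unit of marginal (formally $1\le x(S_i)\le 2$, with the single exception that $S_1$ may be lighter provided its associated subtree contains $r$), so that each $S_i$ comes with a subtree $T_i$ of $T$ spanning it and the $T_i$ are pairwise edge-disjoint. I would build this by a bottom-up greedy peeling: repeatedly locate a deepest vertex $v$ whose rooted subtree $T'$ already satisfies $x(T')\ge 1$, then carve off either all of $T'$ (if the children-subtrees of $v$ together carry less than $1$) or the shortest prefix of child-subtrees whose marginal first reaches $1$; in either case the carved piece induces a tree of marginal in $[1,2]$, and one recurses on what remains of $T$ and $S$.

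Next, since each cluster feeds at least one unit of flow into $R$ (the light exception $S_1$ being handled by the trivial path at $r$), a routine Menger/max-flow argument — add a dummy source adjacent to each $S_i$ and a dummy sink fed by every node of $R$ — yields pairwise edge-disjoint paths $P_1,\dots,P_\ell$, with $P_i$ running from some $s_i\in S_i$ to some $r_i\in R$. Now build $G'$ by appending, for each $i$, a new degree-one node $u_i$ joined to $r_i$ (extending $P_i$ to reach $u_i$), and form $H'$ by relocating every terminal of $S_i$ to $u_i$ and then, since these coincidences violate the matching property, deporting each such terminal onto its own fresh leaf. This exhibits $G',H'$ as $G,H$ with pendant stars hung off, which is property~(3).

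For property~(1) I would push $\bx$ forward: an $s$-$t$ flow path $P$ with $s\in S_i$ (resp.\ $t\in S_j$) is extended by concatenating $P_i$ and the unique $ss_i$-path in $T_i$ (resp.\ $P_j$ and the $tt_j$-path in $T_j$), setting $x'_{P'}:=x_P$. Value is preserved; for congestion, note that each $P_i$ and each $T_i$ is charged at most $2$ units (only by terminals of $S_i$), and an edge of $G$ lies in at most one $P_i$ and at most one $T_i$, so the extra congestion is at most $4$ and the total at most $5$ — scaling down by $5$ gives a feasible fractional solution of value at least $|\bx|/5$. For property~(2), given an integral congestion-$c$ routing of $(G',H')$ with simple paths, extend each path ending at a relocated terminal $u_i$ by $P_i$ followed by the $s_is$-path in $T_i$ and shortcut to a simple path; because $r_iu_i$ has capacity $1$, each $P_i$ is reused at most once and each $T_i$-path at most once, and since the $P_i$'s are edge-disjoint and the $T_i$'s are edge-disjoint, every edge of $G$ gains at most $1+1$ over the original congestion $c$, yielding congestion $c+2$.

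The main obstacle is the clustering step: one must simultaneously guarantee the marginal bounds on every $S_i$, a spanning subtree for each, and global edge-disjointness of those subtrees, while keeping the residual recursion well-defined — in particular correctly tracking the one light cluster $S_1$ that absorbs the root. Everything downstream (the Menger extraction of the $P_i$ and the two congestion accountings) is bookkeeping; the only real subtlety there is that a single edge may be used by one $P_i$ \emph{and} one $T_i$, which is precisely what forces the constants $5$ and $+2$ in the statement.
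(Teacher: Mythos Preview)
Your proposal is correct and matches the paper's proof essentially line for line: the same rooted-forest clustering into pieces of marginal in $[1,2]$ via the deepest-heavy-subtree / prefix-of-children peeling, the same Menger extraction of edge-disjoint $S_i$-to-$R$ paths $P_i$, the same pendant-leaf construction of $(G',H')$, and the identical $1+2+2=5$ and $c+1+1=c+2$ congestion accountings. You have also correctly identified the clustering step as the one place requiring care; everything else is indeed bookkeeping.
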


\begin{proof}
Let $T$ be a forest of $G$ spanning all the nodes of $S$ such that each
component of $T$ contains at least one node of $R$. We consider each
component of $T$ separately, so we assume here that $T$ is a tree. Let
$r \in R \cap V(T)$ and take this as a root of $T$.

We partition $S$ into subsets $S_1,\ldots,S_\ell$ with the following properties.
\begin{itemize}\setlength{\itemsep}{0pt}
\item[$(i)$] for each $i \in [1,\ell]$, $1 \leq x(S_i) \leq 2$ (except
  possibly $S_1$ may have $x(S_1) < 1$),
\item[$(ii)$] to each $S_i$ is associated a subtree $T_i$ of $T$ spanning $S_i$,
\item[$(iii)$] $T_1,\ldots, T_l$ are edge-disjoint.
\end{itemize}
We achieve this by  the following iterative scheme.
\begin{itemize}\setlength{\itemsep}{0pt}
\item If $x(T) \leq 2$, choose $S_1 = S$, $T_1 = T$, $\ell = 1$. Else:
\item Find a deepest node $v$ in $T$, such that the subtree $T'$
  rooted at $v$ has $x(T') \geq 1$.
\item let $v_1,\ldots,v_i$ be the children of $v$, and $T'_1,\ldots,T'_i$
  be the subtrees rooted at $v_1,\ldots,v_i$ respectively. If
  $\sum_{j=1}^i x(T'_j) < 1$, then define $A \eqdef E(T')$ and $U
  \eqdef V(T') \cap S$.  Otherwise, find the smallest $i'$ such that
  $x(B) \geq 1$, where $B \eqdef \bigcup_{j=1}^{i'} V(T'_j) \cap S$.
  Then set $A \eqdef \bigcup_{j=1}^{i'} (E(T'_j) \cup \{vv_j\})$. In
  both cases, we get $1 \leq x(A) \leq 2$, and $A$ induces a tree.
  (Note that in the latter case $v$ was not placed in $B$ but does lie
  in $V(A)$.)
\item proceed inductively on $T - A$ and $S - B$, to find $S_1,\ldots
  S_{\ell'}$ and $T_1,\ldots T_{\ell'}$. Then $\ell \eqdef \ell' + 1$, $S_\ell \eqdef B$
  and $T_\ell \eqdef A$.
\end{itemize}
Note that with this scheme, $x(S_1)$ might be less than $1$, but in
that case $T_1$ contains the root node from $R$. As the hypotheses
$x(S_i) \geq 1$ is only used to prove the existence of edge-disjoint
paths from the $S_i$'s to $R$, this does not pose a problem (we can
simply take the trivial path at $r$ for $S_1$).

Each $S_i$ is called a \emph{cluster}. As each cluster sends at least
one unit of flow to $R$ (with the exception of $S_1$ already
mentioned), there is a family of edge-disjoint paths $P_1,\ldots,P_l$,
where each $P_i$ goes from a node $s_i \in S_i$ to some node $r_i \in
R$.  This can be seen by adding dummy source nodes (one for each
$S_i$) adjacent to nodes in each $S_i$, and a single dummy sink node
adjacent from each $r \in R$ (a detailed proof is found in \cite{CKS-planar-constant}).

 We now define a new instance of \medp{} $G',H'$. $G'$ is obtained
 from $G$ by adding $\ell$ new nodes $u_1,\ldots,u_\ell$ with degree one,
 where $u_i$ is adjacent to $r_i$. The capacity of a new edge $r_iu_i$
 is $1$, and we re-define $P_i$ as extending to $u_i$.  We identify
 each terminal in $S$ with the $u_i$ associated with its cluster as
 follows. Let $\phi(s) \eqdef s$ if $s \notin S$ and $\phi(s) = u_i$
 if $s \in S_i$. Then let $\phi(H) \eqdef\{\phi(s)\phi(t)~:~st \in
 H\}$. These demands do not yet form a matching, so $H'$ is obtained
 from $\phi(H)$ by simply deporting each of the terminals in $S$ to
 new nodes forming leaves.

  We show how to transform $\bx$ into a fractional flow $\bx'$ in $G',H'$
  with congestion $5$, such that $\bx'$ has the same value as $\bx$. For
  that, we only extend the flow paths for the demands in $H' \setminus
  H$. Let $st \in H$ be such a demand and $s't' = \phi(s)\phi(t)$ its
  image. For any $st$-path $P$ with value $x_P$, let $P'$ be the path
  obtained from $P$ by:
\begin{itemize}\setlength{\itemsep}{0pt}
\item if $s \in S_i$ for some $i$, concatenate $P_i$ and the unique $ss_i$-path of $T_i$,
\item similarly if $t \in S_j$ for some $j$, concatenate $P_j$ and the unique $ss_j$-path of $T_j$.
\end{itemize}

Then, set $x'_{P'} \eqdef x_P$. Then $\bx'$ has the same value as $\bx$ by
construction, but has higher congestion.  Any of the edges $r_iu_i$
(or additional leaves from an $u_i$) have congestion at most $2$ by
construction; thus it is enough to focus on edges within $G$. The
original flow paths incur congestion of at most 1 on any edge, so we
address the added congestion from extending the flow paths. The edges
of any $P_i$ are charged by at most $2$ units (by terminals within
$S_i$) and each $T_i$ is also charge by at most $2$ units. As an edge
may be contained in at most one $P_i$ and at most one $T_i$ the extra
congestion is bounded by $4$. Hence the total congestion of $x'$ is at
most $5$, and in particular, this implies that the fractional optimal
solution in $G'$ is at least $\frac{1}{5} \opt$.

 Suppose now that we have an integral solution to \medp{} for $G',H'$
 with congestion $c$. We show how to transform it into a solution for
 $G,H$ of the same value with congestion $c+2$. Since we can assume
 the flow paths used are simple, we only need to address the flow
 paths for demands in $H' \setminus H$. Let $P$ be any path in the
 solution satisfying a commodity associated with a node $s' =
 \phi(s)$, where $s$ is in $S_i$. Then we extend $P$ by concatenating
 $P_i$ and the unique $s_is$-path of $T_i$ to it.  We may also
 shortcut this to obtain a simple path. Again, this clearly defines a
 solution of same value to the original problem. Since the capacity of
 $r_ix_i$ is one, we use each $P_i$ at most once, and a path in $T_i$
 is used for only one such $s \in S_i$.  As the paths $P_i$ are
 disjoint, and the subtrees are disjoint, each edge is used at
 most $c+2$ times: $c$ for the original routing, $1$ for the $P_i$
 paths, and $1$ for the paths inside the $T_i$'s.
\end{proof}

\subsection{Sparsifiers}

Let $G=(V,E)$ be a (multi) graph and let $S \subset V$. We are
interested in creating a graph $H$ only on the node set $S$ that acts
as a proxy for routing between nodes in $S$ in the original graph $G$.
The notion of sparsifiers, introduced in
\cite{moitra2009approximation}, has many possible formulations depending on the  various applications.
For instance, a Gomory-Hu Tree can be viewed as a sparsifier which encodes pairwise maximum flows in a graph.
We are interested in the following model.
We
say that $H=(S,E_H)$ is a {\em $(\sigma,\rho)$-sparsifier} for $S$ in $G$ if
the following properties are true:
\begin{itemize}
\item any feasible (fractional) multicommodity flow in $G$ with the
  endpoints in $S$ is (fractionally) routable in $H$ with congestion
  at most $\sigma$
\item any integer multicommodity flow  in $H$ is integrally routable
  in $G$ with congestion $\rho$.
\end{itemize}

Existing sparsifier results mostly focus on fractional routing (or cut
preservation) while we need {\em integer} sparsifiers in the sense of
the second point above. Chuzhoy~\cite{Chuzhoy12} developed an integer
sparsifier result but it uses Steiner nodes and has limitations that
preclude its direct use in our setting. Instead, a simple
argument based on splitting-off 
gives the following weak
sparsifier result that suffices for our purposes.

\begin{theorem}
  \label{thm:sparsifier}
  Let $G=(V,E)$ be a graph and $S \subset V$. There
  is a $(|S|^2,2)$-sparsifier for $S$ in $G$.
\end{theorem}

\begin{proof}
  First, by standard $T$-join theory, $G$ contains a subset $E'$ of
  edges, such that if we add an extra copy of each such edge, we
  obtain an Eulerian graph $G'$.  We may now apply splitting off
  repeatedly at the (even degree) nodes of $V-S$.  Each operation
  preserves the minimum cut between any pair of nodes $u,v \in S$.
  This ultimately results in a (multi) graph $H=(S,F)$ on $S$.
  We claim that $H$ is the desired sparsifier.

  Note that any integral routing in $H$ can easily mapped to an
  integral routing in $G'$ since the edges in $F$ map to
  edge-disjoint paths in $G'$.  Since $G'$ had potentially an extra
  copy of an edge from $G$ we see that $\rho = 2$.

  Now consider any fractional multicommodity flow in $G$ between nodes
  in $S$. Say $d(uv)$ flow is routed between $u,v \in S$.  Then
  $d(u,v) \le \lambda_G(u,v)$ where $\lambda_G(u,v)$ is the capacity
  of a min $u$-$v$ cut in $G$. Since the splitting-off operation
  preserved connectivity, $\lambda_H(u,v) \geq \lambda_G(u,v)$, hence
  we can route $d(u,v)$ flow between $u$ and $v$ in $H$.
  However, we have $|S|(|S|-1)/2$ distinct pairs of nodes in $S$
  and routing their flows simultaneously in $H$ can result
  in a congestion of at most $|S|(|S|-1) \le |S|^2$ since each
  individual flow can be feasibly routed in $H$. This shows that
  $\sigma \le |S|^2$.
\end{proof}

\begin{remark}
  \label{rem:sparsifier}
  The proof of the preceding theorem shows that the congestion
  parameter $\rho$ can be chosen to be an additive $1$ if $G$ is a
  capacitated graph.
\end{remark}

\subsection{Routings through a small set of nodes}

We use as a black box the following result from Section 3.1\ in
\cite{CKS-sqrtn}.
\begin{prop}
\label{prop:oldsinglenode}
Let $G,H$ be a \medp instance and let $\bx$ be a fractional solution
such that there is a node $v$ that is contained in every flow path
with positive flow. Then there is a polynomial time algorithm
that routes at least $\frac{1}{12}\sum_i x_i$ pairs
from $H$ on edge-disjoint paths.
\end{prop}

\begin{remark}
  The bound of $1/12$ in the preceding proposition is not explicitly
  stated in \cite{CKS-sqrtn} but can be inferred from the arguments.
\end{remark}


Now suppose that instead of a single node $v$, there is a subset $S$
that intersects every flow path in a fractional solution $\bx$. It is
then easy to see that there is a node $v$ that intersects flow paths
of total value at least $\sum_i x_i/|S|$. We can then apply the preceding
proposition to claim that we can route $\frac{1}{12|S|}\sum_i x_i$ pairs.
We combine this with a simple re-routing argument that is relevant
to our algorithm to obtain the following.

\begin{prop}
\label{prop:withreroute}
Let $G,H$ be a matching instance of \medp and let $\bx$ be a feasible
fractional solution for it.  Suppose that there is also a second flow
that routes at least $x_i/\alpha$ flow from each terminal to some $S
\subseteq V$ where $\alpha \ge 1$. Then there is an integral routing
of at least $\frac{1}{36\alpha |S|}\sum_i x_i$ pairs.
\end{prop}

\begin{proof}
  Let $v \in S$ be the terminal which receives the most flow. Clearly
  this is of value at least $\frac{\sum_i x_i}{\alpha |S|}$. Consider
  a pair $s_it_i$ such that one of the end points, say $s_i$ sends
  $y_i \le x_i/\alpha$ flow to $v$. The other end point $t_i$ may send
  less than $y_i$ (or no flow) to $v$. We may then create a $y_i$ flow
  from $t_i$ to $v$ by using the $x_i$-flow between $s_i,t_i$ and the
  flow from $s_i$ to $v$. It is easy to see that overlaying all of the
  flows will cause capacities to be violated by a factor of at most
  $3$; we scale down the flows by a factor of $3$ to satisfy the
  capacity constraints. Via this process we can find a new fractional
  solution $\bx'$ such that (1) all the flow paths contain $v$ and (2)
  $\sum_i x'_i \ge \frac{\sum_i y_i}{3} \ge \frac{\sum_i x_i}{3\alpha |S|}$.
  The result now follows by applying Proposition~\ref{prop:oldsinglenode}.
\end{proof}

\section{MEDP in Bounded Genus Graphs}
\label{sec:genus}
In this section we consider \medp and obtain an approximation ratio
that depends on the genus of the given graph. Throughout we use $g$ to
denote the genus of the given graph $G$; we   assume that
$g > 0$ since we already understand planar graphs. We assume that
the given instance of \medp is a matching instance in which
the terminals are degree $1$ leaves and that each terminal participates
in exactly one pair. An instance of \medp with this restriction is
characterized by a tuple $(G,X,M)$ where $G$ is the graph and $X$ is
the set of terminals and $M$ is a matching on the terminals
corresponding to the given pairs.  We  also assume that the degree
of each node is at most $4$; this can be arranged without changing the
genus by replacing a high-degree node by a grid (see \cite{Chekuri04b} for the
description for planar graphs which generalizes easily for any
surface).  Let $\bx$ be a feasible fractional solution to the
multicommodity flow based linear programming relaxation.  We use again the
terminology $\val{x}$ to denote $\sum_i x_i$, the fractional
amount of flow routed by $\bx$.  We let $\beta_g$ denote the flow-cut
gap for product-multicommodity flow instances in a graph of genus $g$;
this is known to be $O(\log (g+1))$ \cite{lee2010genus}. We  implicitly
assume that $\beta_g$ is an effective upper bound on the flow-cut gap
in that there is a polynomial-time algorithm that outputs a sparse cut
no worse than $\beta_g$ times the maximum concurrent flow for
a given product multicommodity instance on a genus $g$ graph.
The main result is the following.

\begin{theorem}
  \label{thm:genus_main}
  Let $\bx$ be a feasible fractional solution to a \medp instance in a
  graph of genus $g > 0$. Then
  $\Omega(\val{\bx}/\gamma_g)$ pairs can be routed with congestion
  $3$ where $\gamma_g = O(g \log^2 (g+1))$.
\end{theorem}

As we remarked previously we do not have currently have a polynomial-time
algorithm for the guarantee that we establishes in the preceding theorem.
There are three high-level ingredients in establishing the preceding theorem.
\begin{itemize}
\item A constant factor approximation with congestion $2$ for planar
  graphs based on the LP relaxation \cite{CKS-planar-constant,SeguinS11}.
\item An $O(g)$-approximation with congestion $3$ for graphs with
  genus $g$ when the terminals are well-linked. This extends the
  result in \cite{CKS-well-linked} for planar graphs to
  bounded genus graphs via the use of grid minors.
\item A modification of the well-linked decomposition of
  \cite{CKS-well-linked} that terminates the decomposition
   when the graph is planar even if the terminals are not well-linked.
\end{itemize}

We first give some formal definitions on well-linked sets;
the material follows \cite{CKS-well-linked} closely.

\medskip
\noindent {\bf Well-linked Sets:} Let $X \subseteq V$ be a set of
nodes and let $\pi : X \rightarrow [0,1]$ be a weight function on $X$.
We say $X$ is {\em $\pi$-flow-well-linked} in $G$ if there is a
feasible multicommodity flow in $G$ for the following demand matrix:
between every unordered pair of terminals $u,v \in X$ there is a
demand $\pi(u)\pi(v)/\pi(X)$ (other node pairs have zero demand).  We
say that $X$ is {\em $\pi$-cut-well-linked} in $G$ if $|\delta(S)| \ge
\pi(S \cap X)$ for all $S$ such that $\pi(S \cap X) \le \pi(X)/2$.  It
can be checked easily that if a set $X$ is $\pi$-flow-linked in $G$,
then it is $\pi/2$-cut-linked.  If $\pi(u) = \alpha$ for all $u \in
X$, we say that $X$ is {\em $\alpha$-flow(or cut)-well-linked}.  If
$\alpha=1$ we simply say that $X$ is well-linked.  Given $\pi:X
\rightarrow [0,1]$ one can check in polynomial time whether $X$ is
$\pi$-flow-well-linked or not via linear programming.

One can efficiently find an approximate sparse cut
if $X$ is not $\pi$-flow-well-linked via the algorithmic aspects
of the product flow-cut gap; the lemma below follows from
\cite{lee2010genus}.

\begin{lemma}
\label{lem:sparsecut}
Let $G=(V,E)$ be  a graph of genus at most $g > 1$.
Let $X \subseteq V$ and $\pi : X \rightarrow [0,1]$. There is
a polynomial-time algorithm that given $G$, $X$ and $\pi$ decides
whether $X$ is $\pi$-flow-well-linked in $G$ and if not outputs
a set $S \subseteq V$ such that $\pi(S) \le \pi(V\setminus S)$ and
$|\delta(S)| \le \beta_g \cdot \pi(S)$ where $\beta_g = O(\log (g+1))$.
\end{lemma}


We now formally state the theorems that correspond to the high-level
ingredients. The first is a constant factor approximation for
routing in planar graphs.

\begin{theorem}[\cite{CKS-planar-constant,SeguinS11}]
  \label{thm:planar-constant}
  Let $\bx$ be a feasible fractional solution to a \medp instance in
  a planar graph. Then there is a polynomial-time algorithm that routes
  $\Omega(\val{\bx})$ pairs with congestion $2$.
\end{theorem}

The second ingredient is an $O(g)$-approximation if the terminals
are well-linked. More precisely we have the following theorem which
we prove in Section~\ref{subsec:well-linked-genus}.
\begin{theorem}
  \label{thm:well-linked-genus}
  Let $\bx$ be a feasible fractional solution to a \medp instance in a
  graph of genus $g > 0$. Moreover, suppose the terminal set $X$ is
  $\pi$-flow-well-linked in $G$ where $\pi(v) = \rho \cdot x(v)$ for some
  scalar $\rho \le 1$. Then there is an algorithm
  that routes $\Omega(\rho \cdot \val{\bx}/g)$ pairs with congestion
  $3$.
\end{theorem}

The last ingredient is the following adaptation of the well-linked
decomposition from \cite{CKS-well-linked}.
\begin{theorem}
  \label{thm:genus-decomp}
  Let $(G,X,M)$ be an instance of \medp in a graph of genus at most
  $g$ and let $\bx$ be a feasible fractional solution. Then there is a
  polynomial-time algorithm that decomposes the given instances
  into several instances $(G_1,X_1,M_1),\ldots,(G_h,X_h,M_h)$
  with the following properties.
  \begin{itemize}
  \item The graphs $G_1,G_2,\ldots,G_h$ are node-disjoint subgraphs of $G$.
  \item For $1 \le j \le h$, $M_j \subseteq M$ and the end points of
    $M_j$ are in $G_j$.
  \item For $1 \le j \le h$, there is a feasible fractional solution
    $\bx^j$ for the instance $(G_j,X_j,M_j)$ such that $\sum_j \val{\bx^j}
    = \Omega(\val{\bx})$.
  \item For $1 \le j \le h$, $G_i$ is planar or the terminals $X_j$
    are $\bx^j/(10\beta_g \log (g+1))$-flow-well-linked  in $G_j$.
  \end{itemize}
\end{theorem}

Assuming the preceding three theorems we finish the proof of
Theorem~\ref{thm:genus_main}. Let $G$ be a graph of genus $\le g$.  We
apply the decomposition given by Theorem~\ref{thm:well-linked-genus}
which reduces the original problem instance $(G,X,M)$ to a collection
of separate instances $(G_1,X_1,M_1),\ldots,(G_h,X_h,M_h)$.  Note that
pairs in these new instances are from the original instance and
routings in these instances can be combined into a routing in the
original graph $G$ since the $G_1,G_2,\ldots,G_h$ are node-disjoint
(and hence also edge-disjoint). The total fractional solution value in
the new instances is $\Omega(\val{\bx}/\beta_g \log (g+1))$. If $G_j$
is planar we use Theorem~\ref{thm:planar-constant} to route
$\Omega(\val{\bx^j})$ pairs from $M_j$ in $G_j$ with congestion
$2$. If $G_j$ is not planar, then $X_j$ is $\bx^j/(10\beta_g \log
(g+1))$-flow-well-linked. Then, via
Theorem~\ref{thm:well-linked-genus}, we can route $\Omega(\val{\bx^j}/(g
\beta_g \log (g+1)))$ pairs from $M_j$ in $G_j$ with congestion $4$. Thus
we route in total $\Omega(\sum_j \val{\bx_j}/(g \beta_g \log (g+1))$ pairs.
Since $\sum_j \val{\bx_j} = \Omega(\val{x})$, we route a total of
$\Omega(\val{\bx}/(g \beta_g \log (g+1))) = \Omega(\val{\bx}/(g \log^2 (g+1)))$
pairs from $M$ in $G$ with congestion $3$.

\subsection{Proof of Theorem~\ref{thm:genus-decomp}}
\label{subsec:genus-decomp}

We start with a well-known fact.

\begin{prop}
  \label{prop:genus}
  Let $G=(V,E)$ be a connected graph of genus $g$. Let $S \subset V$
  be such that $G_1 = G[S]$ and $G_2=G[V\setminus S]$ are both connected.
  Then $g_1+g_2 \le g$ where $g_1$ is the genus of $G$ and $g_2$ is the
  genus of $G_2$.
\end{prop}

The algorithm below is an adaptation of the well-linked decomposition
algorithm from \cite{CKS-well-linked} that recursively partitions a
graph if the terminals are not well-linked (with a certain
parameter). In the adaptation below we stop the partitioning if the
graph becomes planar even if the terminals are not well-linked. We
start with an instance $(G,X,M)$ and an associated fractional solution
$\bx$. We fix a particular selection of flow paths for the solution $\bx$.
The algorithm recursively cuts $G$ into subgraphs by removing edges.
The flow paths that use the removed edges are lost and each connected
component retains flow corresponding to those flow paths which are
completely contained in that component.

\smallskip
\noindent
{\bf Decomposition Algorithm:}

\begin{enumerate}
\item \label{alg:term1} If $\val{\bx} < 10 \beta_g \log (g+1)$ or if $G$ is
  a planar graph stop and output $(G,X,M)$ with fractional solution
  $\bx$.

\item Else if $X$ is $\bx/(10 \beta_g \log (g+1))$ flow-well-linked in $G$ then
  stop and output $(G,X,M)$ with fractional solution $\bx$.

\item Else find a sparse cut $\delta(S)$ such that $x(S) \le
  \val{\bx}/2$ and $|\delta(S)| \le 1/(10 \log (g+1)) \cdot x(S)$. Let
  $G_1=G[S]$ and $G_2=G[V\setminus S]$. (Assume wlog that $G_1,G_2$
  are connected).  Let $(G_1,X_1,M_1) $ and $(G_2,X_2,M_2)$ be the
  induced instances on $G_1$ and $G_2$ with fractional solutions $\bx^1$
  and $\bx^2$ respectively.  Recurse
  separately on $(G_1,X_1,M_1) $ and $(G_2,X_2,M_2)$.
\end{enumerate}

We now prove that the above algorithm outputs a decomposition as
stated in Theorem~\ref{thm:genus-decomp}.  The only property that is
non-trivial to see is the one about the total flow retained in the
decomposition. As in \cite{CKS-well-linked} it is easier to upper
bound the total number of edges cut by the algorithm which in turn
upper bounds the total amount of flow from the original solution $\bx$
that is lost during the decomposition. We write a recurrence for this
as follows.  Let $L(a,r)$ be total number of edges cut by the
algorithm if $a$ is the amount of flow in the graph and $r\le g$ is
the genus. The base case is $L(a,0) = 0$ since the algorithm stops
when the graph is planar. The algorithm cuts and recurses only when
the terminals are not $\bx/(10 \beta_g \log (g+1))$ flow-well linked.
Suppose $H$ is the current graph with flow value $a = \sum_v x(v)/2$
and genus $r$, and $H$ is partitioned into $H_1$ and $H_2$. Let $a_1$
and $a_2$ be the total flow values in $H_1$ and $H_2$ and let
$r_1$ and $r_2$ be their genus respectively. We have $a_1+a_2 \le a$ and
by Proposition~\ref{prop:genus} we have $r_1+r_2 \le r$.  We claim
that the number of edges cut is at most $\frac{1}{4 \log (g+1)}
\min\{a_1,a_2\}$. To see this, suppose $\delta(S)$ is the sparse cut
in $H$ that resulted in $H_1$ and $H_2$ with $H_1=H[S]$ and $H_2 =
H[V\setminus S]$ and $a_1 = \min\{a_1,a_2\}$. Let $\bx$ be the
fractional solution in $H$ and $\bx'$ the solution after the
partitioning. We have $a = x(V(H))/2$. Since the terminals are not
$\bx/(10 \beta_g \log (g+1))$ flow-well-linked in $H$, by
Lemma~\ref{lem:sparsecut}, $|\delta(S)| \le \beta_g \cdot x(S)/(10
\beta_g \log (g+1)) \le x(S)/(10 \log (g+1))$. We have $a_1 =
x'(S)/2$. Moreover, $x'(S) \ge x(S) - 2 |\delta(S)|$ since a flow path
$p$ crossing the cut with flow $f_p$ can contribute at most $2f_p$ to
the reduction of the marginal values of $x(S)$ at its end points.
Thus $2a_1 = x'(S) \ge x(S) -2 |\delta(S)| \ge (10 \log (g+1) - 2)
|\delta(S)| \ge 8 |\delta(S)|$ since $g \ge 1$, which implies
that $|\delta(S)| \le a_1/4$, as claimed.

Therefore we have the following recurrence for the total
number of edges cut in the overall decomposition:

$$L(a,r) \le L(a_1,r_1) + L(a_2,r_2) + \frac{1}{4 \log (g+1)} \min\{a_1,a_2\}.$$

We prove by induction on $r$ and number of nodes of $G$ that for $r
\le g$, $L(a,r) \le \frac{\log (r+1)}{2\log (g+1)} \cdot a$. We had
already seen the base case with $r = 0$ since $L(a,r) = 0$.  By
induction $L(a_1,r_1) \le \frac{\log (r_1+1)}{2\log (g+1)} \cdot a_1$
and $L(a_2,r_2) \le \frac{\log (r_2+1)}{2\log (g+1)} \cdot a_2$.
Since $r_1+r_2 \le r$, $\min\{r_1+1,r_2+1\} \le (r+1)/2$. It is not
hard to see that $a_1 \log (r_1+1) + a_2 \log (r_2+1) \le (a_1+a_2)
\log (r+1) - \min\{a_1,a_2\} \log 2$. Using the recurrence.
\begin{eqnarray*}
  L(a,r) & \le & L(a_1,r_1) + L(a_2,r_2) + \frac{1}{4 \log (g+1)} \min\{a_1,a_2\} \\
  & \le & \frac{\log (r_1+1)}{2\log (g+1)} \cdot a_1 + \frac{\log (r_2+1)}{2\log (g+1)} \cdot a_2 + \frac{1}{4 \log (g+1)} \min\{a_1,a_2\} \\
  & \le &  \frac{\log (r+1)}{2 \log (g+1)} (a_1 + a_2)  - \frac{\log 2}{2\log (g+1)} \min\{a_1,a_2\} + \frac{1}{4 \log (g+1)} \min\{a_1,a_2\} \\
  & \le & \frac{\log (r+1)}{2 \log (g+1)} \cdot a.
\end{eqnarray*}

Thus $L(\val{\bx},g) \le \frac{\log (g+1)}{2 \log (g+1)} \val{\bx} \le
\val{\bx}/2$.  Hence the total flow that remains after the
decomposition is at least $\val{\bx}/2$.

\subsection{Proof of Theorem~\ref{thm:well-linked-genus}}
\label{subsec:well-linked-genus}

We start with a grouping technique from \cite{CKS-well-linked} that
boosts the well-linkedness.

\begin{theorem}
  \label{thm:boost}  Let $\bx$ be a feasible fractional solution to an
  \medp instance $(G,X,M)$.  Moreover, suppose the terminal set $X$ is
  $\pi$-flow-well-linked in $G$ where $\pi(v) = \rho \cdot x(v)$ for
  some scalar $\rho \le 1$. Then there is a polynomial-time algorithm
  that routes $\Omega(\rho \cdot \val{\bx})$ pairs of $M$ edge-disjointly or
  outputs a new instance $(G,X',M')$ where $M' \subset M$ and $X'$ is
  well-linked and $|M'| = \Omega(\rho |M|)$.
\end{theorem}

Using the preceding theorem we assume that we are working with an
instance $(G,X,M)$ where $X$ is well-linked. Recall that $G$ has genus
at most $g$ and degree of each node is at most $4$.  We use the
observation below (see \cite{Chekuri04b}) that relates the
size of a well-linked set and the treewidth.
\begin{lemma}
  Let $G$ be a graph with maximum degree $\Delta$ and let $X \subseteq V$ be
  a well-linked set in $G$. Then the treewidth of $G$ is $\Omega(|X|/\Delta)$.
\end{lemma}

Thus we can assume that $G$ has treewidth $\Omega(|X|)$.  Demaine
\etal \cite{Demaineetal-bidimensional} showed the following theorem on the size of a grid minor in
graphs of genus $g$ following the work of Robertson and Seymour.

\begin{theorem}[\cite{Demaineetal-bidimensional}]
  Let $G$ be a graph of genus at most $g$. Then $G$ has a grid minor
  of size $\Omega(h/g)$ where $h$ is the treewidth of $G$.
\end{theorem}

Following the scheme from \cite{CKS-well-linked}, one can use the grid
minor as a cross bar to route a large number of pairs from $M$
provided we can route $\Omega(|X|/g)$ terminals to the ``interface''
of the grid-minor of size $\Omega(|X|/g)$ that is guaranteed to exist
in $G$. We can view the grid minor as rows and columns.  In our
current context we take every other node in the first row of the
grid as the interface of the grid-minor. Each node $v$ of the minor
corresponds to a subset of connected nodes $A_v$ in original graph $G$ that
are contracted to form $v$. To simplify
notation we say that $S \subset V$ is the interface of a grid-minor
in $G$ if $|S \cap A_v| = 1$ for each interface node $v$ of the
grid-minor. The following lemma is essentially implicit in previous
work, in particular \cite{CKS-planar} used in the context of planar
graphs.

\begin{lemma}
  \label{lem:grid-routing}
  Suppose $G=(V,E)$ contains a $h \times h$ grid as a minor. Let $S
  \subseteq V$ be the interface of the grid-minor. Then $S$ is
  well-linked in $G$. Moreover, any matching $M$ on $S$ is routable
  in $G$ with congestion $2$.
\end{lemma}

The key technical difficulty is to ensure that $G$ contains
a large grid-minor whose interface is reachable from the terminals $X$.
In a sense, the grid-minor's existence is shown via the
well-linkedness of $X$ and hence there should be such a ``reachable''
grid-minor. The following theorem formalizes the existence
of the desired grid-minor.

\begin{theorem}
  \label{thm:reachable-grid}
  Let $G$ be a graph of genus $g > 0$. Suppose $X$ is a well-linked
  set in $G$. Then $G$ contains a $h \times h$ grid-minor
  with interface $S$ such that $h = \Omega(|X|/g)$ and
  at least $h/8$ edge-disjoint paths in $G$ from $X$ to $S$.
\end{theorem}

Proving the preceding theorem formally 
requires work. In \cite{CKS-planar} an argument tailored to planar
graphs was used to prove a similar theorem, and in fact a
polynomial-time algorithm was given to find the desired grid-minor. In that
same paper a general deletable edge lemma \cite{cks04} was announced (though
never published); in the appendix we give  a streamlined
proof based on the original manuscript. We
postpone the proof of the preceding theorem and outline how to
complete the proof of Theorem~\ref{thm:well-linked-genus}.
We start with our well-linked set $X$, and via Theorem~\ref{thm:reachable-grid},
find a grid-minor of size $h = \Omega(|X|/g)$ such that there
are $h/8$ edge-disjoint paths from $X$ to the interface $S$.
Let $X' \subset X$ be $h/8$ terminals that are the end points of these
edge-disjoint paths. Recall that we are interested in routing
a given matching $M$ on $X$. If $X'$ contains a ``large'' sub-matching
$M' \subset M$ then we can use the grid-minor to route $M'$ with
congestion $3$ as follows. Let $S' \subset S$ be end points of
the paths from $X'$ to $S$. Clearly $M'$ induces a matching on
$S'$ and the grid-minor can route this matching with congestion $2$.
Patching this routing with the paths from $X'$ to $S'$ gives the
desired congestion $3$ routing of $M'$ in $G$. However, it may be the
case that $|M'|$ is very small, even zero,
because $X'$ only contains one end point from every edge in $M$.
However, here, we can use the fact that $X$ and $S$
are  well-linked in $G$ to argue that we can route any desired
subset of $X$ of sufficiently large size to $S$. Thus, we can assume
that indeed $|M'|$ is a constant fraction of $|X'|$.
This was essentially done in \cite{Chekuri04b} and details
are also included in the appendix (see Lemma~\ref{lem:alltoall2}).
This shows the number of pairs from $M$ that can be routed
with congestion $3$ in $G$ is a constant fraction of $h$, the size
of the grid-minor. Since $h = \Omega(|X|/g)$ we route
$\Omega(|X|/g)$ pairs from $M$. This finishes the proof of
Theorem~\ref{thm:well-linked-genus}.


Now we come to the proof of Theorem~\ref{thm:reachable-grid}.  This is
done in Section~\ref{sec:deletable-edge} via the following high-level approach.  Suppose
$X$ is a well-linked set in $G$. We are guaranteed that $G$ has a
grid-minor of size $h = \Omega(|X|/g)$. Let $S$ be its interface. If
there are $h/8$ edge-disjoint paths from $X$ to $S$ then we are
done. Otherwise the claim is that there is an edge $e$ such that $X$
is well-linked in $G-e$. This is established in
Theorem~\ref{thm:deletable}.  In other words, if we start with a graph
which is edge-minimal subject to $X$ being well-linked, then following
the procedure above  yields a grid-minor that is reachable from
$X$. The reason that this does not immediately lead to a polynomial-time
algorithm to find such a grid-minor is the fact that checking whether
a given set $X$ is not well-linked is NP-Complete. Note that we can
check if $X$ if flow-well-linked but the deletable edge lemma we have
works with the notion of cut-well-linkedness which is hard to check.
In \cite{CKS-planar} a poly-time deletable edge lemma was obtained
for the special case of planar graphs and we believe that it can be
extended to the case of genus $g$ graphs but the technical details
are quite involved.

\section{Open Problems and Concluding Remarks}
Resolving Conjecture~\ref{conj:minor-free} is the main open problem
that arises from this work. In particular, does a planar graph with a
single vortex satisfy the CFCC property? This is
is the key technical obstacle.

Theorem~\ref{thm:ksum-main} loses an approximation factor that is
exponential in $k$. Is this necessary? In particular, the known
integrality gap for the flow LP on graphs of treewidth $k$ is only
$\Omega(k)$; this comes from the grid example. Moreover, the
congestion we obtain for $\cG_k$ is $\beta+3$ where $\beta$ is the
congestion guaranteed for the class $\cG$. Can this be improved
further, say to $\beta$ or $\beta+1$?

We believe that Theorem~\ref{thm:genus_main} can be made algorithmic
and also conjecture that the congestion bound can be improved to
$2$. The key bottleneck is to find an algorithmic proof of
Theorem~\ref{thm:reachable-grid}.

\fi

\section{A Deletable Edge Lemma }
\label{sec:deletable-edge}
The following results are from the unpublished note \cite{cks04}. We
have streamlined the original proofs and include it for completeness.

Throughout this section we use well-linked to mean cut-well-linked.
We say that $S \subseteq V$ is {\em routable} to $T \subseteq V$ in
$G$ if there are $|S|$ edge-disjoint paths from $S$ to $T$ in $G$ such
that each node in $S \cup T$ is the end point of at most one of the
paths. We allow paths of length $0$ if for example a node belongs to
$S \cap T$. If $S$ can be routed to $T$ then of course $|S| \le |T|$.

Given a graph $G$, two sets of nodes $A$ and $B$, we define an {\em
  auxiliary graph} $G(A,B)$ obtained by attaching a node $s_A$ (or
just $s$) and a node $t_B$ (or just $t$). The node $s$ has an edge to
each node in $A$ and $t$ has an edge to each node in $B$. The max
$s$-$t$ flow in $G(A,B)$ identifies the maximum routable set from $A$
to $B$.

\begin{prop}
  \label{prop:linked}
  If $H$ is well-linked in $G$ then for any $X,Y \subset H$ and
  $|X| \le |Y|$, $X$ is routable to $Y$ in $G$.
\end{prop}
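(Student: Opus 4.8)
The plan is to pass to the flow/cut auxiliary graph $G(X,Y)$ and reduce the assertion to a single cut inequality, which cut-well-linkedness of $H$ supplies through a two-case analysis according to whether a candidate cut contains more or less than half of $H$. The preliminary step is to clear away $X\cap Y$: putting $Z:=X\cap Y$, route each $z\in Z$ by the trivial zero-length path at $z$; since these use no edges it suffices to route $X\setminus Z$ to $Y\setminus Z$ by edge-disjoint paths whose endpoints avoid $Z$ (the paths are allowed to pass through $Z$, which does not affect the endpoint condition). We still have $X\setminus Z,\,Y\setminus Z\subseteq H$ and $|X\setminus Z|\le|Y\setminus Z|$, so we may assume henceforth that $X\cap Y=\emptyset$.

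Next I would take an integral maximum $s$--$t$ flow in $G(X,Y)$ with all capacities $1$. It decomposes into edge-disjoint $s$--$t$ paths, and deleting the first edge $sx$ and last edge $yt$ of each such path turns it into a path of $G$ from some $x\in X$ to some $y\in Y$; distinct flow paths start at distinct nodes of $X$ (one per $s$-edge) and end at distinct nodes of $Y$ (one per $t$-edge). Hence a maximum flow of value $|X|$ yields exactly a routing of $X$ to $Y$. Since $\delta(s)$ consists of $|X|$ edges, by max-flow/min-cut it remains to show that every $s$--$t$ cut of $G(X,Y)$ has at least $|X|$ edges. Such a cut is determined by the set $A\subseteq V(G)$ it places on the $s$-side, and has size $|\delta_G(A)|+|X\setminus A|+|Y\cap A|$; so it suffices to prove that
\[
|\delta_G(A)|+|Y\cap A|\ \ge\ |X\cap A|\qquad\text{for every }A\subseteq V(G).
\]

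To prove this inequality I would split on $|H\cap A|$. If $|H\cap A|\le |H|/2$, then since $H$ is cut-well-linked, $|\delta_G(A)|\ge |H\cap A|\ge |X\cap A|$, which already suffices (the $Y$-term is not even needed). If instead $|H\cap A|> |H|/2$, then $|H\setminus A|< |H|/2$, so applying cut-well-linkedness to the complement $V\setminus A$ gives
\[
|\delta_G(A)|=|\delta_G(V\setminus A)|\ \ge\ |H\setminus A|\ \ge\ |Y\setminus A|\ =\ |Y|-|Y\cap A|\ \ge\ |X|-|Y\cap A|\ \ge\ |X\cap A|-|Y\cap A|,
\]
where the hypothesis $|X|\le|Y|$ enters in the middle inequality and $Y\subseteq H$ in the second; rearranging gives exactly the displayed bound, finishing the proof.

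There is no real obstacle here, but the point that must be gotten right is that cut-well-linkedness is a \emph{one-sided} condition — it controls $|\delta(S)|$ only when $S$ captures at most half of $H$ — so when $A$ swallows more than half of $H$ one is forced to argue via $V\setminus A$, and it is precisely the slack $|X|\le|Y|$ that makes the resulting complement bound strong enough to conclude. The $X\cap Y$ reduction at the start is a routine device ensuring that the edge-disjoint $s$--$t$ paths extracted from the flow restrict to a legitimate routing in $G$ (no vertex serving as an endpoint of two of the paths).
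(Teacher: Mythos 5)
Your proof is correct and follows essentially the same route as the paper's: pass to the auxiliary graph $G(X,Y)$, reduce routability to the $s$-$t$ min-cut being $|X|$, and verify the cut inequality by splitting on whether $|H\cap A|\le|H|/2$, applying cut-well-linkedness to $A$ or to $V\setminus A$ accordingly. The only difference is that you spell out the second case (which the paper dispatches with ``otherwise work with $V(G')\setminus S$'') and you explicitly clear out $X\cap Y$ first so the flow decomposition directly yields paths with distinct endpoints --- a small point of care the paper leaves implicit.
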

\begin{proof}
  Consider the auxiliary graph $G':=G(X,Y)$.   It
  is sufficient to prove that the $s$-$t$ mincut in $G'$ is $|X|$. Let
  $\delta(S')$ be an $s$-$t$ minimum cut in $G'$ with $s \in S'$.  Let $S = S'
  - \{s\}$. We assume wlog that $|S \cap H| \le |H|/2$, otherwise we
  can work with $V(G') \setminus S$. Since $H$ is well-linked it
  follows that $|\delta_G(S)| \ge |S \cap H|$. We have that
  $|\delta_{G'}(S')| \ge |X \setminus S| + |\delta_G(S)| \ge |X
  \setminus S| + |S \cap H| \ge |X \setminus S| + |S \cap X| \ge |X|$.
\end{proof}

\begin{lemma}
  \label{lem:alltoall1}
  Let $H_1$ and $H_2$ be two disjoint well-linked sets in $G$. Suppose $A
  \subset H_1$ is routable to $B \subset H_2$ and $|A| \le |H_1|/2$. Then
  given any $A' \subset H_1$ with $A' \le |A|/2$, $A'$ is routable to $B$.
\end{lemma}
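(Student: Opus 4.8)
The plan is to work with the auxiliary graph $G' = G(A',B)$ introduced above and to show that the minimum $s$--$t$ cut in $G'$ has value exactly $|A'|$; by Menger's theorem this is equivalent to $A'$ being routable to $B$, since $s$ has one edge to each node of $A'$ and $t$ has one edge from each node of $B$, so $|A'|$ edge-disjoint $s$--$t$ paths correspond to $|A'|$ edge-disjoint $A'$--$B$ paths each using distinct endpoints. One is tempted to argue more directly: concatenate a routing of $A'$ to $A$ (which exists by Proposition~\ref{prop:linked}, as $A',A \subseteq H_1$ and $|A'|\le|A|$) with the given routing of $A$ to $B$. But the two path systems may overlap, so this only gives a routing with congestion $2$, i.e.\ $|A'|/2$ edge-disjoint paths after scaling down. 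The factor-$2$ slack in the hypothesis $|A'|\le|A|/2$ is exactly what is needed to make up for this loss, and the clean way to exploit it is to reason about cuts instead of paths.

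Concretely, I would suppose for contradiction that $G'$ has an $s$--$t$ cut $\delta_{G'}(S')$ with $s\in S'$, $t\notin S'$, and $|\delta_{G'}(S')| < |A'|$, and set $S = S'\setminus\{s\}$. Counting edges, $|\delta_{G'}(S')| = |A'\setminus S| + |B\cap S| + |\delta_G(S)|$, which gives
\[ |\delta_G(S)| \;<\; |A'| - |A'\setminus S| - |B\cap S| \;=\; |A'\cap S| - |B\cap S| \;\le\; |A'\cap S| \;\le\; |A'| . \]
Separately, since $A$ is routable to $B$, the same set $S$ yields a cut of value at least $|A|$ in $G(A,B)$, i.e.\ $|A\setminus S| + |B\cap S| + |\delta_G(S)| \ge |A|$. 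These two facts, plus well-linkedness of $H_1$, should close the argument.

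I would then split on the position of $S$ relative to $H_1$. If $|S\cap H_1|\le|H_1|/2$, well-linkedness of $H_1$ gives $|\delta_G(S)| \ge |S\cap H_1| \ge |A'\cap S|$ (as $A'\subseteq H_1$), contradicting the display. Otherwise $|H_1\setminus S| < |H_1|/2$, and applying well-linkedness to $V(G)\setminus S$ gives $|\delta_G(S)| = |\delta_G(V\setminus S)| \ge |H_1\setminus S| \ge |A\setminus S|$ (as $A\subseteq H_1$); feeding $|A\setminus S|\le|\delta_G(S)|$ into the $A$-to-$B$ cut bound yields
\[ |A| \;\le\; |A\setminus S| + |B\cap S| + |\delta_G(S)| \;\le\; 2|\delta_G(S)| + |B\cap S| \;<\; 2\bigl(|A'\cap S| - |B\cap S|\bigr) + |B\cap S| \;\le\; 2|A'| , \]
contradicting $|A'|\le|A|/2$. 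Hence the minimum $s$--$t$ cut in $G'$ equals $|A'|$ and $A'$ is routable to $B$.

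The main obstacle is conceptual rather than computational: recognizing that path concatenation is lossy and that routability of $A'$ to $B$ is better certified via a cut, after which the two uses of well-linkedness of $H_1$ (one directly on $S$, one on its complement) combine with the max-flow/min-cut bound coming from $A$ being routable to $B$. Note that the hypothesis $|A'|\le|A|/2$ is used precisely in the second case, and that well-linkedness of $H_2$ is not needed beyond the given fact that $A$ is routable to $B$; I would keep the statement as is, since these hypotheses are natural for the companion lemma that follows.
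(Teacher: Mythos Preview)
Your proof is correct and follows essentially the same auxiliary-graph/min-cut strategy as the paper's argument. The only cosmetic difference is the case split: the paper divides on whether $|S\cap A|\ge |A|/2$ (using routability of $A$ to $B$ in the first case and Proposition~\ref{prop:linked} in the second), whereas you divide on whether $|S\cap H_1|\le |H_1|/2$ and appeal to well-linkedness of $H_1$ directly in both cases; the underlying ingredients and the use of the factor-$2$ slack are identical.
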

\begin{proof}
  Consider the auxiliary graph $G':=G(A',B)$.  Let
  $\delta_{G'}(S')$ be an $s$-$t$ minimum cut in $G'$ with $s \in S'$. Let $S = S'
  - \{s\}$. We argue that $|\delta_{G'}(S')| \ge |A'|$ which
  proves the lemma.  Let $a=|A|$. We have the equality that

  \begin{equation}
    \label{eqn:cutdefn}
    |\delta_{G'}(S')| = |\delta_G(S)| + |A' \setminus S| + |B \cap S|.
  \end{equation}

  We consider two cases.  In the first case $|S \cap A| \ge
  a/2$. Since $A$ is routable to $B$ it follows that $|\delta_G(S)| \ge
  a/2 - |S \cap B|$. Hence from Equation~\ref{eqn:cutdefn} we have
  that $|\delta_{G'}(S')| \ge a/2 \ge |A'|$.

  In the second case, $|S \cap A| < a/2$ which implies that $|(V-S)
  \cap A| \ge a/2$. Let $Y = (V-S) \cap A$. Therefore $|Y| \ge a/2$.
  By Proposition~\ref{prop:linked} we have that $S \cap A'$ is
  routable to $Y$.  It follows that $|\delta_G(S)| \ge |S \cap A'|$.
  From Equation~\ref{eqn:cutdefn}
  $|\delta_{G'}(S')| \ge |S \cap A'| + |A' \setminus S| = |A'|$.
\end{proof}

\begin{lemma}
  \label{lem:alltoall2}
  Let $H_1$ and $H_2$ be two disjoint well-linked sets in $G$. Suppose
  $A \subset H_1$ is routable to $B \subset H_2$. Then given any $A'
  \subset H_1$ and $B' \subset H_2$ with $|A'| = |B'| \le |A|/3$, $A'$
  is routable to $B'$.
\end{lemma}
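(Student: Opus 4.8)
The plan is to prove this directly by a Menger / max-flow--min-cut argument that refines the proof of Lemma~\ref{lem:alltoall1}, rather than by iterating that lemma. (Iterating it does not obviously work: shrinking $A$ to $A'$ via Lemma~\ref{lem:alltoall1} routes $A'$ onto \emph{some} size-$|A'|$ subset of $B$ that we cannot prescribe, and since $|B'| = |A'|$ there is then no slack left for a second shrink onto $B'$.) First I would reduce to the case $|B| = |A| =: a$ by replacing $B$ with the set $B_0 \subseteq B$ of endpoints of the given $A$-to-$B$ routing; this only weakens the hypothesis and does not touch the conclusion, which mentions only $B'$. Then I form the auxiliary graph $G' = G(A',B')$ with super-source $s$ adjacent to $A'$ and super-sink $t$ adjacent to $B'$. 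Since $|A'| = |B'|$, it suffices to show every $s$-$t$ cut $\delta_{G'}(S')$ (with $s \in S'$, and $S := S' \setminus \{s\}$) has size at least $|A'|$; using $|\delta_{G'}(S')| = |\delta_G(S)| + |A' \setminus S| + |B' \cap S|$, this reduces to proving $|\delta_G(S)| \ge |A' \cap S| - |B' \cap S|$, and we may assume $|A' \cap S| > |B' \cap S|$.

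Next I would split into three cases according to how $S$ meets the well-linked sets $H_1, H_2$. (i) If $|H_1 \cap S| \le |H_1|/2$, well-linkedness of $H_1$ gives $|\delta_G(S)| \ge |H_1 \cap S| \ge |A' \cap S|$, which already suffices. (ii) If $|H_2 \setminus S| \le |H_2|/2$, well-linkedness of $H_2$ applied to the complementary cut gives $|\delta_G(S)| \ge |H_2 \setminus S| \ge |B' \setminus S| = |A'| - |B' \cap S| \ge |A' \cap S| - |B' \cap S|$, again enough. (iii) In the remaining case $|H_1 \cap S| > |H_1|/2$ and $|H_2 \setminus S| > |H_2|/2$, I would combine three lower bounds on $|\delta_G(S)|$: $|\delta_G(S)| \ge |H_1 \setminus S|$ and $|\delta_G(S)| \ge |H_2 \cap S|$ (both from well-linkedness, now valid since in this case both quantities are below the relevant halves), together with $|\delta_G(S)| \ge |A \cap S| - |B \cap S| \ge (a - |H_1 \setminus S|) - |H_2 \cap S|$, which holds because $A$ is routed to $B$ by $a$ edge-disjoint paths and each $A$-vertex lying in $S$ yields a crossing path unless its partner lies in $B \cap S \subseteq H_2 \cap S$. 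Adding the three inequalities gives $3|\delta_G(S)| \ge a$, hence $|\delta_G(S)| \ge a/3 \ge |A'| \ge |A' \cap S| - |B' \cap S|$. This is precisely where the looser constant $|A|/3$ (rather than $|A|/2$ as in Lemma~\ref{lem:alltoall1}) is consumed.

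The main obstacle is case (iii): when $S$ swallows most of $H_1$ but little of $H_2$, no single "one-sided" well-linkedness bound is strong enough, and the point is to observe that the prescribed $A$-to-$B$ routing is then forced to cross $\delta_G(S)$ many times, so the three bounds are mutually complementary and sum to at least $a$. Once every $s$-$t$ cut of $G(A',B')$ is shown to have size at least $|A'|$, I would conclude by taking an integral max flow of value $|A'|$ and decomposing it into $|A'|$ edge-disjoint paths from $A'$ to $B'$ meeting each terminal exactly once (no length-zero paths occur since $H_1 \cap H_2 = \emptyset$), which is exactly the required routing.
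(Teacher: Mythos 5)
Your proof is correct, and while it begins with the same reduction as the paper --- pass to the auxiliary graph $G(A',B')$, normalize $|B|=|A|=a$, and reduce the claim to the lower bound $|\delta_G(S)| \ge |A'\cap S| - |B'\cap S|$ for every side $S$ --- it establishes this cut bound by a different case decomposition. The paper classifies the $a$ pairs of the fixed $A$-to-$B$ routing according to whether they are separated by $S$, lie wholly in $S$, or lie wholly in $T=V\setminus S$: at least $a/3$ fall in one class, and each class yields the bound (the separated class directly, the other two by routing $T\cap B'$ into $S\cap H_2$ or $S\cap A'$ into $T\cap H_1$ inside the respective well-linked set). Your proof instead cases on which halves of $H_1$ and $H_2$ lie on which side of $S$. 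In the two lopsided cases a single application of the cut-well-linkedness definition of $H_1$ or $H_2$ finishes in one line; in the remaining balanced case you assemble three complementary lower bounds on $|\delta_G(S)|$ --- from $H_1$, from $H_2$, and from the fixed $A$-to-$B$ routing --- and sum them to obtain $3|\delta_G(S)| \ge a$. This averaging step is a genuinely different trick and makes the origin of the constant $3$ transparent. Both arguments are elementary and of comparable length; yours relies only on the bare cut-well-linkedness definition rather than on internal routability within $H_1,H_2$, at the cost of slightly more bookkeeping in the balanced case.
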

\begin{proof}
  Consider the auxiliary graph $G':=G(A',B')$.
  Let $\delta_{G'}(S')$ be an $s$-$t$ minimum cut in $G'$ with $s \in
  S'$. Let $S = S' - \{s\}$ and $T = V(G)-S$.  We argue that
  $|\delta_{G'}(S')| \ge |A'|$ which  proves the lemma.  Let $a=|A|$.
  We have the equality that

  \begin{equation}
    \label{eqn:cutdefn2}
    |\delta_{G'}(S')| = |\delta_G(S)| + |A' \setminus S| + |B' \cap S|.
  \end{equation}

  We can assume that $|B| = |A|$. Consider a fixed routing from $A$ to
  $B$.  We call $u \in A, v \in B$ a {\em pair} if $u$ and $v$ are joined by
  a path in the fixed routing. Suppose the cut $\delta_G(S)$ separates at
  least $a/3$ pairs.  Then clearly $|\delta_G(S)| \ge a/3$ and we are
  done. Therefore either $S$ contains at least $a/3$ pairs or $T$
  contains $a/3$ pairs.  If $S$ contains at least $a/3$ pairs, then $S$
  contains $a/3$ nodes from $H_2$. Then by the
  well-linkedness of $H_2$ we see that $|\delta_G(S)| \ge |T \cap B'|$
  and hence $|\delta_{G'}(S')| \ge |T \cap B'| + |B' \cap S| = |B'|$.
  If $T$ contains at least $a/3$ pairs, then we can use the well-linkedness
  of $H_1$ to argue that $|\delta_{G'}(S')| \ge |A'|$.
\end{proof}

We note that Lemmas~\ref{lem:alltoall2} and \ref{lem:alltoall2} are
tight.

Let $H_1$ and $H_2$ be two disjoint well-linked sets in $G$.  Let
$\MC$ be the size of a min-cut $s$-$t$ cut in the auxiliary graph
induced by $H_1$ and $H_2$.  Thus there is a subset $A$ of $H_1$, and
subset $B$ of the $H_2$ such that $|A|=|B|=\MC$ and $A$ is routable to
$B$.

The main result in this section is the following:
\begin{theorem}
  \label{thm:deletable}
  If $\gamma < |H_2|/8$, then there is an edge $e$ in $G$ such
  that $H_1$ is well-linked in $G - e$.
\end{theorem}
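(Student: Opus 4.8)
The plan is to prove the contrapositive: assuming $H_1$ is \emph{not} well-linked in $G-e$ for \emph{every} edge $e$ of $G$ (which we may take to be connected), I would show $\gamma \ge |H_2|/8$. The first step is to extract tight cuts. If $H_1$ is well-linked in $G$ but not in $G-e$, some set $T$ has $|\delta_{G-e}(T)| < |T\cap H_1| \le |\delta_G(T)|$, so $e\in\delta_G(T)$ and $|\delta_G(T)| = |T\cap H_1|$ with $|T\cap H_1|\le |H_1|/2$; call such a $T$ a \emph{tight set} for $H_1$. Thus under our assumption every edge of $G$ lies on the boundary of some tight set. Note also that $|\delta_G(T)| = |T\cap H_1|$ together with connectedness of $G$ forces every tight set (and, below, every ``child'' of one) to contain at least one vertex of $H_1$.

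Next I would uncross. Using submodularity of the cut function $d(\cdot)=|\delta(\cdot)|$ in both forms, $d(A)+d(B)\ge d(A\cap B)+d(A\cup B)$ and $d(A)+d(B)\ge d(A\setminus B)+d(B\setminus A)$, together with well-linkedness of $H_1$ (which makes intersections of tight sets, and under the right circumstances unions, again tight), one passes from an arbitrary collection of tight sets whose boundaries cover $E(G)$ to a \emph{laminar} such collection $\mathcal L$. Let $S_1,\dots,S_m$ be the inclusion-maximal members of $\mathcal L$; by laminarity they are pairwise disjoint, and $J := V\setminus\bigcup_i S_i$ must be an independent set, since an edge with both endpoints in $J$ would lie on no $\delta_G(S)$, $S\in\mathcal L$. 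The same reasoning inside each $S_i$ shows the structure recurses: the children of a node of $\mathcal L$ partition it up to an independent set, and the minimal members of $\mathcal L$ are (by the remark above, and since they contain no coverable internal edge) sets of degree-one vertices all lying in $H_1$, hence disjoint from $H_2$.

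Now I would account for $H_2$ through this laminar tree. For a tight set $S$, well-linkedness of $H_2$ gives $|\delta_G(S)|\ge\min(|S\cap H_2|,|H_2\setminus S|)$; since $|\delta_G(S)| = |S\cap H_1|$, either $|S\cap H_2|\le |S\cap H_1|$ (``light''), or $S$ holds a strict majority of $H_2$ with $|H_2\setminus S|\le |S\cap H_1|\le |H_1|/2$ (``heavy''), and among the children of any node at most one is heavy. Following the unique heavy child from the root $V$ down, all of $H_2$ is eventually shed, into the independent sets $J,J',\dots$ and the light children along the way (the minimal members carry none). The shed $H_2$-vertices can be routed to $H_1$ because a tight set $S$ \emph{saturates} its boundary: $\delta_G(S)$ is a minimum cut between $S\cap H_1$ and $V\setminus S$ (again by well-linkedness of $H_1$ on subsets of $S$), so $S\cap H_1$ sends $|S\cap H_1|=|\delta_G(S)|$ edge-disjoint paths out of $S$, one per boundary edge; composing such routings level by level across the heavy-child boundaries, and using Proposition~\ref{prop:linked} to reorganize which terminals of $H_1$ serve as the $|S\cap H_1|$ sources at each step, yields edge-disjoint paths from $H_1$ to the shed portion of $H_2$. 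A careful count of how much $H_2$ is shed per level (the $|\delta|$-values of the light children of a node, and the size of the associated independent set, are each at most the parent's $|S\cap H_1|$, and $\sum |\delta_G(\cdot)|$ over any antichain of $\mathcal L$ is at most $|H_1|$) makes the routed portion of size at least $|H_2|/8$, so $\gamma\ge|H_2|/8$. A few degenerate situations are cleaner to dispatch directly: if a putative cut $S_0$ separating $H_1$ from $H_2$ leaves $\ge|H_2|/2$ of $H_2$ on the $H_1$ side, its value is already $\ge|H_2|$ by $H_2$-well-linkedness; if it leaves $\le|H_1|/2$ of $H_1$ on the $H_2$ side and $|H_1|\ge|H_2|/8$, its value is $\ge|H_1|$; and $|H_1|<|H_2|/8$ cannot occur under the assumption, since then $\sum_i|\delta_G(S_i)|\le|H_1|<|H_2|/8$ would force almost all of the large well-linked set $H_2$ into the low-degree set $J$, contradicting connectedness.

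The main obstacle I expect is this last quantitative step: turning the laminar/tight structure into an actual $|H_2|/8$ lower bound on $\gamma$, i.e. organizing the level-by-level routing so that edge-disjointness survives and the constant comes out. The uncrossing producing $\mathcal L$ is routine in spirit but must be arranged so that ``boundaries cover all edges'' is maintained, which interacts delicately with the $|T\cap H_1|\le |H_1|/2$ side constraint defining tight sets.
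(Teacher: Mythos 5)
Your plan takes a genuinely different route from the paper, and it has real gaps that prevent it from constituting a proof. The paper's argument is much more local: after reducing (WLOG) to the case where every vertex of $H_1$ has degree one, it fixes a \emph{single} minimum cut $\delta(M)$ of value $\gamma$ between $H_1$ and $H_2$, with $M$ chosen inclusion-minimal subject to $M\cap H_1=\emptyset$, $|M\cap H_2|\ge |H_2|/2$, and $|\delta(M)|=\gamma$. Since $\gamma<|H_2|/8$, $M$ cannot be an independent set, so it contains an internal edge $e$; the proof then shows $e$ is deletable by uncrossing $M$ against a hypothetical tight light set $S$ with $e\in\delta(S)$, using both submodularity and posimodularity of the cut function together with the minimality of $M$ and a short $i+j\ge 7k/8$ accounting. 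No laminar family is built, and no explicit routing from $H_1$ to $H_2$ is constructed.

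The critical quantitative step in your plan --- turning the laminar/tight structure into an actual flow of value $\ge |H_2|/8$ from $H_1$ to $H_2$ --- is asserted but not carried out, and you flag it yourself as ``the main obstacle I expect''; that step \emph{is} the theorem, so as written the argument is incomplete. Several of the supporting claims are also shaky. The uncrossing to a laminar family is not routine under the side constraint $|T\cap H_1|\le |H_1|/2$: if $A\cup B$ is not light, submodularity plus $H_1$-well-linkedness yields only $|(A\cup B)\cap H_1|\ge |H_1|/2$, not tightness; and posimodular uncrossing $A\mapsto A\setminus B$, $B\mapsto B\setminus A$ forces tightness only when $A\cap B\cap H_1=\emptyset$. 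In the final ``degenerate situations'' paragraph, the claim that a cut $S_0$ with $\ge |H_2|/2$ of $H_2$ on the $H_1$ side already has value $\ge |H_2|$ is wrong --- $H_2$-well-linkedness gives only $|\delta(S_0)|\ge |H_2\setminus S_0|$, which can be tiny --- and the case $|H_1|<|H_2|/8$ is dismissed via an unsubstantiated ``contradicting connectedness.'' So while the laminar-decomposition idea is a sensible alternative in spirit (the paper's set $M$ plays roughly the role of your heavy chain), the paper's single-cut submodularity argument is substantially shorter and actually closes, whereas your plan leaves the central constant-tracking routing argument unproved.
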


For simplicity we assume that each node in $H_1$ has degree one in
$G$. This is without loss of generality. For each node $v \in H_1$
we can add a dummy node $v'$ and attach $v'$ to $v$ by a single
edge; the set $H'_1 = \{v' | v \in H_1\}$ is easily seen to be
well-linked and the routability of $H'_1$ to $H_2$ is the same
as that from $H_1$ to $H_2$. Further, none of the edges $(v',v)$
is deletable and hence any edge that we show is deletable in the
modified instance is an edge in the original graph.

Let $k = |H_2|$. To prove the above theorem we consider an $s$-$t$
minimum cut $\delta(S')$ in the auxiliary graph $G':=G(H_1,H_2)$.
Let $S = S' - s$ and $T = V(G)-S$.  We can evidently choose such a cut so that
$|T \cap
H_1| = 0$ since each node in $H_1$ is a leaf. We also have that
that $|T \cap H_2| \ge |H_2| - \gamma \ge |H_2|/2$.  Given the existence of $T$,
we may now choose a
$\MC$-cut $\delta(\C)$ in $G$ with $\C$ minimal subject to satisfying
the property that $|\C \cap H_2| \ge |H_2|/2 = k/2$ and $|\C \cap H_1| =
0$. Note that $\C$ is not a stable set for otherwise
$|\delta(\C)| \ge k/2$ but by definition $|\delta(\C)| = \gamma$.

  \begin{figure}[htb]
    \centering
    \includegraphics[width=3.5in]{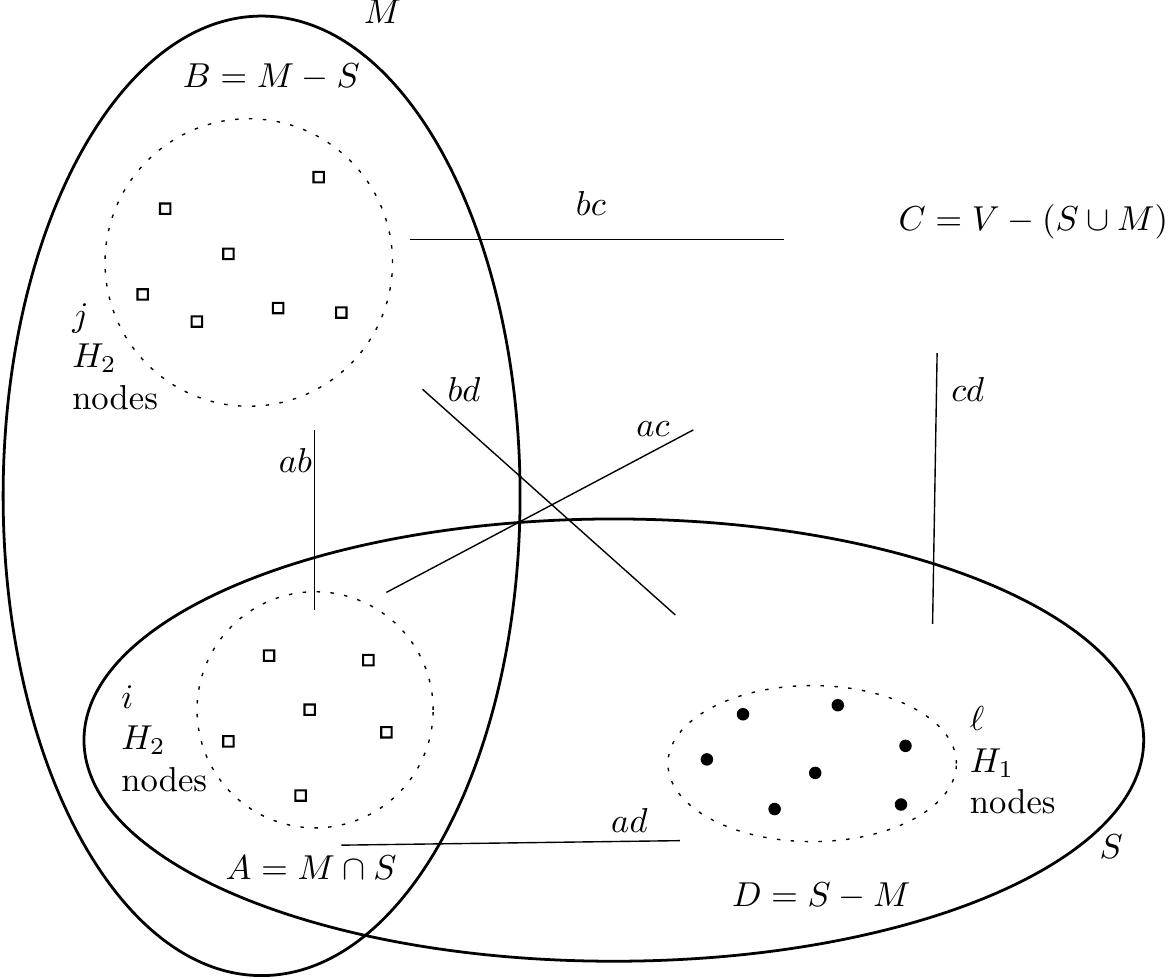}
    \caption{Illustration for proof of removable edge from $M$.}
    \label{fig:remove}
  \end{figure}

\begin{lemma}
  \label{lem:deletable}
  Any edge with both ends in $\C$ is deletable.
\end{lemma}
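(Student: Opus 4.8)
The plan is to prove the contrapositive: if some edge $e = uv$ with $u,v \in \C$ fails to be deletable — that is, $H_1$ is not well-linked in $G-e$ — then I will uncross $\C$ against a witnessing set and contradict the choice of $\C$. So suppose $H_1$ is not well-linked in $G-e$. Then there is a set $W$ with $1 \le |W\cap H_1| \le |H_1|/2$ and $|\delta_{G-e}(W)| < |W\cap H_1|$. Since $H_1$ is well-linked in $G$ we have $|\delta_G(W)| \ge |W\cap H_1|$, and deleting a single edge lowers a boundary by at most one; these two facts force $|\delta_G(W)| = |W\cap H_1| =: t$ and $e\in\delta_G(W)$. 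Relabel so that $u\in W$ and $v\notin W$; thus $e$ is an edge of $G[\C]$ that crosses $W$.

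Next I would uncross $\C$ with $W$. Both $\C\cap W$ and $\C\setminus W$ are disjoint from $H_1$ (since $\C$ is) and are proper subsets of $\C$: the first omits $v$, the second omits $u$. To bound their boundaries I would combine submodularity of $|\delta(\cdot)|$ with well-linkedness of $H_1$ in $G$, applied on the correct small side. Since $(\C\cup W)\cap H_1 = W\cap H_1$ has size $t\le |H_1|/2$, well-linkedness gives $|\delta(\C\cup W)|\ge t$, hence $|\delta(\C\cap W)| \le |\delta(\C)| + |\delta(W)| - |\delta(\C\cup W)| \le \MC + t - t = \MC$. Symmetrically, $\overline{\C\cup\bar W} = W\setminus\C$ again carries $t$ terminals of $H_1$, so $|\delta(\C\cup\bar W)| = |\delta(W\setminus\C)| \ge t$ and therefore $|\delta(\C\setminus W)| \le |\delta(\C)| + |\delta(\bar W)| - |\delta(\C\cup\bar W)| \le \MC$.

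Finally I would run a dichotomy on how $H_2$ is split by $W$. If one of $\C\cap W$, $\C\setminus W$ contains at least $k/2$ terminals of $H_2$, then that set is an $H_1$-free proper subset of $\C$ containing at least $k/2$ of $H_2$ with boundary at most $\MC$, contradicting the minimality of $\C$. Otherwise both pieces contain fewer than $k/2$ terminals of $H_2$, so well-linkedness of $H_2$ applies to each: $|(\C\cap W)\cap H_2| \le |\delta(\C\cap W)| \le \MC$ and $|(\C\setminus W)\cap H_2| \le |\delta(\C\setminus W)| \le \MC$. Adding these, $|\C\cap H_2| \le 2\MC < k/4$, contradicting $|\C\cap H_2| \ge k/2$. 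Either branch yields a contradiction, so no such pair $(e,W)$ exists and every edge inside $\C$ is deletable.

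I expect the dichotomy in the last step to be the only real subtlety: a naive uncrossing already produces two candidates that are $H_1$-free, have boundary at most $\MC$, and are strictly smaller than $\C$, but a priori neither need inherit half of $H_2$; the point is that the bad case forces both pieces to be $H_2$-small, and then the well-linkedness of $H_2$ together with the hypothesis $\MC < k/8$ already forbids $\C$ itself from holding $k/2$ terminals of $H_2$. A secondary point requiring care is that well-linkedness of $H_1$ must be invoked on $W\setminus\C$ (equivalently $\overline{\C\cup\bar W}$), not on $\C\cup\bar W$, which is the large side.
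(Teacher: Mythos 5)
Your proof is correct and follows essentially the same route as the paper's: extract a tight set $W$ with $e \in \delta_G(W)$, uncross against $\C$ using submodularity (and its posimodular/symmetric form) together with well-linkedness of $H_1$ on the small sides to bound $|\delta(\C \cap W)|, |\delta(\C \setminus W)| \le \MC$, and then case-split on how $H_2$ is divided between the two pieces. Your closing step, concluding $|\C \cap H_2| \le 2\MC < k/2$ when neither piece captures half of $H_2$, is a minor variant of the paper's (which first derives $|\C \cap H_2| \ge k - \MC$ via well-linkedness of $H_2$ on $V \setminus \C$ and then forces one piece's boundary above $\MC$), is slightly cleaner, and in fact only uses $\MC < k/4$.
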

\begin{proof}
  Let $e$ be an edge inside $M$. Suppose $e$ is not deletable. Then
  there is a light set $S$ with respect to $H_1$ that is tight and
  $e$ crosses $S$. That is, $|S \cap H_1| = \ell \le |H_1|/2$ and
  $|\delta(S)| = \ell$ and $e \in \delta(S)$. Let $i$ and $j$ be the
  number of $H_2$ nodes in $M \cap S$ and $M -  S$
  respectively. Recall that $M$ does not contain any $H_1$ nodes.
  See Fig~\ref{fig:remove}.

  We first observe that $(i+j) \ge k - \gamma \ge 7k/8$, otherwise more than
  $\gamma$ nodes are in $V -  M$ and $|\delta(V -  M)|
  =|\delta(M)| = \gamma$; this would violate the well-linkedness of $H_2$.
  Second, by submodularity and symmetry of the function $|\delta_G|:2^V \rightarrow \mathbb{Z}_+$, we have,
  $$|\delta(M)| + |\delta(S)| \ge |\delta(M \cap S)| + |\delta(M \cup S)|,$$ and
  $$ |\delta(M)| + |\delta(S)| \ge |\delta(M -  S)| + |\delta(S  -  M)|.$$

  We have $|\delta(M)| = \gamma$ and $|\delta(S)| = \ell$. Both $M
  \cup S$ and $S-M$ have exactly $\ell$ nodes of $H_1$; since $H_1$ is
  well-linked, $|\delta(M \cup S)| \ge \ell$ and $|\delta(S-M)| \ge
  \ell$.  Thus, from the above submodularity inequalities, we get a
  contradiction if we can prove that $|\delta(M \cap S)| > \gamma$
  {\em or} $|\delta(S - M)| > \gamma$. If $i \ge k/2$ we have
  $|\delta(M \cap S)| > \gamma$ for otherwise $M \cap S$ contradicts
  the minimality of $M$. Similarly if $j \ge k/2$ we have
  $|\delta(M-S)| > \gamma$ for otherwise $M - S$ contradicts the
  minimality of $M$.

  We now consider the case that $i < k/2$ {\em and} $j < k/2$.  Since
  $(i+j) \ge 7k/8$ we have $7k/16 \le \max\{i,j\} < k/2$.  If $i =
  \max\{i,j\}$ then by well-linkedness of $H_2$, $|\delta(M \cap S)|
  \ge i \ge 7k/16 > \gamma$.  If $j = \max\{i,j\}$ then again by
  well-linkedness of $H_2$, $|\delta(M-S)| \ge j \ge 7k/16 >
  \gamma$. In both cases we get the desired contradiction.
\end{proof}

\end{document}